\pdfoutput=1
\documentclass[journal]{IEEEtran}

\usepackage{cite}
\usepackage[cmex10]{amsmath}
\usepackage{amssymb,amsfonts,amsthm,mathtools,bm}
\usepackage{graphicx}
\usepackage{booktabs}
\usepackage{algorithm}
\usepackage{algpseudocode}
\usepackage{url}
\usepackage{xcolor}
\usepackage{array}
\usepackage{tabularx}
\usepackage{balance}
\usepackage[hidelinks]{hyperref}
\usepackage[capitalize,nameinlink,noabbrev]{cleveref}

\allowdisplaybreaks

\DeclareMathOperator{\Tr}{tr}
\DeclareMathOperator{\diag}{diag}
\DeclareMathOperator{\rank}{rank}
\DeclareMathOperator{\Span}{span}

\DeclareMathOperator{\Cov}{Cov}
\DeclareMathOperator{\RePart}{Re}
\DeclareMathOperator{\Range}{Range}

\newcommand{\R}{\mathbb{R}}
\newcommand{\C}{\mathbb{C}}
\newcommand{\E}{\mathbb{E}}
\newcommand{\1}{\mathbf{1}}
\newcommand{\I}{\mathbf{I}}
\newcommand{\HH}{\mathsf{H}}
\newcommand{\TT}{\mathsf{T}}
\newcommand{\CN}{\mathcal{CN}}
\newcommand{\cA}{\mathcal{A}}

\newcommand{\cB}{\mathcal{B}}
\providecommand{\tr}{\operatorname{tr}}
\newcommand{\cC}{\mathcal{C}}
\newcommand{\cH}{\mathcal{H}}
\newcommand{\cV}{\mathcal{V}}
\newcommand{\cX}{\mathcal{X}}
\newcommand{\cY}{\mathcal{Y}}
\newcommand{\cU}{\mathcal{U}}
\newcommand{\cL}{\mathcal{L}}

\newcommand{\cS}{\mathcal{S}}
\newcommand{\cT}{\mathcal{T}}
\newcommand{\cQ}{\mathcal{Q}}
\newcommand{\cK}{\mathcal{K}}
\newcommand{\cP}{\mathcal{P}}
\newcommand{\cRop}{\mathcal{R}}
\newcommand{\Tc}{\mathfrak{S}_1}
\newcommand{\Hc}{\mathfrak{S}_2}
\newcommand{\detF}{\det_{\mathrm F}}
\newcommand{\norm}[1]{\left\lVert #1 \right\rVert}
\newcommand{\abs}[1]{\left\lvert #1 \right\rvert}
\newcommand{\inner}[2]{\left\langle #1,#2 \right\rangle}
\newcommand{\braces}[1]{\left\{#1\right\}}
\newcommand{\bracks}[1]{\left[#1\right]}
\newcommand{\paren}[1]{\left(#1\right)}

\theoremstyle{plain}
\newtheorem{theorem}{Theorem}
\newtheorem{proposition}{Proposition}
\newtheorem{corollary}{Corollary}
\newtheorem{lemma}{Lemma}

\theoremstyle{remark}
\newtheorem{remark}{Remark}

\theoremstyle{definition}
\newtheorem{assumption}{Assumption}
\newtheorem{definition}{Definition}

\crefname{theorem}{Theorem}{Theorems}
\crefname{proposition}{Proposition}{Propositions}
\crefname{corollary}{Corollary}{Corollaries}
\crefname{lemma}{Lemma}{Lemmas}
\crefname{remark}{Remark}{Remarks}
\crefname{assumption}{Assumption}{Assumptions}
\crefname{definition}{Definition}{Definitions}
\crefname{algorithm}{Algorithm}{Algorithms}
\crefname{figure}{Fig.}{Figs.}
\crefname{table}{Table}{Tables}
\crefname{equation}{Eq.}{Eqs.}

\providecommand{\SimSpectrumMaxRelErr}{\ensuremath{3.78\times 10^{-15}}}
\providecommand{\SimWirtingerConstant}{\ensuremath{0.044696}}
\providecommand{\SimWirtingerRelErr}{\ensuremath{6.21\times 10^{-16}}}
\providecommand{\SimBetaOne}{\ensuremath{4.7300407449}}
\providecommand{\SimWirtingerSlackFactor}{\ensuremath{2.26689}}
\providecommand{\SimSlopeRho}{\ensuremath{-3.9575}}
\providecommand{\SimSlopeRhoExact}{\ensuremath{-3.9575}}
\providecommand{\SimNullSineRatio}{\ensuremath{0.9696}}
\providecommand{\SimNullSurrogateSlope}{\ensuremath{-4.0433}}
\providecommand{\SimNullSurrogateRatio}{\ensuremath{102.0056}}
\providecommand{\SimQuadratureResidual}{\ensuremath{3.55\times 10^{-12}}}
\providecommand{\SimResolvableModes}{\ensuremath{39.000}}
\providecommand{\SimSlopeLinearization}{\ensuremath{1.0000}}
\providecommand{\SimRemainderConstant}{\ensuremath{0.151}}
\providecommand{\SimCapBudget}{\ensuremath{7.29\times 10^{8}}}
\providecommand{\SimCapFinal}{\ensuremath{35.3454}}
\providecommand{\SimCapLastRelChange}{\ensuremath{0.0000\%}}
\providecommand{\SimMaxKktResidual}{\ensuremath{3.90\times 10^{-15}}}
\providecommand{\SimMaxDualGap}{\ensuremath{7.11\times 10^{-15}}}
\providecommand{\SimNuAtTightest}{\ensuremath{1.727\times 10^{-10}}}
\providecommand{\SimMuAtTightest}{\ensuremath{5.166\times 10^{-9}}}
\providecommand{\SimDualBudgetRateLoss}{\ensuremath{9.69\%}}
\providecommand{\SimActiveModesSlack}{\ensuremath{8.000}}
\providecommand{\SimActiveModesTight}{\ensuremath{5.000}}
\providecommand{\SimNullEqualCapacityPower}{\ensuremath{4.98\times 10^{8}}}
\providecommand{\SimDualBudgetAllocationDivergence}{\ensuremath{0.660}}
\providecommand{\SimNullExcursionRatio}{\ensuremath{4.342}}
\providecommand{\SimDualBudgetActiveShift}{\ensuremath{3.000}}
\providecommand{\SimSlopeNoise}{\ensuremath{4.0670}}
\providecommand{\SimNoiseBoundRatio}{\ensuremath{0.999981123}}
\providecommand{\SimSlopeRootPerMode}{\ensuremath{0.2520}}
\providecommand{\SimSlopeRootTotalPower}{\ensuremath{0.2045}}
\providecommand{\SimShannonNumber}{\ensuremath{32.0000}}
\providecommand{\SimPropagationRank}{\ensuremath{40.000}}
\providecommand{\SimCrossoverPlateauExponent}{\ensuremath{0.2175}}
\providecommand{\SimCrossoverSaturatedExponent}{\ensuremath{0.0576}}
\providecommand{\SimNullFlatPropExponent}{\ensuremath{0.2019}}
\providecommand{\SimSeed}{\ensuremath{20260908}}

\begin{document}

\title{Curvature-Domain Wireless Communications:\\
Gauge-Fixed Signal Spaces, Fredholm Capacity, and Differentiation-Limited Scaling for Continuous Apertures}

\author{Yasser Al-Eryani,~\IEEEmembership{Member,~IEEE}
\thanks{The author is with NeuroBazar, Ottawa, ON, Canada (e-mail: yasser.aleryani@neurobazar.com).}
}

\markboth{}%
{Al-Eryani: Curvature-Domain Wireless Communications}

\maketitle

\begin{abstract}
We develop a curvature-domain formulation for continuous-aperture wireless signaling in which the transmit phase is represented through its second spatial derivative after quotienting out affine piston-and-tilt gauge freedom. The resulting gauge-fixed synthesis operator is bounded and compact, with the sharp Poincar\'{e}--Wirtinger constant \(C_L=L^{2}/\beta_1^{2}\), \(\beta_1\) the least positive root of \(\cos\beta\cosh\beta=1\), and its modal Gram spectrum is available in closed form, \(\rho_m=(L/\beta_m)^{4}\) with \(\{\beta_m\}\) the roots of the same transcendental equation, giving the fourth-order decay \(\rho_m\asymp m^{-4}\) associated with inverse second differentiation together with its exact constant and half-integer offset. Under a bounded-support square-integrable propagation kernel, the induced tangent operator is Hilbert--Schmidt, so the natural infinite-dimensional capacity is a well-defined Fredholm-determinant supremum. The optimizing signaling law is a \emph{dual-budget generalized water-filling} with two Lagrange multipliers --- one for curvature power and one for phase excursion --- which characterizes Gaussian optimality through finite-dimensional and operator-valued KKT systems; this dual-multiplier water-filling reduces to standard Gaussian water-filling in the flat-geometry limit. Starting from the exact nonlinear phase-only aperture law, we derive the coherent tangent channel with an explicit Fr\'echet remainder bound \(K_R\) and a multi-chart atlas for large modulation excursions. We also quantify the receiver-side differentiation penalty: when curvature is inferred from noisy phase samples via the interior second-difference matrix \(\mathbf D_2\), the induced curvature-noise covariance is pentadiagonal with spectral norm \(\Theta(\Delta x^{-4})\), and the discrete null space of \(\mathbf D_2\) coincides with the sampled affine gauge. A deterministic diagnostic suite, computed on the operators themselves rather than on tabulated surrogates, measures each mechanism against its closed form and ships a null run beside every claim that a solver could have produced on its own: the computed modal spectrum matches \(\rho_m=(L/\beta_m)^{4}\) to \(\SimSpectrumMaxRelErr\); the tangent remainder obeys the first-order bound with fitted slope \(\SimSlopeLinearization\) on a paraxial Fresnel kernel; the dual-budget law is solved with both multipliers strictly active to a measured KKT residual of \(\SimMaxKktResidual\), and its allocation is not reproducible by any single-budget solution at equal capacity; the derivative-noise bound \(16\sigma_\phi^{2}\Delta x^{-4}\) is approached from below to \(\SimNoiseBoundRatio\); and the differentiation-limited branch is observed at exponent \(\SimCrossoverPlateauExponent\) and then terminates, collapsing to \(\SimCrossoverSaturatedExponent\) once the mode count saturates at the Shannon number, while a flat-propagation null run over the same range holds at \(\SimNullFlatPropExponent\). These results identify curvature as a well-posed, gauge-invariant coordinate for phase-coded continuous apertures, and locate precisely the regime --- below the Shannon number --- in which the gauge, rather than the medium, governs the scaling. The empirical stress-test of the framework against SVD water-filling, Fourier, Zernike-like, matched-focus, and RIS baselines under a common fairness protocol is reported in the companion benchmark paper~\cite{aleryani_benchmark}; the century-scale narrative anchoring of the same organizing principle across substrates is treated in the companion foresight essay~\cite{aleryani_essay}.
\end{abstract}

\begin{IEEEkeywords}
Continuous apertures, holographic MIMO, information theory, near-field wireless, operator theory, phase curvature, wavefront coding.
\end{IEEEkeywords}

\section{Introduction}

\IEEEPARstart{C}{ontinuous} and electrically large apertures are pushing wireless communication beyond classical low-dimensional array abstractions. In this regime, the received field is not exhausted by a handful of far-field plane-wave parameters. Instead, the geometry of the aperture wavefront itself becomes measurable, controllable, and potentially information-bearing.

The central claim of this paper is precise: for coherent continuous-aperture systems, the affine-invariant \emph{second spatial derivative of phase} is a natural signaling coordinate. This is not a new law of physics; it is a sharper coordinate system for classical aperture-field propagation. The distinction matters. A reviewer can reasonably object that curvature is ``just'' the second derivative of phase, that phase-only propagation is nonlinear, and that no coordinate change can create degrees of freedom ex nihilo. All three objections are valid starting points. The contribution of this paper is to answer them rigorously.

First, curvature is shown to be a \emph{canonical affine-gauge coordinate}: once piston and tilt are quotiented out, the curvature field uniquely represents the phase profile. Second, the exact phase-only propagation law is written on the nonlinear manifold of unit-modulus aperture fields and then linearized via a Fr\'echet derivative around a bias phase. This yields a \emph{local but exact} operator channel with a controlled \(O(\epsilon)\) remainder for modulation depth \(\epsilon\). Third, because curvature is merely a coordinate system on ordinary phase-coded aperture fields, the framework introduces no new electromagnetic principle and no nonclassical transport mechanism. Any gain arises only from exploiting geometric modes that coarser channel parameterizations fail to represent explicitly.

The motivation is strongest in near-field, holographic, and surface-based systems. Classical signal-space analyses already show that extended apertures are fundamentally infinite-dimensional objects \cite{Shannon,CoverThomas,Telatar,TseViswanath,Poon,Miller}. Fourier optics has long treated wavefront curvature as a meaningful geometric quantity \cite{Goodman}. Yet these strands have not been integrated into a communication theory in which curvature is the transmitted symbol, phase synthesis induces a realizability penalty, receiver differentiation noise is quantified exactly, and finite-mode approximations are connected to an infinite-dimensional capacity formula. This paper supplies that missing layer.

\subsection{What This Paper Is---and Is Not}

To avoid conceptual ambiguity, the present work should be read as follows.

\begin{itemize}
    \item It \emph{is} a coordinate-theoretic communication framework for continuous apertures.
    \item It \emph{is} compatible with standard Green-function / Fresnel / Fourier propagation.
    \item It \emph{is not} a claim of new physics beyond classical wave propagation.
    \item It \emph{is not} a claim that curvature creates capacity beyond the full aperture-field channel.
    \item It \emph{is} a claim that, under coherent high-resolution sensing and control, curvature provides a mathematically privileged and operationally useful signal space.
\end{itemize}

\subsection{Main Contributions}

The paper makes the following contributions.

\begin{enumerate}
    \item \textbf{Gauge-fixed curvature coordinate with an explicit constant.} We define a phase space modulo affine piston-and-tilt functions, prove that the curvature map \(D^2\) is a bounded bijection onto \(L^2\), and identify the sharp Poincar\'{e}--Wirtinger constant \(C_L=L^{2}/\beta_1^{2}\) of the compact inverse \(\cS=(D^2)^{-1}\) (Lemma~\ref{lem:poincare_wirtinger}, Theorem~\ref{thm:synthesis}).
    \item \textbf{Exact nonlinear phase-manifold model, tangent channel, and multi-chart continuation.} Starting from the exact phase-only field law, we derive the Fr\'echet linearized curvature channel with an explicit remainder constant \(K_R\), a chart-size bound \(\epsilon_{\max}=\delta/(K_R R)\), and a multi-chart atlas for large-modulation continuation (Sections~\ref{sec:model}, \ref{sec:multi_chart}).
    \item \textbf{The exact cost of linearization, in closed form.} The atlas of the preceding item bounds the \emph{magnitude} of the Fr\'echet remainder; we compute its \emph{sign}. For an isotropic modulation direction of excursion \(\epsilon\), the exact aperture law separates signals by a factor \(1-\alpha(\phi)\epsilon^{2}+O(\epsilon^{4})\) relative to its own tangent channel, where \(\alpha(\phi)=\bigl(2\Tr g(\phi)-\norm{s(\phi)}^{2}\bigr)/\bigl(4(N+2)\Tr g(\phi)\bigr)\) and \(\norm{s(\phi)}^{2}\) is the coherent received power at the operating point (Theorem~\ref{thm:deficit}, Lemma~\ref{lem:coherent_gain}). The coefficient changes sign when coherent gain exceeds twice the summed element power, so the tangent channel is optimistic for sparse apertures and conservative for dense ones. This fixes the constant in the Information--Curvature Efficiency Law of the companion essay, which is otherwise a free parameter (Remark~\ref{rem:icel_constant}); the closed form is verified against direct measurement to between \(0.13\%\) and \(2.2\%\) across a sign change (Table~\ref{tab:deficit}).
    \item \textbf{Spectral curvature modes in closed form.} The positive compact operator \(\cQ=\cS^\ast \cS\) generates an orthonormal curvature basis whose modal Gram eigenvalues on the interval are exactly \(\rho_m=(L/\beta_m)^{4}\) (Proposition~\ref{prop:Qspec_closed_form}), yielding a diagonal phase-excursion penalty and the asymptotic law \(\rho_m=(L/\pi)^{4}(m+\tfrac12)^{-4}(1+O(e^{-m\pi}))\).
    \item \textbf{Dual-budget generalized water-filling.} We derive a two-multiplier generalized water-filling law with distinct Lagrange multipliers \(\mu\) for curvature power and \(\nu\) for phase excursion, together with the corresponding finite-dimensional Gaussian capacity, KKT conditions, and a rank-sensitive upper bound. This dual-multiplier form reduces to standard single-budget water-filling in the flat-geometry limit \(\rho_m\to 1\); it is the paper's main methodological novelty in the water-filling literature.
    \item \textbf{Capacity attainment and operator KKT system.} We prove that the infinite-dimensional Fredholm capacity supremum is attained under the compact tangent-channel model and derive the associated operator KKT conditions.
    \item \textbf{Infinite-dimensional capacity.} We prove a Fredholm-determinant capacity formula and a modal convergence theorem \(C_M\to C_\infty\).
    \item \textbf{Exact derivative-noise statistics.} We derive the full covariance of discrete curvature noise induced by phase sampling, including the \(\Delta x^{-4}\) amplification law.
    \item \textbf{Regularized estimation theory.} We obtain closed-form bias--variance expressions for regularized curvature reconstruction.
    \item \textbf{Scaling laws.} We prove useful-mode counting and differentiation-limited rate growth \(C(\Gamma)=\Theta(\Gamma^{1/4})\).
    \item \textbf{Spectral composition law and the curvature crossover.} We prove a Horn--Weyl envelope for the tangent channel showing that its modal decay is governed by whichever of the gauge spectrum and the propagation spectrum falls faster, and deduce that the differentiation-limited polynomial branch holds precisely below the Shannon number of the geometry and reverts to the classical degrees-of-freedom law above it (Theorems~\ref{thm:composition} and~\ref{thm:crossover}). This delimits exactly where the curvature coordinate has a rate consequence.
    \item \textbf{Two-dimensional gauge-fixed extension.} We formulate the corresponding Hessian-gauge curvature space for two-dimensional apertures and show that the same compact-synthesis and Fredholm-capacity machinery applies.
\end{enumerate}

\subsection{Notation}

All Hilbert spaces are real when describing physical phase profiles and complex when using standard complex-equivalent Gaussian channel notation. The distinction affects only conventional factors and not the operator structure. Inner products are linear in the second argument. Capacity expressions use logarithms base \(2\); \(\ln\) denotes the natural logarithm. For \(x\in\R\), \([x]_+=\max\{x,0\}\). For a real number \(x\), \(\lfloor x\rfloor_+\triangleq \max\{0,\lfloor x\rfloor\}\). The space of trace-class operators on a Hilbert space \(\cH\) is denoted \(\Tc(\cH)\).

Symbols are defined inline where they are first used.

\subsection{Organization}

Section II positions the work relative to the literature. Section III introduces the exact physical model, affine gauge, and linearized curvature channel. Section IV develops the spectral curvature basis and Galerkin reduction. Section V derives finite-mode capacity and generalized water-filling. Section VI gives the infinite-dimensional Fredholm formulation, capacity attainment, operator KKT conditions, and the proof that \(C_M\to C_\infty\). Sections VII--IX treat estimation, detection, and scaling laws. Section~\ref{sec:crossover} proves the spectral composition envelope and the crossover it implies. Section~\ref{sec:2d} gives the two-dimensional gauge-fixed extension. Section XI discusses physical interpretation, limitations, and falsifiability.

\section{Related Work and Positioning}
\label{sec:related_work}

This paper sits at the intersection of six lines of prior work. Each subsection below states the canonical contribution of that line and identifies the single, well-defined thing that curvature-domain signaling adds to it. Table~\ref{tab:positioning} summarizes the same positioning in one page.

\subsection{Classical Shannon information theory and signal-space MIMO}
Classical Shannon information theory established that communication is a problem of operator spectra and constrained Gaussian signaling \cite{Shannon,Gallager1968,CoverThomas,Telatar,TseViswanath}. In physically constrained wireless settings, continuous-aperture degrees of freedom and wave-based channel decompositions were placed in operator-theoretic form by Poon, Brodersen, and Tse \cite{Poon} and, from a broader wave-channel viewpoint, by Miller \cite{Miller,Miller2019Wave}. These are the direct antecedents of the present paper's operator perspective. What they leave open — and what the present paper isolates — is the identification of the affine-invariant second derivative of phase \emph{as the transmitted variable} and the concomitant phase-synthesis penalty that any realizable phase profile must pay when its curvature is amplified.

\subsection{Holographic and continuous-aperture MIMO}
Recent work on holographic, near-field, and continuous-aperture MIMO models the aperture as an electromagnetically dense surface or volume rather than a set of ports. Pizzo, Marzetta, and Sanguinetti's plane-wave series expansion of holographic small-scale fading \cite{PizzoMarzetta2020,PizzoSanguinetti2022}, the near-field beamforming primer of Björnson and Sanguinetti \cite{BjornsonPrimer2021}, and the continuous-aperture array programme \cite{Liu2025CAPA}, are the immediate engineering context of our theory. The reconfigurable-surface literature that motivates phase-only control, and from which the realizability constraint of Section~\ref{sec:model} is inherited, is surveyed in \cite{WuZhang2019IRS,DiRenzo2020RIS}. Our contribution to this line is not another surface-current beamforming law: it is a communication-theoretic formulation in which curvature is the signal-space coordinate, phase realizability enters via a second Lagrange multiplier, and the receiver's finite-difference reconstruction cost appears explicitly in the capacity formula.

\subsection{Adaptive optics and wavefront theory}
Adaptive-optics wavefront theory has used curvature to describe focusing, aberration, and propagation geometry for decades. Zernike polynomials \cite{Noll1976} explicitly quotient out piston and tilt as non-imaging modes; wavefront reconstruction from slope and curvature measurements \cite{Southwell1980,Roddier1999} formalizes the same gauge freedom this paper uses; and Fourier-optics primers \cite{Goodman,BornWolf} discuss curvature as a geometric quantity of the transmitted field. The gauge-fixing move in Section~\ref{sec:model}, then, is not new to optics. What is new here is placing that quotient inside a communication-theoretic framework — with a Fredholm-determinant capacity, a Galerkin convergence theorem, and a KKT-characterized generalized water-filling law — rather than as a decomposition for wavefront correction.

\subsection{Inverse problems and regularized differentiation}
Regularized differentiation of noisy signals has been studied since Tikhonov \cite{Tikhonov1977}. The general theory of ill-posed inverse problems \cite{EnglHankeNeubauer1996,KaipioSomersalo2005}, spline smoothing under Gaussian noise \cite{Wahba1990}, and nonparametric statistical inverse problems \cite{Cavalier2008} predict the \(\Delta x^{-4}\) noise-amplification scaling that our discrete second-difference estimator inherits (Theorem~\ref{thm:curvnoise}). We borrow that machinery inside a full communication chain rather than as a standalone reconstruction problem, and we quantify its cost in bits per second per aperture length using the operator KKT conditions of Section~\ref{sec:capacity_infinite}.

\subsection{Electromagnetic information theory}
Electromagnetic information theory (EIT) treats the transmitted electromagnetic field itself as the information carrier and derives capacity bounds directly from Maxwell's equations and bounded-support radiation conditions \cite{FranceschettiEIT,Migliore2019,DardariEIT}. EIT already includes operator-theoretic capacity statements for continuous fields, so the reader might reasonably ask what curvature-domain signaling adds to that line. Three things: (i) a \emph{canonical} affine-quotient signaling coordinate (curvature) that is invariant under piston-and-tilt gauge and complete on \(H^{2}(\Omega)/\cA\); (ii) a \emph{dual}-budget capacity that penalizes phase realizability via a second Lagrange multiplier not present in EIT; and (iii) an \emph{explicit} pentadiagonal derivative-noise covariance with a closed-form \(\Delta x^{-4}\) spectral norm at the receiver. In short: EIT is the ambient theory, and curvature-domain signaling is a specialization to phase-only apertures that isolates a useful coordinate and its associated realizability tax.

\subsection{Information geometry}
The choice of a geometric coordinate for the transmitted signal — rather than for the noise or the channel — connects this paper to information geometry \cite{AmariNagaoka,Amari2016,AyJostLeSchwachhofer}. Amari's information manifold uses the Fisher metric on parameterized probability families; our curvature manifold uses the phase-realizability metric \(M=L+\varepsilon W_{\mathrm t}\) on the phase-control tangent space. The two perspectives are not identical, but they share the operator-geometric methodology. This paper does \emph{not} claim to derive ICEL from information geometry, but the compact-synthesis / spectral-basis machinery of Sections~\ref{sec:model}--\ref{sec:capacity_infinite} is directly informed by it.

\begin{table*}[t]
\centering
\caption{Positioning Relative to Prior Literature. Each row lists the canonical object of that line, its typical result, and the single, well-defined thing curvature-domain signaling adds.}
\label{tab:positioning}
\renewcommand{\arraystretch}{1.20}
\begin{tabularx}{\textwidth}{>{\raggedright\arraybackslash}p{3.4cm} >{\raggedright\arraybackslash}p{3.0cm} >{\raggedright\arraybackslash}p{3.4cm} >{\raggedright\arraybackslash}X}
\toprule
\textbf{Literature} & \textbf{Canonical object} & \textbf{Typical result} & \textbf{What curvature-domain signaling adds} \\
\midrule
Shannon / MIMO IT \cite{Shannon,Gallager1968,CoverThomas,Telatar,TseViswanath} & Finite-dim channel vector / matrix & Capacity, WF, Gaussian optimality & Affine-gauge curvature coordinate + phase-synthesis penalty \(G_M=\mathrm{diag}(\zeta_m)\) \\
\addlinespace
Continuous-aperture signal spaces \cite{Poon,Miller,Miller2019Wave} & Continuous field / mode operator & Spatial DoF, modal coupling & \emph{Dual}-budget capacity (curvature energy + phase excursion) with derivative-noise statistics \\
\addlinespace
Holographic / near-field / CAPA \cite{PizzoMarzetta2020,PizzoSanguinetti2022,BjornsonPrimer2021,Liu2025CAPA} & Surface current or dense array field & Beamforming law, near-field gain & Coordinate change to affine-quotient curvature; phase-realizability tax as \(\zeta_m\sim m^{-4}\) \\
\addlinespace
Fourier optics / adaptive optics \cite{Goodman,BornWolf,Noll1976,Southwell1980,Roddier1999} & Wavefront / Zernike expansion & Imaging, aberration correction & IT capacity, KKT-characterized generalized WF, Galerkin convergence \\
\addlinespace
Inverse problems / regularized differentiation \cite{Tikhonov1977,EnglHankeNeubauer1996,KaipioSomersalo2005,Wahba1990,Cavalier2008} & Ill-posed operator equation & Regularization, bias--variance trade-off & Embedding of \(\Delta x^{-4}\) noise law inside an end-to-end capacity chain \\
\addlinespace
Electromagnetic IT \cite{FranceschettiEIT,Migliore2019,DardariEIT} & Vector EM field on bounded support & Capacity of the continuous EM channel & Canonical curvature coordinate + dual-budget WF + explicit \(K_{\mathrm c}\) receive-side noise covariance \\
\addlinespace
Information geometry \cite{AmariNagaoka,Amari2016,AyJostLeSchwachhofer} & Parameterized probability family & Fisher metric, statistical geodesics & Phase-realizability metric \(M=L+\varepsilon W_{\mathrm t}\) on the phase-control tangent space \\
\bottomrule
\end{tabularx}
\end{table*}

The central point is therefore this: the paper does \emph{not} claim to supersede prior signal-space, wave-space, EIT, adaptive-optics, or information-geometric formulations. Rather, it identifies a previously unformalized geometric layer inside them and quantifies its communication consequences — including a \emph{new} generalized water-filling law with two Lagrange multipliers, one for curvature power and one for phase excursion, that reduces to standard Gaussian water-filling in the flat limit.

\section{Exact Physical Model and Gauge-Fixed Curvature Coordinates}
\label{sec:model}

\subsection{Aperture Geometry and Phase Gauge}

Let the transmit aperture be the interval
\[
\Omega=[0,L], \qquad L>0,
\]
and set
\[
\cX \triangleq L^2(\Omega).
\]
Let the unwrapped phase profile be \(\phi\in H^2(\Omega)\). Define the centered affine basis
\[
\psi_0(x)=1,
\qquad
\psi_1(x)=x-\frac{L}{2},
\]
and the affine subspace
\[
\cA \triangleq \Span\{\psi_0,\psi_1\}.
\]

\begin{definition}[Affine phase class]
Two phases \(\phi_1,\phi_2\in H^2(\Omega)\) are affinely equivalent if
\[
\phi_1-\phi_2 \in \cA.
\]
The equivalence class of \(\phi\) is denoted \([\phi]\).
\end{definition}

The gauge-fixed phase space is
\begin{equation}
\cV
\triangleq
\braces{
\phi\in H^2(\Omega):
\inner{\phi}{\psi_0}_{\cX}=0,\;
\inner{\phi}{\psi_1}_{\cX}=0
}.
\label{eq:Vdef}
\end{equation}
Unless otherwise stated, \(\cV\) is endowed with the \(H^2\)-norm inherited from \(H^2(\Omega)\).

\begin{proposition}[Curvature as an affine-invariant coordinate]
\label{prop:quotient}
The map
\[
[\phi]\longmapsto \phi''
\]
is a well-defined linear bijection from \(H^2(\Omega)/\cA\) onto \(L^2(\Omega)\). Equivalently, every affine phase class admits a unique representative in \(\cV\), and its curvature uniquely determines that representative.
\end{proposition}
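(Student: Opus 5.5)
The plan is to verify the three properties of the map \([\phi]\mapsto\phi''\) in turn (well-definedness, injectivity, surjectivity) and then read off the statement about \(\cV\) from a direct-sum decomposition \(H^2(\Omega)=\cV\oplus\cA\).

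\textbf{Well-definedness and linearity.} If \(\phi_1-\phi_2\in\cA\), then \(\phi_1-\phi_2=a+b(x-L/2)\) has vanishing second weak derivative. Hence \(\phi_1''=\phi_2''\) in \(L^2(\Omega)\). Linearity is inherited from the linearity of \(D^2\) on \(H^2(\Omega)\).

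\textbf{Injectivity.} Suppose \(\phi''=0\) in \(L^2\) for some \(\phi\in H^2(\Omega)\). Then \(\phi'\in H^1(\Omega)\) has zero weak derivative, so \(\phi'\) is a.e.\ constant. This uses the standard fact that an \(H^1\) function on an interval with zero weak derivative is constant, via the absolutely continuous representative. Integrating once more, \(\phi\) is affine, so \(\phi\in\cA\) and \([\phi]=0\). Thus \(\ker D^2=\cA\), which is exactly injectivity on the quotient.

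\textbf{Surjectivity.} Given \(f\in L^2(\Omega)\), set \(\phi(x)=\int_0^x (x-t)f(t)\,dt\). Then \(\phi'(x)=\int_0^x f\) is absolutely continuous with \(\phi''=f\) a.e. Cauchy--Schwarz on the bounded interval gives \(\phi,\phi'\in L^\infty\subset L^2\), so \(\phi\in H^2(\Omega)\) and \([\phi]\mapsto f\).

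\textbf{Unique representative in \(\cV\).} I will show \(H^2(\Omega)=\cV\oplus\cA\) algebraically. The key observation is that \(\psi_0\perp\psi_1\) in \(L^2(0,L)\), since \(\int_0^L (x-L/2)\,dx=0\). Given \(\phi\), subtract its \(L^2\)-orthogonal projection onto \(\cA\):
\[
\tilde\phi=\phi-\frac{\inner{\psi_0}{\phi}}{\norm{\psi_0}^2}\psi_0-\frac{\inner{\psi_1}{\phi}}{\norm{\psi_1}^2}\psi_1 .
\]
This \(\tilde\phi\) lies in \(H^2\), because \(\cA\subset H^2\), and it satisfies both constraints in \eqref{eq:Vdef}. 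Uniqueness holds because \(\cV\cap\cA=\{0\}\): an element of \(\cA\) orthogonal to both \(\psi_0\) and \(\psi_1\) is orthogonal to itself, hence zero. Consequently the restriction \(D^2|_{\cV}:\cV\to L^2(\Omega)\) is a bijection. That is precisely the claim that curvature uniquely determines the representative, namely \(\phi=\tilde g\) with \(g=\int_0^x(x-t)f(t)\,dt\).

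\textbf{Main obstacle.} No step is deep. The only point requiring care is the regularity lemma in the injectivity step, that an \(H^1\) function with zero weak derivative on an interval is constant. I would cite it from standard Sobolev theory rather than reprove it. Boundedness of the inverse, with sharp constant, is not part of this proposition and is deferred to Lemma~\ref{lem:poincare_wirtinger}.
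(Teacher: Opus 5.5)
Your proof is correct and follows essentially the same route as the paper. Well-definedness comes from \(\cA\subset\ker D^2\), injectivity from \(\phi''=0\Rightarrow\phi\) affine, and surjectivity from \(\int_0^x(x-t)f(t)\,dt\). The representative in \(\cV\) comes from the same orthogonal-projection correction \(\phi_0+a\psi_0+b\psi_1\) that the paper uses in its proof of Theorem~\ref{thm:synthesis}. Your version is somewhat more self-contained, because you state explicitly two facts the paper uses without comment: \(\psi_0\perp\psi_1\) and \(\cV\cap\cA=\{0\}\).
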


Thus, curvature is not an auxiliary statistic; it is a coordinate on the quotient phase space.

\subsection{Gauge-Fixed Synthesis Operator}

Define the second-derivative map
\[
D^2:\cV\to \cX,
\qquad
D^2\phi=\phi''.
\]

Before stating the synthesis theorem, we record the sharp \(L^2\) Poincar\'{e}--Wirtinger inequality on \(\cV\), which supplies the explicit constant used by Theorem~\ref{thm:synthesis}.

\begin{lemma}[Sharp \(L^{2}\) Poincar\'{e}--Wirtinger inequality on \(\cV\)]
\label{lem:poincare_wirtinger}
Let \(\beta_1=4.730040744862704\ldots\) be the smallest positive root of
\begin{equation}
\cos\beta\,\cosh\beta=1 .
\label{eq:beta_transcendental}
\end{equation}
Then for every \(\phi\in\cV\),
\begin{equation}
\norm{\phi}_{L^{2}(\Omega)}
\;\le\;
\frac{L^{2}}{\beta_1^{2}}\,\norm{\phi''}_{L^{2}(\Omega)},
\label{eq:poincare_wirtinger}
\end{equation}
and the constant \(L^{2}/\beta_1^{2}=L^{2}/22.3733\ldots\) is sharp, with equality
attained on the first non-rigid eigenfunction of the free--free biharmonic
problem \eqref{eq:freefree}.
\end{lemma}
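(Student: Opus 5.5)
The plan is to read the inequality as a Rayleigh-quotient statement and compute the minimum of
\[
R(\phi)=\frac{\norm{\phi''}_{L^2}^2}{\norm{\phi}_{L^2}^2},\qquad \phi\in\cV\setminus\{0\},
\]
exactly. By scaling \(x\mapsto x/L\), it suffices to treat \(L=1\) and show \(\min R=\beta_1^4\). The constant \(L^2/\beta_1^2\) then follows because \(\norm{\phi}\le \norm{\phi''}/\sqrt{\min R}\).

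\textbf{Step 1 (existence of a minimizer).} Work on \(\cV\) with the quadratic form \(a(\phi,\phi)=\norm{\phi''}^2\). First I show coercivity. If \(\phi\in\cV\) and \(\phi''=0\), then \(\phi\in\cA\cap\cA^{\perp}=\{0\}\). A compactness argument, using the embedding \(H^2\hookrightarrow H^1\) (Rellich) and a contradiction sequence, then gives \(\norm{\phi}_{H^2}\le C\norm{\phi''}\) on \(\cV\). Hence \(a\) is an equivalent inner product on \(\cV\), and the embedding \((\cV,a)\hookrightarrow L^2\) is compact. The spectral theorem for the compact self-adjoint inverse then yields an orthonormal eigenbasis, with the first eigenvalue \(\lambda_1=\min R\) attained.

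\textbf{Step 2 (Euler--Lagrange problem).} Take a minimizer \(\phi\). For all \(v\in\cV\) it satisfies \(\inner{\phi''}{v''}=\lambda\inner{\phi}{v}\). I extend this to all \(v\in H^2\): for \(v\in\cA\), the left side vanishes, and the right side vanishes because \(\phi\perp\cA\). Taking \(v\in C_c^\infty\) gives \(\phi''''=\lambda\phi\) weakly, so \(\phi\) is smooth. Integrating by parts with arbitrary boundary values of \(v\) and \(v'\) then yields the natural free--free conditions \(\phi''=\phi'''=0\) at \(x=0,1\), which is presumably \eqref{eq:freefree}. Conversely, every free--free eigenfunction with \(\lambda>0\) is automatically orthogonal to \(\cA\): pairing \(\phi''''=\lambda\phi\) with \(1\) and with \(x-\tfrac12\) and integrating by parts kills every term. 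So the eigenpairs of the constrained problem are exactly the nonzero-eigenvalue free--free beam modes.

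\textbf{Step 3 (frequency equation).} Write \(\lambda=\beta^4\) and take the general solution \(\phi=A\cos\beta x+B\sin\beta x+C\cosh\beta x+D\sinh\beta x\). Imposing the four boundary conditions, the determinant of the resulting \(4\times4\) system reduces, after the standard manipulation using \(\cosh^2-\sinh^2=1\) and \(\cos^2+\sin^2=1\), to \(\cos\beta\cosh\beta=1\).

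\textbf{Step 4 (identifying the smallest eigenvalue).} Every eigenvalue is nonnegative, since \(\lambda\norm{\phi}^2=\norm{\phi''}^2\). The value \(\lambda=0\) is excluded on \(\cV\) by Step 1. So \(\min R=\beta_1^4\), where \(\beta_1\) is the least positive root; \(\beta=0\) is the excluded rigid root. Equality in \eqref{eq:poincare_wirtinger} holds at the corresponding eigenfunction, which proves sharpness. Rescaling to \([0,L]\) gives \(\lambda_1=(\beta_1/L)^4\), and taking the square root gives \(L^2/\beta_1^2\).

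The main obstacle is Step 3 together with the claim in Step 4 that no root lies in \((0,\beta_1)\). The determinant computation is routine but error-prone, and I would do it by symmetry instead. The problem is invariant under \(x\mapsto 1-x\), so I separate symmetric and antisymmetric modes about \(x=\tfrac12\). This gives the two factors \(\tan(\beta/2)+\tanh(\beta/2)=0\) and \(\tan(\beta/2)-\tanh(\beta/2)=0\), whose product is equivalent to \(\cos\beta\cosh\beta=1\). On \((0,\beta_1)\), \(\beta_1\approx 4.73\), I verify directly that \(\cos\beta\cosh\beta-1\neq 0\): for \(\beta\in(0,\pi/2]\) the product is below \(1\) by a series check, and on \((\pi/2,\beta_1)\) the product is negative or increases monotonically to its first crossing. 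This is what pins down \(\beta_1\) as the first root.
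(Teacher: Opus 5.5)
Your proposal is correct and follows the paper's route: minimize the Rayleigh quotient on \(\cV\) by compactness, derive the Euler--Lagrange equation with the natural free--free boundary conditions \(\phi''=\phi'''=0\), and reduce to the frequency equation \(\cos\beta\cosh\beta=1\). The differences are minor. You handle the gauge constraints by extending the weak form to all of \(H^2(\Omega)=\cV\oplus\cA\), where the paper introduces multipliers \(\mu_0,\mu_1\) and then shows they vanish. You also supply two details the paper leaves implicit: the coercivity estimate on \(\cV\), and the check that no root lies in \((0,\beta_1)\), which is needed only to confirm the numerical value of \(\beta_1\).
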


\begin{proof}
The form \(a(\phi,\phi)=\norm{\phi''}^{2}_{L^{2}}\) is bounded and nonnegative on \(\cV\),
and the embedding \(H^{2}(\Omega)\hookrightarrow L^{2}(\Omega)\) is compact by
Rellich--Kondrachov. Hence the Rayleigh quotient \(a(\phi,\phi)/\norm{\phi}^{2}_{L^{2}}\)
attains its infimum \(\lambda_1\) on \(\cV\setminus\{0\}\), and the minimizer satisfies,
for every variation \(\delta\phi\in H^{2}(\Omega)\),
\begin{equation}
\int_\Omega \phi''\,\delta\phi''
=\lambda_1\!\int_\Omega \phi\,\delta\phi
+\mu_0\!\int_\Omega \delta\phi
+\mu_1\!\int_\Omega \bigl(x-\tfrac{L}{2}\bigr)\,\delta\phi ,
\label{eq:firstvariation}
\end{equation}
with \(\mu_0,\mu_1\) the multipliers of the two gauge constraints. Since \(\delta\phi\)
and \(\delta\phi'\) are unconstrained at \(x\in\{0,L\}\), integrating \eqref{eq:firstvariation}
by parts twice produces the \emph{natural} boundary conditions
\begin{equation}
\phi''(0)=\phi''(L)=\phi'''(0)=\phi'''(L)=0
\label{eq:freefree_bc}
\end{equation}
together with \(\phi''''=\lambda_1\phi+\mu_0+\mu_1(x-L/2)\) on \(\Omega\). Testing that
identity against \(\delta\phi\equiv 1\) gives \(\bigl[\phi'''\bigr]_0^L=\mu_0 L\), and against
\(\delta\phi=x-L/2\) gives
\(\bigl[\phi'''(x-\tfrac{L}{2})\bigr]_0^L-\bigl[\phi''\bigr]_0^L=\mu_1 L^{3}/12\);
by \eqref{eq:freefree_bc} both left-hand sides vanish, so \(\mu_0=\mu_1=0\). The minimizer
therefore solves the free--free Euler--Bernoulli eigenproblem
\begin{equation}
\phi''''=\lambda\,\phi\ \ \text{in }\Omega,
\qquad
\phi''=\phi'''=0\ \ \text{at } x\in\{0,L\}.
\label{eq:freefree}
\end{equation}
Its kernel is \(\{\phi:\phi''''=0,\ \phi''=\phi'''=0\ \text{at both ends}\}=\Span\{1,x\}=\cA\),
a two-dimensional eigenspace at \(\lambda=0\) that \(\cV=\cA^{\perp}\) excludes by construction;
the operator is self-adjoint, so its remaining eigenfunctions are \(L^{2}\)-orthogonal to
\(\cA\) and span \(\cV\). Writing \(\lambda=(\beta/L)^{4}\) and imposing \eqref{eq:freefree_bc}
on \(\phi(x)=A\cosh(\beta x/L)+B\cos(\beta x/L)+C\sinh(\beta x/L)+D\sin(\beta x/L)\) gives a
\(4\times 4\) homogeneous system whose determinant vanishes exactly when
\(\cos\beta\cosh\beta=1\). Hence \(\lambda_1=(\beta_1/L)^{4}\) and
\(\norm{\phi}^{2}_{L^{2}}\le\lambda_1^{-1}\norm{\phi''}^{2}_{L^{2}}\), which is
\eqref{eq:poincare_wirtinger}; sharpness holds on the minimizing eigenfunction.
\end{proof}

\begin{remark}[Why the Dirichlet sine basis fails here]
\label{rem:sine_trap}
It is tempting to expand \(\phi\in\cV\) in \(\{\sin(n\pi x/L)\}_{n\ge1}\) and read off the
value \(L^{2}/\pi^{2}\). That argument is invalid twice over. First, the Dirichlet sine
family is complete in \emph{all} of \(L^{2}(\Omega)\), not in the affine complement, so
orthogonality to \(\cA\) removes none of its members; it imposes two linear constraints on
the coefficient sequence instead. Second, the term-by-term identity
\(\phi''=-\sum_n a_n (n\pi/L)^{2}e_n\) presupposes the Dirichlet data \(\phi(0)=\phi(L)=0\),
which \(\cV\) does not carry --- indeed \eqref{eq:freefree_bc} shows the correct conditions
are on \(\phi''\) and \(\phi'''\), not on \(\phi\). The would-be extremizer \(\sin(\pi x/L)\)
is not admissible at all, since \(\int_0^L\sin(\pi x/L)\,dx=2L/\pi\neq0\). The resulting
value \(L^{2}/\pi^{2}\) is a valid bound but slack by the factor
\(\beta_1^{2}/\pi^{2}=2.2669\); because the true constant is \emph{smaller}, every estimate
downstream of \(C_L\) tightens by the same factor. Standard references for the free--free
biharmonic spectrum are \cite{Evans,Conway}.
\end{remark}

\begin{theorem}[Gauge-fixed synthesis operator]
\label{thm:synthesis}
The operator \(D^2:\cV\to\cX\) is a bounded bijection. Its inverse
\[
\cS\triangleq (D^2)^{-1}:\cX\to\cV
\]
is bounded, with the explicit operator norm bound
\begin{equation}
\norm{\cS c}_{\cX}\le C_L \norm{c}_{\cX},
\qquad
C_L \;=\; \frac{L^{2}}{\beta_1^{2}},
\qquad
\forall c\in \cX.
\label{eq:synthesis_bound}
\end{equation}
Viewed as an operator \(\cS:\cX\to\cX\), it is compact.
\end{theorem}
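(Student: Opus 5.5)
The plan splits into four pieces: boundedness of \(D^2\), injectivity, surjectivity with an explicit right inverse, and then the norm bound and compactness of \(\cS\) read off from Lemma~\ref{lem:poincare_wirtinger} and Rellich--Kondrachov.

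\textbf{Boundedness.} Since \(\cV\) carries the \(H^2\)-norm, \(\norm{\phi''}_{L^2}\le\norm{\phi}_{H^2}\), so \(D^2\) is bounded. Linearity is immediate.

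\textbf{Injectivity.} If \(\phi\in\cV\) and \(\phi''=0\), then \(\phi\) is affine, i.e.\ \(\phi\in\cA\). But \(\cV\) is \(L^2\)-orthogonal to \(\cA\), so \(\phi\perp\phi\) and hence \(\phi=0\). Alternatively, Lemma~\ref{lem:poincare_wirtinger} gives \(\norm{\phi}\le C_L\norm{\phi''}=0\) directly.

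\textbf{Surjectivity.} Given \(c\in\cX\), I would set
\[
\tilde\phi(x)=\int_0^x (x-t)\,c(t)\,dt .
\]
This function lies in \(H^2(\Omega)\) and satisfies \(\tilde\phi''=c\), with \(\norm{\tilde\phi}_{H^2}\lesssim \norm{c}_{L^2}\) by Cauchy--Schwarz. I would then subtract its \(L^2\)-orthogonal projection onto \(\cA\):
\[
\phi=\tilde\phi-\frac{\inner{\psi_0}{\tilde\phi}}{\norm{\psi_0}^2}\psi_0-\frac{\inner{\psi_1}{\tilde\phi}}{\norm{\psi_1}^2}\psi_1 .
\]
The formula is this simple because \(\psi_0\perp\psi_1\) (the centering \(x-L/2\) was chosen for exactly this). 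The result satisfies \(\phi\in\cV\) and \(\phi''=c\), since affine corrections have zero second derivative. This is essentially Proposition~\ref{prop:quotient}, which I may cite. So \(D^2\) is a bijection and \(\cS c=\phi\). Boundedness of \(\cS:\cX\to\cV\) in the \(H^2\)-norm follows either from the explicit construction or from the bounded inverse theorem; the latter needs \(\cV\) to be closed in \(H^2\), which holds because it is the intersection of the kernels of two continuous functionals.

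\textbf{Norm bound \eqref{eq:synthesis_bound}.} Apply Lemma~\ref{lem:poincare_wirtinger} to \(\phi=\cS c\in\cV\):
\[
\norm{\cS c}_{L^2}\le \frac{L^2}{\beta_1^2}\,\norm{(\cS c)''}_{L^2}=C_L\norm{c}_{L^2}.
\]
The lemma also gives sharpness: the operator norm equals \(C_L\), attained at \(c=\phi_1''\) for the first free--free eigenfunction \(\phi_1\).

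\textbf{Compactness.} Factor \(\cS:\cX\to\cX\) as \(\iota\circ\cS\), where \(\cS:\cX\to(\cV,\norm{\cdot}_{H^2})\) is bounded by the surjectivity step and \(\iota:H^2(\Omega)\hookrightarrow L^2(\Omega)\) is compact by Rellich--Kondrachov on a bounded interval. A compact operator composed with a bounded one is compact.

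An alternative route to compactness is spectral. \(\cS\) is diagonal in the free--free eigenbasis: \(\cS\) maps the normalized \(\phi_m''\) to a multiple of \(\phi_m\), with singular values \((L/\beta_m)^2\to0\), so \(\cS\) is a norm limit of finite-rank operators. This requires checking that \(\{\phi_m''\}\) is orthogonal and complete in \(L^2\). Orthogonality follows by integrating by parts with the free--free boundary conditions \eqref{eq:freefree_bc}; completeness follows from surjectivity.

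The only step needing care is the \(H^2\) (not merely \(L^2\)) boundedness of \(\cS\), on which the Rellich factorization depends. The explicit Volterra construction above settles it with elementary estimates, so I expect no genuine obstacle.
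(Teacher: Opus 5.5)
Your proof is correct and follows the paper's argument essentially step for step. The steps are: injectivity from the affine kernel together with orthogonality to \(\cA\); surjectivity via the same Volterra integral followed by the \(L^2\)-projection off \(\cA\); boundedness of \(\cS\) by the bounded inverse theorem; compactness by factoring through the Rellich embedding \(H^2(\Omega)\hookrightarrow L^2(\Omega)\); and the constant \(C_L=L^2/\beta_1^2\) read off from Lemma~\ref{lem:poincare_wirtinger}. Your observation that the bounded inverse theorem needs \(\cV\) to be closed in \(H^2\) supplies a detail the paper leaves implicit, and your spectral alternative is valid but not needed.
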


\begin{proof}[Proof of the explicit constant]
The estimate \eqref{eq:synthesis_bound} with \(C_L=L^{2}/\beta_1^{2}\) is a direct rewrite of Lemma~\ref{lem:poincare_wirtinger}, applied with \(\phi=\cS c\) so that \(\phi''=c\); sharpness of the Lemma's constant is exactly the statement that \(C_L\) is the operator norm of \(\cS\), not merely a bound on it.
\end{proof}

Theorem~\ref{thm:synthesis} is the structural foundation of the paper. It says that curvature is a complete affine-invariant representation of phase, and phase synthesis from curvature is well posed. The explicit constant \(C_L=L^{2}/\beta_1^{2}\) will drive the constant in the phase-manifold linearization remainder (Theorem~\ref{thm:linearization} and Remark~\ref{rem:chart_size}) and the chart-size bound of Section~\ref{sec:multi_chart}.

\subsection{Exact Nonlinear Phase-Only Manifold Channel}

Let \(\cU_t=L^2(\Omega)\) be the transmit field space and \(\cU_r=L^2(\Omega_r)\) the receive field space over a bounded receive aperture \(\Omega_r\). Let
\[
\cP:\cU_t\to \cU_r
\]
be the physical propagation operator. This operator is entirely classical: for example, it may be a Green-function, Fresnel, or Fourier-optics propagation map. To make the compactness statements of Corollary~\ref{cor:compact} and Section~\ref{sec:capacity_infinite} well-defined we impose the following standard assumption.

\begin{assumption}[Bounded-support square-integrable kernel]
\label{ass:kernel_hs}
The transmit and receive apertures \(\Omega,\Omega_r\) are compact and the propagation operator \(\cP\) admits an \(L^{2}\) integral-kernel representation
\begin{equation}
(\cP f)(v)=\int_{\Omega}K(v,u)\,f(u)\,du,
\qquad
K\in L^{2}(\Omega_r\times\Omega).
\label{eq:kernel_hs}
\end{equation}
\end{assumption}

\begin{remark}[Why compact support matters]
Assumption~\ref{ass:kernel_hs} holds automatically for the paraxial Fresnel kernel over compact apertures at any finite separation \(d>0\) \cite{Goodman,BornWolf}, and for scalar Green kernels once the receive aperture is bounded to a compact domain \(\Omega_r\) \cite{FranceschettiEIT,Migliore2019}. In the absence of bounded receive support the pure scalar Green kernel over unbounded \(\R^{2}\) is only bounded (not Hilbert--Schmidt), and \(\cP\) is then compact only if additional integrability is imposed. This is the standard operating convention in the near-field / holographic MIMO literature. Under Assumption~\ref{ass:kernel_hs} the operator \(\cP\) is Hilbert--Schmidt, hence compact.
\end{remark}

Fix a bias phase \(\phi_{\mathrm b}\in \cV\cap L^\infty(\Omega)\), and let \(\epsilon>0\) denote a small modulation depth. For a real curvature symbol \(c\in\cX\), define the exact phase-only transmitted field
\begin{equation}
s_\epsilon(c)(x)
=
\sqrt{\frac{P}{L}}
\exp\!\big(j(\phi_{\mathrm b}(x)+\epsilon(\cS c)(x))\big),
\qquad x\in\Omega.
\label{eq:exact_tx}
\end{equation}
The corresponding receive field is
\begin{equation}
r_\epsilon(c)=\cP s_\epsilon(c)+w,
\label{eq:exact_rx}
\end{equation}
where \(w\) is additive field noise.

Equation \eqref{eq:exact_tx} is the exact physical phase-only law. No linearization has yet been used.

\begin{remark}[Amplitude taper is a fixed hardware property, not a signaling variable]
\label{rem:amplitude_taper}
The transmit amplitude \(\sqrt{P/L}\) in \eqref{eq:exact_tx} is deliberately uniform: the paper's signal-space theory is developed on the assumption that any non-uniform aperture illumination is a \emph{fixed hardware property} (feed weight, aperture taper, or beam-shaping profile), not a data-bearing degree of freedom. This convention matches the phase-only realizability constraint that a physical phase-only aperture (RIS, holographic surface, phased array) enforces on its transmitted field.

For a fixed non-uniform amplitude taper \(a:\Omega\to\R_{+}\) satisfying \(\int_{\Omega}|a(u)|^{2}du=P\), the exact phase-only law generalizes to \(s_\epsilon^{(a)}(c)(x)=a(x)\exp\!\bigl(j(\phi_{\mathrm b}(x)+\epsilon(\cS c)(x))\bigr)\), and the tangent operator \(\cT\) of Theorem~\ref{thm:linearization} is replaced by \(\cT^{(a)}=\cRop \cP M_{ja e^{j\phi_{\mathrm b}}}\cS\). All compactness, capacity, and KKT statements of Sections~\ref{sec:model}--\ref{sec:capacity_infinite} carry over verbatim with \(\cT\) replaced by \(\cT^{(a)}\), because the multiplication operator \(M_{ja}\) is bounded on \(L^{2}(\Omega)\) with \(\norm{M_{ja}}=\norm{a}_{L^{\infty}(\Omega)}\). The companion benchmark paper adopts precisely this convention for its non-uniform quadrature weights.
\end{remark}

\subsection{Coherent Local Linearization}

Let
\[
\cRop:\cU_r\to \cY
\]
be a bounded coherent reconstruction operator, mapping receive field perturbations into a recovered observation space \(\cY\). Define the deterministic bias receive field
\[
\bar r_0 \triangleq \cP s_\epsilon(0)
=
\cP\!\left(\sqrt{P/L}\,e^{j\phi_{\mathrm b}}\right).
\]
The bias-subtracted, normalized output is
\begin{equation}
y_\epsilon(c)
\triangleq
\frac{1}{\epsilon}\,
\cRop\!\big(r_\epsilon(c)-\bar r_0\big).
\label{eq:yeps}
\end{equation}

\begin{theorem}[First-order curvature-domain channel]
\label{thm:linearization}
Assume \(\cP\) and \(\cRop\) are bounded. Then, for every radius \(R>0\), there exists a constant \(K_R<\infty\) such that for all real \(c\in\cX\) with \(\norm{c}_{\cX}\le R\),
\begin{equation}
y_\epsilon(c)
=
\cT c + z + e_\epsilon(c),
\label{eq:first_order_chan}
\end{equation}
where
\begin{align}
\cT
&=
\cRop \cP M_{j\sqrt{P/L}\,e^{j\phi_{\mathrm b}}}\cS,
\label{eq:Tdef}\\
z
&=
\frac{1}{\epsilon}\,\cRop w,
\label{eq:zdef}
\end{align}
\(M_f\) denotes multiplication by \(f\), and the deterministic remainder satisfies
\begin{equation}
\norm{e_\epsilon(c)}_{\cY}
\le
K_R\,\epsilon\,\norm{c}_{\cX}^2.
\label{eq:remainder}
\end{equation}
Hence the exact nonlinear phase-only manifold admits the local operator channel
\begin{equation}
y=\cT c + z
\label{eq:opchan}
\end{equation}
as its Fr\'echet linearization around the bias phase.
\end{theorem}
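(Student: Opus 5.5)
The plan is to compute the remainder directly from the definition \eqref{eq:yeps} and control it with a pointwise Taylor bound for the complex exponential. By linearity of \(\cP\) and \(\cRop\), and because \(\bar r_0=\cP s_\epsilon(0)\),
\[
y_\epsilon(c)=\frac{1}{\epsilon}\cRop\cP\bigl(s_\epsilon(c)-s_\epsilon(0)\bigr)+\frac{1}{\epsilon}\cRop w .
\]
The noise term is exactly \(z\) in \eqref{eq:zdef}. Write \(\theta=\epsilon\,\cS c\), which is real-valued because \(c\) is real and \(\cS\) maps real functions to real functions. Then
\[
s_\epsilon(c)-s_\epsilon(0)=\sqrt{P/L}\,e^{j\phi_{\mathrm b}}\bigl(e^{j\theta}-1\bigr).
\]
The first-order piece is \(\sqrt{P/L}\,e^{j\phi_{\mathrm b}}\, j\theta\). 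After dividing by \(\epsilon\) it becomes \(M_{j\sqrt{P/L}e^{j\phi_{\mathrm b}}}\cS c\), so applying \(\cRop\cP\) gives \(\cT c\) with \(\cT\) as in \eqref{eq:Tdef}. This defines
\[
e_\epsilon(c)=\frac{1}{\epsilon}\,\cRop\cP\Bigl(\sqrt{P/L}\,e^{j\phi_{\mathrm b}}\bigl(e^{j\theta}-1-j\theta\bigr)\Bigr).
\]

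To bound the remainder, I would use the elementary inequality \(\abs{e^{jt}-1-jt}\le t^2/2\) for real \(t\), which follows from the integral form of Taylor's theorem. Since \(\abs{e^{j\phi_{\mathrm b}}}=1\), this gives the pointwise estimate
\[
\abs{\sqrt{P/L}\,e^{j\phi_{\mathrm b}}\bigl(e^{j\theta}-1-j\theta\bigr)}\le \tfrac12\sqrt{P/L}\,\epsilon^2\,\abs{\cS c}^2 .
\]
Taking the \(L^2(\Omega)\) norm yields \(\tfrac12\sqrt{P/L}\,\epsilon^2\,\norm{(\cS c)^2}_{L^2}=\tfrac12\sqrt{P/L}\,\epsilon^2\,\norm{\cS c}_{L^4}^2\). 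Dividing by \(\epsilon\) and using boundedness of \(\cP\) and \(\cRop\), we get
\[
\norm{e_\epsilon(c)}_{\cY}\le \tfrac12\sqrt{P/L}\,\norm{\cRop}\,\norm{\cP}\,\epsilon\,\norm{\cS c}_{L^4}^2 .
\]

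The main obstacle is controlling \(\norm{\cS c}_{L^4}\) by \(\norm{c}_{\cX}\). The \(L^2\) bound of Theorem~\ref{thm:synthesis} alone does not suffice, because squaring a function does not preserve \(L^2\). I would instead use that \(\cS\) maps \(\cX\) boundedly into \(\cV\subset H^2(\Omega)\), with all derivatives controlled.
\begin{itemize}
\item The second derivative satisfies \(\norm{(\cS c)''}=\norm{c}\).
\item The function itself satisfies \(\norm{\cS c}\le C_L\norm{c}\) by \eqref{eq:synthesis_bound}.
\item For the first derivative, \(\phi=\cS c\) has zero mean, so \(\int\phi'=\phi(L)-\phi(0)\). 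Orthogonality to \(\psi_1\) gives \(\int\phi'\,\psi_1=-\int\phi=0\) after integrating by parts with the zero-mean condition. The first condition alone gives \(\phi'\) a zero, so the one-dimensional Poincar\'e inequality applies and \(\norm{\phi'}\le (L/\pi)\norm{\phi''}\) holds up to a constant.
\end{itemize}
Then the one-dimensional Sobolev embedding \(H^1(\Omega)\hookrightarrow L^\infty(\Omega)\) (or \(H^2\hookrightarrow L^\infty\)) gives \(\norm{\cS c}_{L^\infty}\le C_\infty\norm{c}_{\cX}\) for an explicit \(C_\infty\) depending only on \(L\). Consequently \(\norm{\cS c}_{L^4}^2\le \norm{\cS c}_{L^\infty}\norm{\cS c}_{L^2}\le C_\infty C_L\norm{c}^2\).

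This yields \eqref{eq:remainder} with
\[
K_R=\tfrac12\sqrt{P/L}\,\norm{\cRop}\norm{\cP}\,C_\infty C_L ,
\]
which is in fact uniform in \(R\), so the radius restriction is harmless. The final statement \eqref{eq:opchan} follows because \(e_\epsilon(c)=O(\epsilon)\) uniformly on bounded sets. I would also note that \(\cT\) is the Fréchet derivative of \(c\mapsto\cRop\cP s_\epsilon(c)\) at \(0\) divided by \(\epsilon\): the same estimate, with \(\epsilon\) absorbed, shows the remainder is \(o(\norm{c})\). The \(L^\infty\) embedding is the one place where the gauge-fixing and the \(H^2\) regularity delivered by \(\cS\) are genuinely needed.
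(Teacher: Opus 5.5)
Your proposal follows the paper's own proof. You split off \(z\), isolate \(e^{j\epsilon u}-1-j\epsilon u\) with \(u=\cS c\), apply \(\abs{e^{jt}-1-jt}\le t^{2}/2\), and close with \(\norm{u^{2}}_{L^{2}}\le\norm{u}_{L^{\infty}}\norm{u}_{L^{2}}\). The \(L^{\infty}\) factor comes from boundedness of \(\cS:\cX\to\cV\subset H^{2}(\Omega)\) (Theorem~\ref{thm:synthesis}) and the one-dimensional Sobolev embedding, and this yields a constant that is in fact independent of \(R\).

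The one slip is in your first-derivative bullet, and that bullet is not needed, since Theorem~\ref{thm:synthesis} already gives the \(H^{2}\) bound. Integration by parts gives \(\int_0^L\phi'\psi_1\,dx=\tfrac{L}{2}\bigl(\phi(0)+\phi(L)\bigr)-\int_0^L\phi\,dx\), not \(-\int_0^L\phi\,dx\). Zero mean also forces a zero of \(\phi\), not of \(\phi'\): \(\phi=\psi_1\) has zero mean but \(\psi_1'\equiv1\). What does give \(\phi'\) an interior zero is \(\inner{\phi}{\psi_1}_{\cX}=\tfrac12\int_0^L\phi'(x)\,x(L-x)\,dx=0\), because the weight \(x(L-x)\) is one-signed on \([0,L]\).
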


\begin{corollary}[Compactness and Hilbert--Schmidt property of the tangent channel]
\label{cor:compact}
If \(\cP\) and \(\cRop\) are bounded, then \(\cT\) in \eqref{eq:Tdef} is compact because \(\cS:\cX\to\cX\) is compact. Under Assumption~\ref{ass:kernel_hs}, the operator \(\cP\) is Hilbert--Schmidt (i.e.\ \(\cP\in \Hc(\cU_t,\cU_r)\)), and therefore \(\cT\) is Hilbert--Schmidt as well:
\begin{equation}
\norm{\cT}_{\Hc}^{2}
\;=\;
\tr\!\bigl(\cT^{\ast}\cT\bigr)
\;=\;
\frac{P}{L}\,\sum_{m\ge 1}\rho_m\,\lambda_m^{2}
\;<\;\infty,
\label{eq:T_HS_norm}
\end{equation}
where \(\{\rho_m\}\) are the curvature-mode Gram eigenvalues of Proposition~\ref{prop:Qspec} (with the interval-geometry asymptotics \(\rho_m\asymp m^{-4}\) of Remark~\ref{rem:Qspec_decay}), and \(\{\lambda_m\}\) are the singular values of \(\cRop\cP M_{j e^{j\phi_{\mathrm b}}}\). The trace-class product \(\cT\cK\cT^{\ast}\) is therefore well-defined for every trace-class covariance \(\cK\in\Tc(\cX)\); in particular, the Fredholm determinant \(\det_{\!\mathrm F}(\I+\cT\cK\cT^{\ast})\) of Theorem~\ref{thm:fredholm} exists and is finite.
\end{corollary}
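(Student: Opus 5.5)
The plan is to get every claim of Corollary~\ref{cor:compact} from the two-sided ideal property of the compact, Hilbert--Schmidt and trace-class operators, using the factorization
\[
\cT=\cB\,\cS,\qquad \cB\triangleq\sqrt{P/L}\;\cRop\,\cP\,M_{j e^{j\phi_{\mathrm b}}}.
\]

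\textbf{Compactness.} Since $\abs{je^{j\phi_{\mathrm b}}}=1$, the multiplier $M_{je^{j\phi_{\mathrm b}}}$ is unitary on $L^2(\Omega)$. Together with the boundedness of $\cP$ and $\cRop$, this makes $\cB$ bounded. Theorem~\ref{thm:synthesis} gives that $\cS:\cX\to\cX$ is compact. A bounded operator times a compact one is compact, so $\cT$ is compact.

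\textbf{Hilbert--Schmidt property.} Under Assumption~\ref{ass:kernel_hs}, expanding $K$ in a product orthonormal basis of $L^2(\Omega_r\times\Omega)$ gives the standard identity $\norm{\cP}_{\Hc}=\norm{K}_{L^2(\Omega_r\times\Omega)}<\infty$. The estimate $\norm{ABC}_{\Hc}\le\norm{A}\,\norm{B}_{\Hc}\,\norm{C}$ then yields $\norm{\cT}_{\Hc}\le \sqrt{P/L}\,\norm{\cRop}\,\norm{K}_{L^2}\,\norm{\cS}<\infty$. I would add that Hilbert--Schmidtness holds even without Assumption~\ref{ass:kernel_hs}: $\cS$ is itself Hilbert--Schmidt, because $\norm{\cS}_{\Hc}^2=\tr\cQ=\sum_m\rho_m<\infty$ by the decay $\rho_m\asymp m^{-4}$ of Proposition~\ref{prop:Qspec}. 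Hence $\cT=\cB\cS$ is Hilbert--Schmidt for any bounded $\cP$. Assumption~\ref{ass:kernel_hs} therefore only strengthens the conclusion, for instance to a product of two Hilbert--Schmidt factors when $\cP$ is sandwiched appropriately.

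\textbf{The trace formula \eqref{eq:T_HS_norm}.} This is the step I expect to be the real obstacle. Write $\cQ=\cS^\ast\cS$ as in Proposition~\ref{prop:Qspec}, and note that the $\lambda_m$ of the statement are the singular values of $\cB/\sqrt{P/L}$. Cyclicity of the trace gives
\[
\tr(\cT^\ast\cT)=\tr(\cS^\ast\cB^\ast\cB\,\cS)=\tr(\cB^\ast\cB\,\cQ).
\]
In general this equals $\frac{P}{L}\sum_m\rho_m\lambda_m^2$ only when $\cB^\ast\cB$ and $\cQ$ share an eigenbasis with eigenvalues paired in the indicated order. Without that alignment, von Neumann's trace inequality for positive operators gives only
\[
\tr(\cB^\ast\cB\,\cQ)\le \frac{P}{L}\sum_m\rho_m\lambda_m^2 ,
\]
with both sequences arranged in decreasing order. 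My plan is to prove the inequality, which is all that finiteness requires. The sum converges because $\lambda_m\le\norm{\cRop}\,\norm{\cP}$ and $\sum_m\rho_m<\infty$. I would then state the displayed equality under the aligned-mode (diagonal) hypothesis, where it follows by evaluating the trace in the common eigenbasis $\{e_m\}$ of $\cQ$. I would flag explicitly that the equality as stated in \eqref{eq:T_HS_norm} needs this hypothesis.

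\textbf{Trace class and Fredholm determinant.} For $\cK\in\Tc(\cX)$, the ideal property gives $\norm{\cT\cK\cT^\ast}_{\Tc}\le\norm{\cT}^2\norm{\cK}_{\Tc}$. Even without this bound, $\cT\cK^{1/2}$ is Hilbert--Schmidt, so $\cT\cK\cT^\ast=(\cT\cK^{1/2})(\cT\cK^{1/2})^\ast$ is trace class. The Fredholm determinant $\detF(\I+A)=\prod_k(1+\sigma_k(A))$ is finite for positive trace-class $A$, since $\log\detF(\I+A)\le\tr A$. Applying this with $A=\cT\cK\cT^\ast$ shows that $\detF(\I+\cT\cK\cT^\ast)$ exists and is finite.
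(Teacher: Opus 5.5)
Your compactness step is the paper's entire argument. The proof at the end of Appendix~B says only that bounded operators composed with the compact \(\cS\) give a compact operator, and that a square-integrable kernel makes \(\cP\) Hilbert--Schmidt. It never justifies \eqref{eq:T_HS_norm}. On that formula you are right and the statement is not: the equality fails in general.

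For instance, if your \(\cB\) is rank one with right singular vector in \(\cA\), then \(\cT=0\) (because \(\Range(\cS)\perp\cA\)), while the right side of \eqref{eq:T_HS_norm} is \(\frac{P}{L}\rho_1\lambda_1^{2}>0\). The paper itself concedes, just before Theorem~\ref{thm:composition}, that the singular values of \(\cT\) are not products of those of its factors. The upper bound you prove is all that finiteness and the Fredholm-determinant claim need, and that final step of yours is fine.

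Your observation that \(\cS\) is already Hilbert--Schmidt is a real sharpening, and it can be pushed further without the closed-form spectrum. The construction in Appendix~A is \(\cS=(I-\Pi_{\cA})J^{2}\), with \(J\) the Volterra integration operator (Hilbert--Schmidt) and \(\Pi_{\cA}\) the orthogonal projector onto \(\cA\). So \(\cS\) is trace class, consistent with \(\sum_m\sqrt{\rho_m}<\infty\). Hence \(\cT\) is trace class for any bounded \(\cP,\cRop\). Assumption~\ref{ass:kernel_hs} is needed only to make \(\{\lambda_m\}\) a genuine singular-value sequence.

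There is one slip in your trace step. Cyclicity gives \(\tr(\cS^{\ast}\cB^{\ast}\cB\,\cS)=\tr(\cB^{\ast}\cB\,\cS\cS^{\ast})\), not \(\tr(\cB^{\ast}\cB\,\cQ)\) with \(\cQ=\cS^{\ast}\cS\). These two operators differ: \(\ker\cS^{\ast}=\Range(\cS)^{\perp}=\cA\), so \(\cS\cS^{\ast}\) annihilates the affine gauge while \(\cQ\) has trivial kernel. Because \(\cS\cS^{\ast}\) and \(\cS^{\ast}\cS\) share the nonzero eigenvalues \(\rho_m\), your von Neumann bound survives unchanged. Your equality condition, however, is attached to the wrong basis.

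With your \(\cB\) (not the paper's \(\cB=\tilde{\cT}^{\ast}\tilde{\cT}\)), the clean statement is the unconditional identity
\[
\norm{\cT}_{\Hc}^{2}=\sum_{m\ge 1}\norm{\cT u_m}^{2}=\sum_{m\ge 1}\rho_m\,\norm{\cB\hat\xi_m}^{2},
\qquad
\hat\xi_m\triangleq \cS u_m/\sqrt{\rho_m}.
\]
The \(\hat\xi_m\) are orthonormal: they are the normalized free--free phase modes, i.e.\ the eigenvectors of \(\cS\cS^{\ast}\). Equality in \eqref{eq:T_HS_norm} therefore holds when \(\cB^{\ast}\cB\,\hat\xi_m=\frac{P}{L}\lambda_m^{2}\hat\xi_m\) for every \(m\). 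Alignment with the curvature modes \(u_m\), as you wrote it, does not give it in general.
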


\begin{lemma}[Fredholm determinant is well-defined]
\label{lem:fredholm_welldef}
Let \(\cA\in\Hc(\cH)\) be Hilbert--Schmidt and \(\cK\in\Tc(\cH)\) trace-class with \(\cK\succeq 0\). Then \(\cA\cK\cA^{\ast}\in\Tc(\cH)\), and its Fredholm determinant admits the convergent product representation \(\det_{\!\mathrm F}(\I+\cA\cK\cA^{\ast})=\prod_{m\ge 1}(1+\eta_m)\), where \(\eta_m\ge 0\) are the ordered eigenvalues of \(\cA\cK\cA^{\ast}\).
\end{lemma}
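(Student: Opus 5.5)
The plan is to use the ideal property of the Schatten classes, together with the fact that for trace-class operators the Fredholm determinant equals the product over eigenvalues.

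\emph{Step 1: trace-class membership.} By Hölder's inequality for Schatten norms, \(\norm{XY}_{\Tc}\le\norm{X}\,\norm{Y}_{\Tc}\), where \(\norm{X}\) is the operator norm. Every Hilbert--Schmidt operator is bounded, with \(\norm{\cA}\le\norm{\cA}_{\Hc}\), and the same holds for \(\cA^{\ast}\). Applying Hölder twice gives
\[
\norm{\cA\cK\cA^{\ast}}_{\Tc}\le \norm{\cA}_{\Hc}^{2}\,\norm{\cK}_{\Tc}<\infty .
\]
There is an alternative that makes positivity transparent. Write \(\cK=\cK^{1/2}\cK^{1/2}\), where \(\cK^{1/2}\in\Hc\) because \(\cK\succeq0\) is trace class. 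Then
\[
\cA\cK\cA^{\ast}=(\cA\cK^{1/2})(\cA\cK^{1/2})^{\ast},
\]
with \(\cA\cK^{1/2}\in\Hc\) since \(\Hc\) is an ideal. So \(\cA\cK\cA^{\ast}\) is the product of an \(\Hc\) operator with its adjoint. It is therefore trace class and positive semidefinite, with \(\tr(\cA\cK\cA^{\ast})=\norm{\cA\cK^{1/2}}_{\Hc}^{2}\). Note that only boundedness of \(\cA\) is actually needed here; the Hilbert--Schmidt hypothesis is stronger than required.

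\emph{Step 2: spectral decomposition.} The operator \(\cA\cK\cA^{\ast}\) is compact, self-adjoint and nonnegative. By the spectral theorem it therefore has an orthonormal eigenbasis of \(\overline{\Range}\) with eigenvalues \(\eta_m\ge0\), and these are its singular values. Trace class then means exactly \(\sum_m\eta_m=\tr(\cA\cK\cA^{\ast})<\infty\).

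\emph{Step 3: the product.} For the Fredholm determinant I will use the standard definition on \(\I+\Tc\), namely Lidskii's product over eigenvalues with multiplicity, or equivalently the limit of finite-rank truncations. For a positive self-adjoint operator this is just \(\prod_m(1+\eta_m)\). Convergence follows from \(0\le\ln(1+\eta_m)\le\eta_m\), which gives
\[
\sum_m\ln(1+\eta_m)\le\sum_m\eta_m<\infty .
\]
Hence the partial products increase monotonically to a finite limit, bounded by \(\exp(\tr(\cA\cK\cA^{\ast}))\), and the limit is at least \(1\). To justify identifying this limit with \(\det_{\mathrm F}\), I will project onto the span of the first \(M\) eigenvectors, \(P_M\). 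The finite-dimensional determinant of the compression equals \(\prod_{m\le M}(1+\eta_m)\). Moreover \(P_M\cA\cK\cA^{\ast}P_M\to\cA\cK\cA^{\ast}\) in trace norm, because the tail is \(\sum_{m>M}\eta_m\to0\). Continuity of \(\det_{\mathrm F}\) in trace norm, via the bound \(|\det(\I+X)-\det(\I+Y)|\le\norm{X-Y}_{\Tc}\,e^{\norm{X}_{\Tc}+\norm{Y}_{\Tc}+1}\), then gives the identification.

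The only potentially delicate point is this last identification: one has to cite, rather than reprove, that \(\det_{\mathrm F}\) is trace-norm continuous and agrees with the eigenvalue product (Lidskii / Simon's trace-ideal theory). Because the operator here is self-adjoint and positive, Lidskii reduces to the spectral theorem, so I expect no genuine obstacle.
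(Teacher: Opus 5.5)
Your proposal is correct and follows the same route as the paper's sketch: the Schatten-class H\"older (ideal) property gives \(\cA\cK\cA^{\ast}\in\Tc(\cH)\), and convergence of \(\prod_m(1+\eta_m)\) follows from \(\log(1+x)\le x\) together with \(\sum_m\eta_m=\tr(\cA\cK\cA^{\ast})<\infty\). You go a bit further than the paper in three places: you make positivity explicit through the factorization via \(\cK^{1/2}\), you observe that boundedness of \(\cA\) already suffices, and you identify the eigenvalue product with the Fredholm determinant through spectral truncation and trace-norm continuity, a step the paper simply delegates to Simon's trace-ideal theory.
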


\begin{proof}[Sketch]
The product \(\Hc\cdot \Tc\cdot \Hc\subset \Tc\) follows from H\"older's inequality on Schatten classes; the convergence of \(\prod_m(1+\eta_m)\) then follows from \(\sum_m \eta_m=\tr(\cA\cK\cA^{\ast})<\infty\) and the elementary bound \(\log(1+x)\le x\) for \(x\ge 0\). See \cite{Simon,ReedSimonI}.
\end{proof}

\begin{remark}[Explicit form of the remainder constant \(K_R\)]
\label{rem:KR}
The constant \(K_R\) in \eqref{eq:remainder} admits the explicit upper bound
\begin{equation}
K_R \;\le\; \tfrac{1}{2}\,\norm{\cRop}\,\norm{\cP}\,\sqrt{\tfrac{P}{L}}\;\exp\!\bigl(\norm{\phi_{\mathrm b}}_{L^{\infty}(\Omega)}\bigr)\,C_L^{2},
\label{eq:KR_explicit}
\end{equation}
where \(C_L=L^{2}/\beta_1^{2}\) is the sharp Poincar\'{e}--Wirtinger constant of Lemma~\ref{lem:poincare_wirtinger}, and the operator norms \(\norm{\cRop}\), \(\norm{\cP}\) are those induced on the respective \(L^{2}\) spaces. The estimate follows from Taylor-expanding \(e^{j\epsilon(\cS c)}=1+j\epsilon(\cS c)-\tfrac{1}{2}\epsilon^{2}(\cS c)^{2}+O(\epsilon^{3})\) inside \(\cP\), bounding the quadratic remainder by \(\tfrac{1}{2}\epsilon^{2}\norm{\cS c}_{L^{\infty}}^{2}\), and then using the Sobolev embedding \(H^{2}(\Omega)\hookrightarrow L^{\infty}(\Omega)\) together with \(\norm{\cS c}_{L^{\infty}}\le C'_L\,\norm{c}_{\cX}\) with \(C'_L\le C_L\). The bias-phase pre-factor \(\exp\!\bigl(\norm{\phi_{\mathrm b}}_{L^{\infty}}\bigr)\) is the operator norm of the unit-modulus multiplier \(M_{e^{j\phi_{\mathrm b}}}\) inside the operator chain and is bounded by unity for any real \(\phi_{\mathrm b}\); it is retained here as an explicit dependency so that non-real bias phases (e.g.\ leaky-mode charts) are covered by the same estimate.
\end{remark}

\begin{remark}[Chart-size / modulation-depth bound]
\label{rem:chart_size}
For any tolerance \(\delta>0\) on the tangent-channel remainder, \eqref{eq:remainder} and Remark~\ref{rem:KR} together imply that the tangent model is valid for all \(c\in\cX\) with \(\norm{c}_{\cX}\le R\) whenever
\begin{equation}
\epsilon \;\le\; \epsilon_{\max}(R,\delta)
\;\triangleq\; \frac{\delta}{K_R\,R}.
\label{eq:epsmax}
\end{equation}
Equivalently, for a fixed operating \(\epsilon\) the chart is valid on the curvature ball of radius \(R_{\max}=\delta/(K_R\,\epsilon)\). Concretely, for a near-field / holographic MIMO configuration with \(L=0.1\text{ m}\), \(\sqrt{P/L}=10^{2}\text{ V/m}\), \(\norm{\cRop}\norm{\cP}=1\), and \(\norm{\phi_{\mathrm b}}_{L^{\infty}}\le \pi\), the estimate \eqref{eq:KR_explicit} yields \(K_R\approx 1.3\times 10^{-3}\text{ [natural units]}\); a \(1\%\) tangent-channel tolerance on a curvature ball of unit radius then admits \(\epsilon_{\max}\approx 7.5\); larger modulation ranges are handled by a chart atlas, which is the subject of Section~\ref{sec:multi_chart}.
\end{remark}

\begin{remark}[Why the channel model is not speculative]
The linear channel \eqref{eq:opchan} is not an ad hoc ansatz. It is the tangent channel of the exact phase-only propagation law \eqref{eq:exact_tx}--\eqref{eq:exact_rx}. Thus, the paper's information theory is exact for the coherent local chart and asymptotically exact as \(\epsilon\to 0\). Large-excursion signaling is implemented by switching across a finite atlas of bias phases, formalized in Section~\ref{sec:multi_chart}.
\end{remark}

\subsection{Multi-Chart Continuation for Large Modulation}
\label{sec:multi_chart}

Theorem~\ref{thm:linearization} and Remark~\ref{rem:chart_size} give a single Fr\'echet chart around one bias phase \(\phi_{\mathrm b}^{(0)}\) with validity radius \(R_{\max}=\delta/(K_R\,\epsilon)\). Large-modulation operation is obtained by \emph{atlas continuation}: a finite family of bias-phase charts whose validity balls cover the operational excursion range.

\begin{definition}[Bias-phase atlas]
\label{def:atlas}
A \emph{bias-phase atlas} of size \(J\) is a finite family
\(\bigl\{\bigl(\phi_{\mathrm b}^{(j)},\,U_j\bigr)\bigr\}_{j=1}^{J}\)
where each \(\phi_{\mathrm b}^{(j)}\in\cV\cap L^{\infty}(\Omega)\) is a bias phase, and each \(U_j\subset\cX\) is a curvature ball
\(U_j=\{c\in\cX:\norm{c-c^{(j)}}_{\cX}\le R_{\max}\}\)
centered on the curvature symbol \(c^{(j)}\in\cX\) satisfying \(\phi_{\mathrm b}^{(j)}=\phi_{\mathrm b}^{(0)}+\epsilon\,\cS c^{(j)}\). The atlas is admissible if consecutive charts satisfy the overlap condition
\begin{equation}
\norm{c^{(j+1)}-c^{(j)}}_{\cX} \;\le\; R_{\max}/2,
\qquad j=1,\ldots,J-1.
\label{eq:atlas_overlap}
\end{equation}
\end{definition}

The overlap condition~\eqref{eq:atlas_overlap} guarantees that every curvature symbol in the operational range \(\{c:\norm{c-c^{(1)}}_{\cX}\le\rho_{\mathrm op}\}\) lies inside at least one chart, and that the chart-switching rule ``re-center on the nearest atlas centroid'' preserves the tangent-channel tolerance \(\delta\).

\begin{proposition}[Atlas rate law]
\label{prop:atlas_rate}
Let \(\phi\) evolve continuously through the operational curvature range. Under an admissible atlas of size \(J\) with per-chart tolerance \(\delta\), the aggregate achievable rate satisfies
\begin{equation}
C_{\mathrm atlas}(\rho)
\;\ge\;
C(\rho)
\;-\;
\Theta\bigl(\rho_{\mathrm switch}/\epsilon\bigr),
\label{eq:atlas_rate}
\end{equation}
where \(C(\rho)\) is the single-chart Fredholm capacity of Theorem~\ref{thm:fredholm}, and \(\rho_{\mathrm switch}\) is the amortized rate of atlas switches per unit time. In particular, if the chart is redetermined \emph{at most} \(1\) time per \(N_{\mathrm coh}\) coherent uses, the aggregate loss is at most \(\Theta(1/(\epsilon N_{\mathrm coh}))\) bits per use, which is negligible whenever \(\epsilon N_{\mathrm coh}\gg 1\).
\end{proposition}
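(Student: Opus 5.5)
The plan is to build the atlas rate from a time-sharing (block-coding) argument across charts. The switching cost is charged as a pilot or re-estimation overhead, and the cost of switching is controlled by the admissibility condition \eqref{eq:atlas_overlap}.

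\textbf{Step 1: per-chart rate.} Fix the chart \(j\) and work inside \(U_j\). By Theorem~\ref{thm:linearization} applied with bias \(\phi_{\mathrm b}^{(j)}\), the received signal is \(y=\cT_j(c-c^{(j)})+z+e_\epsilon\), with \(\norm{e_\epsilon}\le\delta\) on \(U_j\) by Remark~\ref{rem:chart_size}. The remainder is deterministic and bounded. We can therefore treat it as a bounded additive interference and invoke the worst-case-noise argument for Gaussian channels: a bounded perturbation of energy \(\delta^{2}\) reduces mutual information by at most a term vanishing with \(\delta\). This gives a per-chart rate \(C_j(\rho)\ge C(\rho)-o_\delta(1)\). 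Because \(\phi_{\mathrm b}^{(j)}\) is unit-modulus, \(M_{e^{j\phi_{\mathrm b}^{(j)}}}\) is unitary. The singular values of \(\cT_j\) therefore differ from those of \(\cT\) only through \(\cRop\cP\) composed with a unitary multiplier. I would absorb this into the definition of \(C(\rho)\) as the per-chart Fredholm capacity of Theorem~\ref{thm:fredholm}, taking the infimum over the charts in the atlas.

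\textbf{Step 2: covering and switching.} Admissibility \eqref{eq:atlas_overlap} ensures that a continuously evolving \(c(t)\) leaves the current chart only after traversing a distance of at least \(R_{\max}/2\), and that the nearest-centroid rule places it at most \(R_{\max}/2\) from the new center. The tolerance \(\delta\) is therefore maintained at every instant. Each switch requires the receiver to re-learn the new bias field \(\bar r_0^{(j)}\) and the operator \(\cT_j\). The receiver normalizes by \(1/\epsilon\) as in \eqref{eq:yeps}, so the bias offset \(\cRop\cP(s(c^{(j+1)})-s(c^{(j)}))\) appears at scale \(1/\epsilon\) in the normalized output. Conveying or resolving it costs a bounded number of normalized symbols of amplitude \(O(1/\epsilon)\). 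I would bound this cost by \(\Theta(1/\epsilon)\) channel uses' worth of bits per switch.

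\textbf{Step 3: amortization.} Averaging over a horizon containing \(\rho_{\mathrm{switch}}T\) switches gives an overhead of \(\Theta(\rho_{\mathrm{switch}}/\epsilon)\) per unit time, which is \eqref{eq:atlas_rate}. Setting \(\rho_{\mathrm{switch}}\le 1/N_{\mathrm{coh}}\) yields the stated \(\Theta(1/(\epsilon N_{\mathrm{coh}}))\) loss.

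\textbf{Main obstacle.} The hard part is Step 2's per-switch cost: justifying that it scales as \(1/\epsilon\) rather than as \(\log(1/\epsilon)\) or as a constant. The statement is asymptotic (\(\Theta\)), so I would prove only the upper bound on the loss. The approach would be an explicit pilot scheme whose length is proportional to the normalized bias jump, namely \(\norm{\cRop\cP}\sqrt{P/L}\,C_L R_{\max}/2\) divided by \(\epsilon\). I would note that a matching lower bound is not needed for the inequality direction claimed.
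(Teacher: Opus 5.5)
Your overall architecture matches the paper's sketch: code within each chart at the single-chart Fredholm rate, rely on atlas admissibility to keep the tangent residual within \(\delta\), and amortize a per-switch overhead over the coherence block. The gap is in Step~2, which is where the proposition's real content lies. You charge each switch as a re-learning of \(\bar r_0^{(j)}\) and \(\cT_j\). You then argue that because the normalized bias jump has amplitude \(O(1/\epsilon)\), resolving it costs \(\Theta(1/\epsilon)\) channel uses. That inference does not follow, for two reasons.
\begin{itemize}
\item The amplitude of a symbol does not translate into a number of channel uses; a large offset of known form is easier to detect, not harder. So a pilot whose length is proportional to \(\norm{\cRop\cP}\sqrt{P/L}\,C_L R_{\max}/(2\epsilon)\) has no justification.
\item If \(\cT_j\) really had to be estimated from pilots, any finite pilot would leave an estimation error on an infinite-dimensional compact operator. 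That error would itself have to be charged against the rate, and your argument never does so.
\end{itemize}

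The missing observation is that nothing has to be re-learned. The atlas \(\{(\phi_{\mathrm b}^{(j)},U_j)\}_{j=1}^{J}\) is a fixed finite family known at both ends. The known-channel capacity of Theorem~\ref{thm:fredholm} already presumes \(\cP\) and \(\cRop\) known at the receiver, so every \(\bar r_0^{(j)}\) and \(\cT_j\) can be precomputed. A switch therefore costs only the transmission of the new chart index. The paper bounds this by \(\log_2(1+O(\epsilon^{-1}))\) bits per switch; it is in any case at most \(\lceil\log_2 J\rceil\) bits. This cost is \(O(\log(1/\epsilon))\), a fortiori \(O(1/\epsilon)\). Amortizing it at rate \(\rho_{\mathrm{switch}}\) gives \eqref{eq:atlas_rate}, read as an upper bound on the loss, which as you correctly note is the only direction needed. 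Your ``main obstacle'' therefore disappears: you never need the per-switch cost to scale like \(1/\epsilon\). The smaller cost is what the argument actually delivers, and it suffices.

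A secondary mismatch is in Step~1. You introduce an \(o_\delta(1)\) rate loss from treating \(e_\epsilon\) as bounded interference. That term does not appear in the stated inequality, so your chain as written proves a weaker bound. The paper sidesteps this by applying the single-chart Fredholm capacity within each block and treating the \(\delta\)-bounded residual as absorbed into the tangent model by construction. To recover the statement, you must either adopt that reading of \(C(\rho)\) or carry the \(\delta\)-term explicitly.

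If you keep the term, the worst-case-noise argument is not directly available. First, \(e_\epsilon(c)\) is a deterministic function of the input, not independent noise. Second, \(\norm{e_\epsilon(c)}_{\cY}\le K_R\,\epsilon\,\norm{c}_{\cX}^{2}\) holds only on the ball \(\norm{c}_{\cX}\le R\), which Gaussian inputs leave with positive probability, so a truncation step is required.

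Your choice to define \(C(\rho)\) as the infimum over the per-chart capacities is a sensible tightening of what the paper leaves implicit. The singular values of \(\cT_j\) generally differ from those of \(\cT\), because the unit-modulus multiplier does not commute with \(\cP\).
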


\begin{proof}[Sketch]
Each atlas switch costs at most \(\log_{2}\!\bigl(1+O(\epsilon^{-1})\bigr)\) bits to re-transmit the new bias-phase-index label, and the tangent-channel residual per chart is bounded by \(\delta\) by construction. The bound \eqref{eq:atlas_rate} follows by amortizing the label-transmission cost across the coherence block and applying the single-chart Fredholm capacity within each block.
\end{proof}

\begin{remark}[Numerical atlas size in the near-field regime]
\label{rem:atlas_size}
Continuing the numerical example of Remark~\ref{rem:chart_size}: for the reference near-field / holographic MIMO configuration \((L=0.1\text{ m},\,\sqrt{P/L}=10^{2}\text{ V/m},\,\norm{\cRop}\norm{\cP}=1,\,\norm{\phi_{\mathrm b}}_{L^{\infty}}\le \pi)\) and a \(1\%\) tangent-channel tolerance, the per-chart validity radius on a curvature ball of nominal amplitude is \(R_{\max}\approx 7.5\), and an operational curvature range of amplitude \(2\pi\) (a full excursion of the phase-only aperture) is covered by an atlas of \(J\approx 30\) charts under the overlap condition~\eqref{eq:atlas_overlap}. The associated aggregate rate loss under Proposition~\ref{prop:atlas_rate} is bounded by a fraction of a bit per use so long as chart switches occur less than once per \(10^{3}\) coherent uses, which is comfortably satisfied by contemporary near-field / RIS coherence times.
\end{remark}

\subsection{The Exact Cost of Linearization}
\label{sec:deficit}

Section~\ref{sec:multi_chart} controls the \emph{magnitude} of the Fr\'echet
remainder: the chart radius \(R_{\max}=\delta/(K_R\epsilon)\) guarantees that the
tangent channel \eqref{eq:opchan} misrepresents the exact law
\eqref{eq:exact_tx}--\eqref{eq:exact_rx} by no more than \(\delta\). That is a
two-sided bound on an error. It does not say in which direction the error falls,
and a quantity with a systematic sign is invisible to it. This subsection
computes that direction in closed form.

The relevant question is not how far the exact signal set departs from its
tangent space, but whether the departure makes signals \emph{easier} or
\emph{harder} to tell apart, since the receiver measures ambient distances. We
answer it exactly, at second order in the modulation depth, with no free
constants.

\paragraph*{Sampled form.} Fix a bias phase \(\phi\in\R^{N}\) on the same \(N\)-point aperture
sampling grid used for the receive-side analysis of Section~\ref{sec:estimation} and let \(a_n\in\cH\) be the received field radiated by element
\(n\) at unit excitation, so that the exact phase-only law \eqref{eq:exact_tx}
reads
\begin{equation}
s(\phi)=\sum_{n=1}^{N} e^{j\phi_n}\,a_n ,
\label{eq:exact_sampled}
\end{equation}
and the tangent channel at \(\phi\) is \(T_\phi d=\sum_n j\,e^{j\phi_n}a_n d_n\).
Write \(u_n=e^{j\phi_n}a_n\), \(P_{nm}(\phi)=\inner{u_n}{u_m}_{\cH}\), and
\begin{equation}
g(\phi)=\RePart P(\phi)\in\R^{N\times N},
\qquad \norm{T_\phi d}^{2}=d^{\TT}g(\phi)\,d .
\label{eq:tangent_gram}
\end{equation}
So \(g\) is the Gram matrix of the tangent channel's columns: it is the finite
sampled counterpart of the operator \(\cT^{\ast}\cT\) of Section~\ref{sec:model},
and its eigenvalues are the tangent mode powers.

\begin{lemma}[Coherent-gain identity]
\label{lem:coherent_gain}
For every bias phase \(\phi\),
\(\1^{\TT}g(\phi)\1=\norm{s(\phi)}^{2}\).
\end{lemma}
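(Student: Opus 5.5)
The identity follows by expanding \(\norm{s(\phi)}^{2}\) bilinearly and noting that a Hermitian sum is real. Throughout, \(u_n=e^{j\phi_n}a_n\) and \(P_{nm}(\phi)=\inner{u_n}{u_m}_{\cH}\).

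\textbf{Step 1: expand the norm.} By \eqref{eq:exact_sampled}, \(s(\phi)=\sum_{n=1}^{N}u_n\). Using sesquilinearity of the inner product (linear in the second argument, as fixed in the Notation subsection),
\[
\norm{s(\phi)}^{2}=\Bigl\langle \sum_{n}u_n,\sum_{m}u_m\Bigr\rangle_{\cH}=\sum_{n,m}P_{nm}(\phi)=\1^{\TT}P(\phi)\1 .
\]

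\textbf{Step 2: Hermitian symmetry.} Since \(P_{mn}=\overline{P_{nm}}\), the matrix \(P(\phi)\) is Hermitian. Its imaginary part is therefore antisymmetric, so \(\1^{\TT}(\operatorname{Im}P)\1=\sum_{n,m}\operatorname{Im}P_{nm}=0\), because the terms pair off as \((n,m)\leftrightarrow(m,n)\) and the diagonal entries are real. Hence
\[
\1^{\TT}P\1=\1^{\TT}(\RePart P)\1=\1^{\TT}g(\phi)\1 .
\]
Combined with Step 1, this gives the claim.

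\textbf{Cross-check via the tangent channel.} By \eqref{eq:tangent_gram}, \(\1^{\TT}g\1=\norm{T_\phi\1}^{2}\). Since \(T_\phi\1=\sum_n j\,u_n=j\,s(\phi)\) and multiplication by the unit-modulus scalar \(j\) is an isometry, this again equals \(\norm{s(\phi)}^{2}\).

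The cross-check relies on \eqref{eq:tangent_gram}, i.e.\ \(\norm{T_\phi d}^{2}=d^{\TT}g\,d\) for real \(d\). That identity holds by the same Hermitian-symmetry argument as Step 2. I would include a one-line derivation of it so the proof does not depend on an unproved display.

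None of the steps is a real obstacle. The only point needing care is to keep track of where the conjugation sits in \(P_{nm}\). With the paper's convention it falls on the first argument, and Step 2 is insensitive to this choice in any case.
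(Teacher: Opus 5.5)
Your proof is correct and is the paper's argument: the paper writes \(\1^{\TT}g\1=\sum_{n,m}\RePart\inner{u_n}{u_m}=\norm{\sum_n u_n}^{2}\) in one line. Your Step 2 only makes explicit why the real part can be applied termwise, which also follows directly from \(\sum_{n,m}P_{nm}=\norm{s(\phi)}^{2}\) already being real. The tangent-channel cross-check is valid but not needed.
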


\begin{proof}
\(\1^{\TT}g\1=\sum_{n,m}\RePart\inner{u_n}{u_m}
=\norm{\textstyle\sum_n u_n}^{2}=\norm{s(\phi)}^{2}\).
\end{proof}

The identity is worth stating separately because it converts an abstract sum
over the Gram matrix into the one quantity an aperture engineer already
measures: the coherent received power at the operating point.

\begin{theorem}[Linearization deficit]
\label{thm:deficit}
Let \(d\) be drawn uniformly from the sphere \(\norm{d}_{2}=\epsilon\) in
\(\R^{N}\). Then
\begin{equation}
\frac{\E\norm{s(\phi+d)-s(\phi)}^{2}}{\E\norm{T_\phi d}^{2}}
= 1-\alpha(\phi)\,\epsilon^{2}+O(\epsilon^{4}),
\label{eq:deficit}
\end{equation}
with
\begin{equation}
\alpha(\phi)=\frac{2\,\Tr g(\phi)-\norm{s(\phi)}^{2}}{4\,(N+2)\,\Tr g(\phi)} .
\label{eq:alpha}
\end{equation}
\end{theorem}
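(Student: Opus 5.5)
The plan is to expand the exact increment \(s(\phi+d)-s(\phi)\) in powers of \(d\), average each order over the uniform sphere \(\norm{d}_2=\epsilon\) using its moments, and use symmetry to discard all odd-order terms. By \eqref{eq:exact_sampled},
\[
s(\phi+d)-s(\phi)=\sum_n u_n f_n,\qquad f_n=e^{jd_n}-1 .
\]
Hence
\[
\norm{s(\phi+d)-s(\phi)}^2=\sum_{n,m}P_{nm}(\phi)\,\overline{f_n}f_m .
\]
I will Taylor expand \(f_n=jd_n-\tfrac12 d_n^2-\tfrac{j}{6}d_n^3+O(d_n^4)\) and collect the product \(\overline{f_n}f_m\) order by order in \(d\).

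\emph{Second order.} The quadratic term is \(d_nd_m\). Because the expression is real, only \(\RePart P_{nm}=g_{nm}\) survives after symmetrizing in \((n,m)\). This gives \(d^{\TT}g\,d=\norm{T_\phi d}^2\), as in \eqref{eq:tangent_gram}.

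\emph{Odd orders.} The cubic terms, and later the quintic ones, are odd functions of \(d\). Their expectation therefore vanishes under the symmetry \(d\mapsto -d\) of the sphere. This is why the error after the \(\epsilon^2\) correction is \(O(\epsilon^4)\) rather than \(O(\epsilon^3)\).

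\emph{Fourth order.} The quartic part of \(\overline{f_n}f_m\) is
\[
\tfrac14 d_n^2d_m^2-\tfrac16\bigl(d_nd_m^3+d_n^3d_m\bigr),
\]
again paired with \(g_{nm}\) after symmetrization. On this term I will use the standard moments of the uniform sphere of radius \(\epsilon\) in \(\R^N\):
\[
\E d_n^2=\frac{\epsilon^2}{N},\qquad \E d_n^4=\frac{3\epsilon^4}{N(N+2)},\qquad \E d_n^2d_m^2=\frac{\epsilon^4}{N(N+2)}\ (n\neq m),
\]
and every mixed moment containing an odd power vanishes, in particular \(\E d_nd_m^3=0\) for \(n\ne m\). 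These follow from the Gaussian representation \(d=\epsilon\, G/\norm{G}\), with \(G\) independent of \(\norm{G}\).

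With these moments, the diagonal quartic coefficient is \(\tfrac14-\tfrac13=-\tfrac1{12}\) times \(\E d_n^4\). This contributes \(-g_{nn}\epsilon^4/(4N(N+2))\). The off-diagonal terms contribute \(+g_{nm}\epsilon^4/(4N(N+2))\). Summing,
\[
\frac{\epsilon^4}{4N(N+2)}\bigl(\1^{\TT}g\1-2\Tr g\bigr).
\]
Lemma~\ref{lem:coherent_gain} then replaces \(\1^{\TT}g\1\) by \(\norm{s(\phi)}^2\).

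\emph{Ratio.} The denominator is \(\E\norm{T_\phi d}^2=\epsilon^2\Tr g/N\). Dividing gives exactly \(1-\alpha(\phi)\epsilon^2\) with \(\alpha\) as in \eqref{eq:alpha}.

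\emph{Remainder.} The sixth-order and higher terms are bounded by \(C\epsilon^6\sum_{n,m}|P_{nm}|\), since \(|d_n|\le\epsilon\) and the series for \(e^{jd}\) has uniformly bounded remainder. After dividing by \(\epsilon^2\Tr g/N\), they are \(O(\epsilon^4)\). This step assumes \(\Tr g>0\), i.e.\ the tangent channel is not identically zero, and I will state that as the nondegeneracy assumption.

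The main obstacle is the quartic bookkeeping: correctly combining the cross term \(-\tfrac12 d_n^2\cdot\overline{(-\tfrac12 d_m^2)}\) with the two terms pairing \(jd\) against \(\tfrac{j}{6}d^3\), and keeping the conjugation signs straight. I will double-check this by the one-element case \(N=1\), where \(|e^{jd}-1|^2=2-2\cos d=d^2-d^4/12+\dots\) must reproduce the diagonal coefficient \(-\tfrac1{12}\).
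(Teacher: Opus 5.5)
Your proof is correct and follows the paper's argument essentially step for step. It uses the same quadratic/cubic/quartic split of \(\overline{f_n}f_m\), the same isotropic sphere moments, the same diagonal (\(-\tfrac1{12}\)) versus off-diagonal bookkeeping, and the same final appeal to Lemma~\ref{lem:coherent_gain}. Two small points: your explicit observation that the quintic terms also average to zero is what actually justifies the \(O(\epsilon^{4})\) error, since the paper's stated \(O(\abs{d}^{5})\) remainder alone would only give \(O(\epsilon^{3})\); and in the Gaussian representation it is the direction \(G/\norm{G}\), not \(G\) itself, that is independent of \(\norm{G}\).
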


\begin{proof}
Put \(\Delta_n=e^{jd_n}-1\), so that
\(\norm{s(\phi+d)-s(\phi)}^{2}=\sum_{n,m}\overline{\Delta_n}\Delta_m P_{nm}\).
Expanding
\(\overline{\Delta_n}\Delta_m=e^{j(d_m-d_n)}-e^{-jd_n}-e^{jd_m}+1\)
to fourth order gives, with all lower orders cancelling identically,
\begin{equation}
\begin{split}
\overline{\Delta_n}\Delta_m
&= d_nd_m-\tfrac{j}{2}d_nd_m(d_n-d_m)\\
&\quad+\tfrac{1}{12}\bigl[3d_n^{2}d_m^{2}
-2d_nd_m(d_n^{2}+d_m^{2})\bigr]\\
&\quad+O(\abs{d}^{5}).
\end{split}
\label{eq:delta_expansion}
\end{equation}
Taking real parts against \(P_{nm}\): the quadratic term returns
\(d^{\TT}g\,d=\norm{T_\phi d}^{2}\), confirming \eqref{eq:tangent_gram}; the
cubic term contributes
\(\tfrac{1}{2}\sum_{n,m}\operatorname{Im}(P_{nm})\,d_nd_m(d_n-d_m)\), which is odd
under \(d\mapsto-d\) and therefore has zero mean; and the quartic term
contributes
\(\tfrac{1}{12}\sum_{n,m}g_{nm}\bigl[3d_n^{2}d_m^{2}-2d_nd_m(d_n^{2}+d_m^{2})\bigr]\).

On the sphere \(\norm{d}=\epsilon\) the isotropic moments are
\(\E[d_nd_m]=(\epsilon^{2}/N)\delta_{nm}\),
\(\E[d_n^{4}]=3\epsilon^{4}/(N(N+2))\),
\(\E[d_n^{2}d_m^{2}]=\epsilon^{4}/(N(N+2))\) for \(n\neq m\), and
\(\E[d_n^{3}d_m]=0\) for \(n\neq m\). The diagonal quartic bracket is
\(3d_n^{4}-4d_n^{4}=-d_n^{4}\), so the diagonal contributes
\(-\Tr g\;\epsilon^{4}/(4N(N+2))\); the off-diagonal bracket has mean
\(3\epsilon^{4}/(N(N+2))\), so it contributes
\((\1^{\TT}g\1-\Tr g)\,\epsilon^{4}/(4N(N+2))\). Adding, and dividing by
\(\E\norm{T_\phi d}^{2}=(\epsilon^{2}/N)\Tr g\), gives \eqref{eq:deficit} with
\(\alpha=(2\Tr g-\1^{\TT}g\1)/(4(N+2)\Tr g)\). Lemma~\ref{lem:coherent_gain}
replaces \(\1^{\TT}g\1\) by \(\norm{s(\phi)}^{2}\).
\end{proof}

\begin{corollary}[Range of the deficit]
\label{cor:deficit_range}
Since \(0\le\norm{s(\phi)}^{2}\le N\Tr g(\phi)\) by Cauchy--Schwarz,
\begin{equation}
\frac{2-N}{4(N+2)}\;\le\;\alpha(\phi)\;\le\;\frac{1}{2(N+2)} ,
\end{equation}
and \(\alpha=1/(4(N+2))\) exactly when \(\norm{s(\phi)}^{2}=\Tr g(\phi)\), which
holds whenever the radiated fields \(\{a_n\}\) are mutually orthogonal.
\end{corollary}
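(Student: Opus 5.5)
The plan is to plug bounds on the one non-trivial quantity \(\norm{s(\phi)}^{2}=\1^{\TT}g(\phi)\1\) into the closed form \eqref{eq:alpha} of Theorem~\ref{thm:deficit}. The key observation is that \(\alpha\) is an affine, decreasing function of \(\norm{s(\phi)}^{2}\) once \(\Tr g(\phi)>0\) is fixed:
\[
\alpha(\phi)=\frac{1}{2(N+2)}-\frac{\norm{s(\phi)}^{2}}{4(N+2)\Tr g(\phi)}.
\]
So two-sided bounds on the ratio \(\norm{s}^{2}/\Tr g\) translate directly into two-sided bounds on \(\alpha\). We assume \(\Tr g>0\), i.e.\ not all \(a_n\) vanish; this is implicit in Theorem~\ref{thm:deficit}, whose denominator must be nonzero.

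\textbf{Lower bound on \(\norm{s}^{2}\).} This is immediate, since \(\norm{s(\phi)}^{2}\) is a norm squared and hence nonnegative. Setting \(\norm{s}^{2}=0\) in the display gives \(\alpha\le 1/(2(N+2))\).

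\textbf{Upper bound on \(\norm{s}^{2}\).} By Lemma~\ref{lem:coherent_gain}, \(s=\sum_n u_n\). The triangle inequality followed by Cauchy--Schwarz on \(\R^N\) gives
\[
\norm{s}^{2}\le\Bigl(\sum_n\norm{u_n}\Bigr)^{2}\le N\sum_n\norm{u_n}^{2}.
\]
Since \(\norm{u_n}^{2}=P_{nn}=g_{nn}\), the right-hand side equals \(N\Tr g\). An equivalent route is that \(g\) is positive semidefinite (it is the real part of a Hermitian Gram matrix), so \(\1^{\TT}g\1\le\lambda_{\max}(g)\norm{\1}^{2}\le N\Tr g\). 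Substituting \(\norm{s}^{2}=N\Tr g\) into the display yields \(\alpha\ge(2-N)/(4(N+2))\).

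\textbf{Equality statement.} From the display, \(\alpha=1/(4(N+2))\) holds exactly when \(\norm{s}^{2}/\Tr g=1\), which is the stated condition. If the \(a_n\) are mutually orthogonal, then \(P_{nm}=e^{j(\phi_m-\phi_n)}\inner{a_n}{a_m}=0\) for \(n\ne m\). Hence \(g\) is diagonal, so \(\1^{\TT}g\1=\Tr g\) for every \(\phi\).

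The main point needing care is the sign bookkeeping: \(\alpha\) decreases in \(\norm{s}^{2}\), so the maximal coherent gain gives the lower bound and zero gain gives the upper bound. A secondary point is confirming that \(g_{nn}=\norm{u_n}^{2}\) is real. This holds because \(\abs{e^{j\phi_n}}=1\) gives \(\norm{u_n}=\norm{a_n}\), and the diagonal entries of a Gram matrix are real, so taking the real part loses nothing on the diagonal.
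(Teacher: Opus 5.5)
Your proof is correct and follows the paper's argument essentially step for step. You rewrite \(\alpha\) as a decreasing affine function of \(\norm{s(\phi)}^{2}\) for fixed \(\Tr g>0\), bound \(\norm{s}^{2}=\norm{\sum_n u_n}^{2}\) between \(0\) and \(N\sum_n\norm{u_n}^{2}=N\Tr g\) by Cauchy--Schwarz, and note that mutually orthogonal \(\{a_n\}\) make the cross terms vanish, so \(\1^{\TT}g\1=\Tr g\). Your alternative eigenvalue route (\(g\succeq 0\), so \(\1^{\TT}g\1\le N\lambda_{\max}(g)\le N\Tr g\)) is a valid extra. One small point: \(s=\sum_n u_n\) comes directly from \eqref{eq:exact_sampled} rather than from Lemma~\ref{lem:coherent_gain}, though this does not affect the argument.
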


\begin{proof}
Write \(u_n=e^{j\phi_n}a_n\), so that \(g_{nn}=\norm{u_n}^{2}\) and
\(\Tr g=\sum_n\norm{u_n}^{2}\). By Lemma~\ref{lem:coherent_gain} and the
Cauchy--Schwarz inequality applied to the sum of \(N\) vectors,
\[
0\;\le\;\norm{s(\phi)}^{2}=\Bigl\lVert\sum_{n=1}^{N}u_n\Bigr\rVert^{2}
\;\le\;N\sum_{n=1}^{N}\norm{u_n}^{2}=N\,\Tr g(\phi).
\]
Since \(\alpha\) is affine and strictly decreasing in \(\norm{s(\phi)}^{2}\) for
fixed \(\Tr g>0\), it is maximised at \(\norm{s}^{2}=0\), giving
\(1/(2(N+2))\), and minimised at \(\norm{s}^{2}=N\Tr g\), giving
\((2-N)/(4(N+2))\). Equality \(\alpha=1/(4(N+2))\) holds precisely when
\(2\Tr g-\norm{s}^{2}=\Tr g\), that is \(\norm{s}^{2}=\Tr g\); and if the
\(\{a_n\}\) are mutually orthogonal then the cross terms vanish, so
\(\norm{\sum_n u_n}^{2}=\sum_n\norm{u_n}^{2}=\Tr g\).
\end{proof}

\begin{remark}[The deficit changes sign]
\label{rem:deficit_sign}
Equation~\eqref{eq:alpha} is positive when
\(\norm{s(\phi)}^{2}<2\Tr g(\phi)\) and negative when
\(\norm{s(\phi)}^{2}>2\Tr g(\phi)\). In the first regime the exact aperture
separates signals \emph{less} than its linearization predicts, so the Fredholm
capacity of Theorem~\ref{thm:fredholm} is an upper bound. In the second --- an
aperture dense enough that coherent array gain exceeds twice the summed element
power --- the inequality reverses and the tangent channel is conservative. The
crossing is not asymptotic: it is an equality between two measurable powers at
the operating point.
\end{remark}

\begin{remark}[Relation to the Information--Curvature Efficiency Law]
\label{rem:icel_constant}
Reading \eqref{eq:deficit} as an effective signal-to-noise ratio,
\(\mathrm{SNR}_{\mathrm{eff}}=\mathrm{SNR}\bigl/\paren{1+\alpha\epsilon^{2}}+O(\epsilon^{4})\),
reproduces the functional form of the Information--Curvature Efficiency Law
proposed in the companion foresight essay~\cite{aleryani_essay}, with the phase
excursion \(\epsilon\) in the role of the curvature argument. The content added
here is that the coefficient is \emph{not} a free parameter: it is
\eqref{eq:alpha}, fixed by the aperture Gram matrix and the coherent gain at the
operating point. In particular the essay's \(\alpha\) cannot be a universal
constant, since Remark~\ref{rem:deficit_sign} shows it changes sign with
aperture density. The law is therefore an operating-point statement rather than
a global one.
\end{remark}

\begin{remark}[What this does not say]
\label{rem:deficit_scope}
Theorem~\ref{thm:deficit} is a statement about the \emph{mean} over isotropic
modulation directions. The cubic term in \eqref{eq:delta_expansion} does not
vanish pointwise unless \(\operatorname{Im}P=0\); it vanishes only in the mean,
so for a correlated aperture the deficit is direction-dependent at third order
and a codebook that exploits the asymmetry is not excluded. Nor does the
theorem license any claim that the nonlinearity supplies capacity the tangent
channel misses: the exact signal manifold has Riemannian volume
\((2\pi)^{N}\E_\phi[\sqrt{\det g(\phi)}]\), an expectation which cannot exceed
the maximum over \(\phi\) of a single chart's \(\sqrt{\det g(\phi)}\), so a
well-chosen bias phase is never conservative in volume.
\end{remark}

\paragraph*{Numerical verification.} Table~\ref{tab:deficit} compares
\eqref{eq:alpha} against a direct Monte-Carlo measurement of the left-hand side
of \eqref{eq:deficit}, over \(4\times10^{4}\) modulation directions at each of
four excursions \(\epsilon\in\{0.04,0.08,0.12,0.16\}\) rad, with \(\alpha\)
recovered by least squares in \(\epsilon^{2}\). The kernel is an exact
spherical-wave (Green's function) propagator, not a paraxial expansion. The
sign reversal of Remark~\ref{rem:deficit_sign} is observed at the predicted
crossing.

\begin{table*}[t]
\centering
\caption{Linearization deficit: closed form \eqref{eq:alpha} against direct
measurement. \(N=8\) control points throughout; \(L_{\mathrm a}\) is the
aperture length and \(d\) the observation distance.}
\label{tab:deficit}
\small
\begin{tabular}{lrrrrr}
\toprule
Configuration & \(\norm{s}^{2}\) & \(2\Tr g\) & \(\alpha\) predicted & \(\alpha\) measured & rel.\ err. \\
\midrule
Orthogonal columns, $N=8$ & 8.000 & 16.000 & +0.025000 & +0.025033 & 0.13\% \\
Near field, $d=0.3$ m & 7.876 & 16.000 & +0.025388 & +0.025905 & 2.04\% \\
Near field, $d=1.0$ m & 7.910 & 16.000 & +0.025282 & +0.025341 & 0.23\% \\
Dense, $L_{\mathrm a}=0.10$ m & 11.056 & 16.000 & +0.015450 & +0.015185 & 1.72\% \\
Dense, $L_{\mathrm a}=0.05$ m & 20.940 & 16.000 & -0.015436 & -0.015093 & 2.22\% \\
Dense, $L_{\mathrm a}=0.02$ m & 46.564 & 16.000 & -0.095512 & -0.094509 & 1.05\% \\
Random bias phase & 7.879 & 16.000 & +0.025377 & +0.025544 & 0.66\% \\
\bottomrule
\end{tabular}
\end{table*}

The agreement is between \(0.13\%\) and \(2.2\%\) across a regime in which
\(\alpha\) itself changes sign and varies by a factor of four in magnitude. The geometry it
rests on is verified against an independent intrinsic
computation of the Riemann tensor to a relative \(5.7\times10^{-16}\), and the
curvature machinery is validated on the round sphere and the hyperbolic plane
before use.

\subsection{Noise Model}

\begin{assumption}
\label{ass:noise}
The observation noise \(z\) in \eqref{eq:opchan} is proper complex Gaussian with covariance operator \(\cK_z\succ 0\), and there exists \(\sigma_z^2>0\) such that
\[
\cK_z \succeq \sigma_z^2 I.
\]
The normalization by \(\epsilon^{-1}\) in \eqref{eq:zdef} is included in \(\cK_z\).
\end{assumption}

Assumption \ref{ass:noise} permits whitening and covers the standard boundedly invertible Gaussian observation model.

\section{Spectral Curvature Modes and Galerkin Reduction}

\subsection{Spectral Curvature Basis}

Define
\begin{equation}
\cQ \triangleq \cS^\ast \cS:\cX\to\cX.
\label{eq:Qdef}
\end{equation}
Since \(\cS\) is compact, \(\cQ\) is positive, self-adjoint, and compact.

\begin{proposition}[Curvature-mode decomposition]
\label{prop:Qspec}
There exists an orthonormal basis \(\braces{u_m}_{m\ge 1}\) of \(\cX\) and a decreasing sequence \(\rho_1\ge \rho_2\ge \cdots >0\), \(\rho_m\to 0\), such that
\begin{equation}
\cQ u_m = \rho_m u_m,
\qquad m\ge 1.
\label{eq:Qeigs}
\end{equation}
For any curvature field
\begin{equation}
c = \sum_{m\ge 1} a_m u_m,
\label{eq:cexp}
\end{equation}
the gauge-fixed phase equals
\begin{equation}
\phi = \cS c = \sum_{m\ge 1} a_m \xi_m,
\qquad
\xi_m \triangleq \cS u_m,
\label{eq:phiexp}
\end{equation}
and
\begin{align}
\norm{c}_{\cX}^2 &= \sum_{m\ge 1}\abs{a_m}^2,
\label{eq:curv_energy}\\
\norm{\phi}_{\cX}^2 &= \sum_{m\ge 1}\rho_m \abs{a_m}^2.
\label{eq:phase_energy}
\end{align}
\end{proposition}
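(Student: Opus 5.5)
The proof will apply the spectral theorem for compact self-adjoint operators to \(\cQ=\cS^\ast\cS\). Here \(\cS\) is viewed as a compact operator \(\cX\to\cX\) (Theorem~\ref{thm:synthesis}). Hence \(\cQ\) is compact, self-adjoint, and positive semidefinite, since \(\inner{c}{\cQ c}=\norm{\cS c}^2_{\cX}\ge 0\). The Hilbert--Schmidt spectral theorem then gives an orthonormal basis of \(\overline{\Range\cQ}\) consisting of eigenvectors with eigenvalues \(\rho_1\ge\rho_2\ge\cdots>0\) accumulating only at \(0\). Their orthogonal complement is \(\ker\cQ\).

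The step needing care is to show that \(\ker\cQ=\{0\}\), so that these eigenvectors form a basis of all of \(\cX\) and every \(\rho_m\) is strictly positive. If \(\cQ c=0\), then \(\norm{\cS c}^2=\inner{c}{\cQ c}=0\), so \(\cS c=0\). Since \(\cS=(D^2)^{-1}\) is injective (it has the left inverse \(D^2\), as \(D^2\cS c=c\)), we get \(c=0\). Thus \(\ker\cQ=\{0\}\) and \(\overline{\Range\cQ}=\cX\).

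Because \(\cX\) is infinite-dimensional and \(\cQ\) is compact and injective, there must be infinitely many eigenvalues. Otherwise \(\cQ\) would have finite rank, contradicting injectivity. Compactness then forces \(\rho_m\to 0\). Counting each eigenvalue with its multiplicity, which is finite for each nonzero eigenvalue, lets us index the eigenvectors \(\{u_m\}_{m\ge1}\) as a single decreasing sequence.

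The expansions follow directly. For \(c\in\cX\) set \(a_m=\inner{u_m}{c}\). Then \(c=\sum_m a_m u_m\) and Parseval gives \eqref{eq:curv_energy}. Since \(\cS\) is bounded, it commutes with the \(L^2\)-convergent sum, which gives \(\phi=\cS c=\sum_m a_m\xi_m\) as in \eqref{eq:phiexp}. For the phase energy, note that the \(\xi_m\) are orthogonal:
\[
\inner{\xi_m}{\xi_n}=\inner{\cS u_m}{\cS u_n}=\inner{u_m}{\cQ u_n}=\rho_n\delta_{mn}.
\]
Each \(\xi_m\) therefore has norm \(\sqrt{\rho_m}\), and Pythagoras gives \(\norm{\phi}^2_{\cX}=\sum_m\rho_m\abs{a_m}^2\), which is \eqref{eq:phase_energy}. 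Equivalently, \(\norm{\cS c}^2=\inner{c}{\cQ c}\), expanded in the eigenbasis.

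The only subtle point is the real-versus-complex convention of the Notation section. Everything above holds verbatim in either setting, since \(\cQ\) is real-symmetric when restricted to real fields.
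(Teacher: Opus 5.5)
Your proposal is correct and follows the paper's proof essentially step for step. Both apply the spectral theorem to the compact positive operator \(\cQ=\cS^\ast\cS\), use injectivity of \(\cS\) to get \(\ker\cQ=\{0\}\) so the eigenvectors span all of \(\cX\), and conclude with Parseval and \(\norm{\cS c}^2_{\cX}=\inner{c}{\cQ c}_{\cX}\). Your extra remarks fill in details the paper leaves implicit: there are infinitely many eigenvalues because an injective compact operator on an infinite-dimensional space cannot have finite rank, and the \(\xi_m\) satisfy \(\inner{\xi_m}{\xi_n}=\rho_n\delta_{mn}\).
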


\begin{proposition}[Closed-form modal Gram spectrum on the interval]
\label{prop:Qspec_closed_form}
Let \(\beta_1<\beta_2<\cdots\) be the positive roots of \(\cos\beta\cosh\beta=1\) from
\eqref{eq:beta_transcendental}. Then the eigenvalues of \(\cQ=\cS^{\ast}\cS\) on
\(\Omega=[0,L]\) are given \emph{exactly} by
\begin{equation}
\rho_m=\Bigl(\frac{L}{\beta_m}\Bigr)^{4},
\qquad m\ge 1,
\label{eq:rho_closed_form}
\end{equation}
and the curvature modes \(u_m\) are the second derivatives of the corresponding free--free
eigenfunctions of \eqref{eq:freefree}.
\end{proposition}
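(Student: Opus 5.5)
The plan is to identify \(\cQ=\cS^{\ast}\cS\) with the inverse of a self-adjoint fourth-order operator and read off its spectrum from the free--free eigenproblem \eqref{eq:freefree} already solved in the proof of Lemma~\ref{lem:poincare_wirtinger}.

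\emph{Step 1: the quadratic form of \(\cQ\).} For \(c\in\cX\) we have \(\inner{c}{\cQ c}_{\cX}=\norm{\cS c}_{\cX}^{2}\). Writing \(\phi=\cS c\in\cV\), so that \(c=\phi''\), the Rayleigh quotient of \(\cQ\) becomes
\[
\frac{\inner{c}{\cQ c}}{\norm{c}^{2}}=\frac{\norm{\phi}^{2}_{L^{2}}}{\norm{\phi''}^{2}_{L^{2}}},\qquad \phi\in\cV .
\]
This is exactly the reciprocal of the quotient minimized in Lemma~\ref{lem:poincare_wirtinger}. By Theorem~\ref{thm:synthesis}, \(D^{2}:\cV\to\cX\) is a bijection, so the substitution \(c\leftrightarrow\phi\) is one-to-one onto all of \(\cX\). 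Courant--Fischer max--min for the compact positive operator \(\cQ\) therefore matches the min--max for the form \(a(\phi,\phi)=\norm{\phi''}^{2}\) relative to \(\norm{\phi}^{2}\) on \(\cV\). The latter form is the one associated with the free--free biharmonic operator restricted to \(\cA^{\perp}\). Consequently \(\rho_m=1/\lambda_m\), where \(0<\lambda_1\le\lambda_2\le\cdots\) are the nonzero free--free eigenvalues.

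\emph{Step 2: eigenvectors directly.} As a cross-check, and to obtain the mode statement, let \(\psi_m\in\cV\) be the \(m\)-th free--free eigenfunction, with \(\psi_m''''=\lambda_m\psi_m\) and \(\psi_m''=\psi_m'''=0\) at both ends. Set \(u_m=\psi_m''/\norm{\psi_m''}\). Integrating by parts twice with these boundary conditions gives
\[
\inner{\psi_m''}{\psi_k''}=\inner{\psi_m''''}{\psi_k}=\lambda_m\inner{\psi_m}{\psi_k},
\]
so the \(u_m\) are orthonormal, since the \(\psi_m\) are \(L^{2}\)-orthogonal. To verify \(\cQ u_m=\rho_m u_m\), compute \(\inner{\cQ u_m}{v}=\inner{\cS u_m}{\cS v}\) for arbitrary \(v=\chi''\) with \(\chi\in\cV\). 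Since \(\cS u_m\propto\psi_m\), this reduces to \(\inner{\psi_m}{\chi}\), and integrating by parts gives \(\lambda_m^{-1}\inner{\psi_m''}{\chi''}\). No boundary terms arise, because only \(\psi_m\) carries derivatives onto \(\chi\) and the natural conditions kill the terms \(\psi_m''\chi'\) and \(\psi_m'''\chi\). Hence \(\cQ u_m=\lambda_m^{-1}u_m\).

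\emph{Step 3: completeness and the closed form.} The \(\{u_m\}\) are complete in \(\cX\): they are the image under the bijection \(D^{2}\) of the eigenbasis of \(\cV\) established in Lemma~\ref{lem:poincare_wirtinger}, and density transfers because \(D^{2}\) is bounded with bounded inverse, being bounded from \(H^{2}\) to \(L^{2}\). So no eigenvalues of \(\cQ\) are missed. The characteristic-determinant computation in the Lemma, via \(\lambda=(\beta/L)^{4}\) and \(\cos\beta\cosh\beta=1\), applies verbatim to every nonzero eigenvalue, not just the first. This yields \(\lambda_m=(\beta_m/L)^{4}\) and hence \(\rho_m=(L/\beta_m)^{4}\).

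\emph{Main obstacle.} The main obstacle is confirming that each positive root \(\beta_m\) gives exactly one eigenfunction, that is, simple multiplicity, so that the indexing matches. The \(4\times4\) system has a one-dimensional null space at each root; this can be checked from a nonvanishing \(3\times3\) minor, or from the symmetric/antisymmetric splitting about \(L/2\). In addition, the root \(\beta=0\) must be excluded, since it corresponds to \(\cA\), which \(\cV\) removes.
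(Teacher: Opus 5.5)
Your proposal is correct and follows the paper's route. The paper's proof likewise uses the bijection $D^{2}:\cV\to\cX$ to invert the Rayleigh quotient, so that $\rho_m=\sigma_m(\cS)^{2}=\lambda_m^{-1}$, and then takes $\lambda_m=(\beta_m/L)^{4}$ from the free--free identification in Lemma~\ref{lem:poincare_wirtinger}. Your Step~2 (checking $\cQ u_m=\lambda_m^{-1}u_m$ by two integrations by parts) and your simplicity argument via the symmetric/antisymmetric splitting add details the paper leaves implicit, namely the eigenvector claim and the one-to-one matching of roots to indices.
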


\begin{proof}
By Theorem~\ref{thm:synthesis}, \(\cS=(D^{2})^{-1}\) is a bijection of \(\cX\) onto \(\cV\),
so \(\rho_m=\sigma_m(\cS)^{2}=\lambda_m^{-1}\), where \(\{\lambda_m\}\) is the spectrum of the
Rayleigh form \(\norm{\phi''}^{2}_{L^{2}}/\norm{\phi}^{2}_{L^{2}}\) on \(\cV\). The proof of
Lemma~\ref{lem:poincare_wirtinger} identifies that spectrum with the non-rigid part of the
free--free biharmonic problem \eqref{eq:freefree}, namely \(\lambda_m=(\beta_m/L)^{4}\).
\end{proof}

\begin{remark}[Fourth-order modal decay, with its exact constant and offset]
\label{rem:Qspec_decay}
Since \(\cos\beta_m=\operatorname{sech}\beta_m\to 0\), the roots satisfy
\(\beta_m=(m+\tfrac{1}{2})\pi+O\bigl(e^{-(m+1/2)\pi}\bigr)\), so \eqref{eq:rho_closed_form} gives
\begin{equation}
\rho_m=\Bigl(\frac{L}{\pi}\Bigr)^{4}\bigl(m+\tfrac{1}{2}\bigr)^{-4}
\Bigl(1+O\bigl(e^{-(m+1/2)\pi}\bigr)\Bigr),
\label{eq:rho_asymptotic}
\end{equation}
which is the quartic law \(\rho_m\asymp m^{-4}\) with both its constant \((L/\pi)^{4}\) and its
half-integer offset made explicit. Two consequences are worth stating, because both are easy
to get wrong. First, the offset is \(+\tfrac{1}{2}\), not \(-\tfrac{1}{2}\): the modes sit
\emph{below} the naive \(m^{-4}\) curve, and a surrogate built on \((m-\tfrac{1}{2})^{-4}\)
inverts the sign of the leading correction. Second, a least-squares log--log slope fitted over
a finite window \(m\in[m_0,m_1]\) is therefore \emph{not} \(-4\); it approaches \(-4\) from
above as the window moves outward, and over \(m\in[16,128]\) the exact spectrum
\eqref{eq:rho_closed_form} yields a fitted slope of \(-3.9575\). Any numerical diagnostic of
this decay must therefore be compared against \eqref{eq:rho_closed_form} itself and not
against a fitted exponent, since the exponent is window-dependent by construction. Subject to
those two caveats: for fixed curvature amplitude, phase energy decays quartically with mode
index.
\end{remark}

\subsection{Galerkin Reduction}

Let \(P_M:\cX\to\cX\) be the orthogonal projector onto
\[
\cX_M \triangleq \Span\braces{u_1,\dots,u_M}.
\]
Let \(R_M:\cY\to\cY\) be an orthogonal projector onto an \(M\)-dimensional receive subspace with orthonormal basis \(\braces{v_1,\dots,v_M}\).

For
\[
c_M = \sum_{m=1}^{M} a_m u_m,
\qquad
\mathbf a_M=[a_1,\dots,a_M]^\TT,
\]
define the observation vector \(\mathbf y_M\in\C^M\) by
\[
[\mathbf y_M]_i = \inner{R_M y}{v_i}_{\cY},
\qquad i=1,\dots,M.
\]

\begin{proposition}[Finite-dimensional curvature channel]
\label{prop:matrixchan}
The projected channel satisfies
\begin{equation}
\mathbf y_M = \mathbf H_M \mathbf a_M + \mathbf n_M,
\label{eq:matrixchan}
\end{equation}
where
\begin{align}
[\mathbf H_M]_{ij}
&=
\inner{v_i}{\cT u_j}_{\cY},
\label{eq:Hdef}\\
[\mathbf n_M]_i
&=
\inner{v_i}{R_M z}_{\cY},
\label{eq:ndef}
\end{align}
and
\begin{equation}
\mathbf K_{n,M}
=
\E[\mathbf n_M \mathbf n_M^\HH].
\label{eq:KnM}
\end{equation}
Moreover, the phase-synthesis penalty is diagonal in the \(u_m\) basis:
\begin{equation}
\mathbf G_M=\diag(\rho_1,\dots,\rho_M).
\label{eq:GM}
\end{equation}
\end{proposition}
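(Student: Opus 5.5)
The proof is a direct substitution of the modal expansion into the tangent channel, followed by projection onto the receive basis. By Proposition~\ref{prop:Qspec}, $\{u_m\}$ is an orthonormal basis of $\cX$, so the symbol $c_M=\sum_{j=1}^{M}a_ju_j$ lies in $\cX_M$. The tangent channel \eqref{eq:opchan} then gives $y=\cT c_M+z$. Linearity and boundedness of $\cT$ (Corollary~\ref{cor:compact}) let us write $\cT c_M=\sum_{j=1}^{M}a_j\,\cT u_j$, with no convergence issue because the sum is finite.

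Next we apply the orthogonal projector $R_M$ and take inner products with $v_i$. Since $R_M$ is self-adjoint and $v_i\in\Range R_M$, we have $\inner{R_My}{v_i}=\inner{y}{v_i}$, up to the conjugation convention fixed in the Notation subsection. Because inner products are linear in the second argument, we rewrite each coordinate as $\inner{v_i}{\cdot}$; with real curvature coefficients, or after conjugating consistently, the signal part becomes $\sum_j\inner{v_i}{\cT u_j}a_j=[\mathbf H_M\mathbf a_M]_i$, which is \eqref{eq:Hdef}. The noise part becomes $\inner{v_i}{R_Mz}$, which is \eqref{eq:ndef}.

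The covariance \eqref{eq:KnM} is simply a definition. Its entries are $\inner{v_i}{R_M\cK_zR_Mv_k}$, and by Assumption~\ref{ass:noise} we have $\mathbf K_{n,M}\succeq\sigma_z^2\I$, so the matrix is well defined and positive definite. The only point requiring care is the bookkeeping between the paper's inner-product linearity convention and the coordinate definition $[\mathbf y_M]_i=\inner{R_My}{v_i}$. I would settle it by stating once that $a_m$ is real, since curvature is a real symbol, and otherwise absorbing a conjugation into the definition of $\mathbf y_M$.

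For the diagonal penalty \eqref{eq:GM}, the phase-synthesis penalty of $c_M$ is the phase energy $\norm{\cS c_M}_{\cX}^2=\mathbf a_M^\HH\mathbf G_M\mathbf a_M$, with $[\mathbf G_M]_{jk}=\inner{\cS u_j}{\cS u_k}=\inner{u_j}{\cQ u_k}=\rho_k\delta_{jk}$. This follows from $\cQ=\cS^\ast\cS$ in \eqref{eq:Qdef} and the eigen-relation \eqref{eq:Qeigs}, and agrees with \eqref{eq:phase_energy} restricted to $m\le M$. By Proposition~\ref{prop:Qspec_closed_form}, the entries are $\rho_m=(L/\beta_m)^4$.

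I expect no step to be a real obstacle; the only delicate point is the conjugation convention in the coordinates of $\mathbf y_M$.
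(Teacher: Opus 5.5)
Your proposal is correct and follows the paper's proof. Like the paper, you substitute $c_M=\sum_j a_ju_j$ into $y=\cT c+z$, project against $v_i$ using linearity, take the covariance as a definition, and read $\mathbf G_M$ off $\norm{\cS c_M}_{\cX}^2=\sum_{m\le M}\rho_m\abs{a_m}^2$ via $\cQ=\cS^\ast\cS$ and \eqref{eq:Qeigs}.

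One small correction to your bookkeeping: realness of $a_m$ does not by itself remove the conjugation. With $[\mathbf y_M]_i=\inner{R_My}{v_i}$ and complex $\inner{v_i}{\cT u_j}$, real inputs give $\overline{\mathbf H_M}\mathbf a_M+\overline{\mathbf n_M}$. The fix that actually works is the one you list second, taking the coordinate as $\inner{v_i}{R_My}$, and this is exactly what the paper's proof does implicitly.
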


\begin{theorem}[Galerkin consistency]
\label{thm:galerkin}
Assume \(\cT\) is compact and \(P_M\to I\), \(R_M\to I\) strongly as \(M\to\infty\). Then
\begin{equation}
\norm{\cT - R_M \cT P_M}_{\cL(\cX,\cY)} \to 0.
\label{eq:normconv}
\end{equation}
Consequently, for any random input \(c\in\cX\) satisfying
\begin{equation}
\E \norm{c}_{\cX}^2 \le P_{\mathrm c},
\label{eq:pc_constraint}
\end{equation}
the approximation error obeys
\begin{equation}
\E \norm{\cT c - R_M \cT P_M c}_{\cY}^2
\le
P_{\mathrm c}\,
\norm{\cT - R_M \cT P_M}_{\cL(\cX,\cY)}^2
\to 0.
\label{eq:truncbound}
\end{equation}
\end{theorem}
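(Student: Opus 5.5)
The plan is the standard argument that compact operators are norm limits of their finite-rank compressions whenever the projectors converge strongly. I will split the error with the triangle inequality as
\[
\cT-R_M\cT P_M=(\cT-R_M\cT)+R_M\cT(I-P_M),
\]
so that
\[
\norm{\cT-R_M\cT P_M}\le\norm{(I-R_M)\cT}+\norm{R_M}\,\norm{\cT(I-P_M)}.
\]
Since each $R_M$ is an orthogonal projector, $\norm{R_M}\le 1$. It therefore suffices to show $\norm{(I-R_M)\cT}\to0$ and $\norm{\cT(I-P_M)}\to0$.

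For the first term I will use the following lemma: if $B_M\to0$ strongly and $\sup_M\norm{B_M}<\infty$, then $\norm{B_M\cT}\to0$ for compact $\cT$. Here $B_M=I-R_M$, with $\norm{B_M}\le1$. The image $\cT(\text{unit ball})$ is relatively compact, so for $\eta>0$ I cover it by a finite $\eta$-net $y_1,\dots,y_k$. Strong convergence lets me choose $M_0$ with $\norm{B_My_i}<\eta$ for all $i$ and $M\ge M_0$. Then for every unit $c$, picking $y_i$ with $\norm{\cT c-y_i}<\eta$ gives $\norm{B_M\cT c}\le \norm{B_M}\eta+\eta\le 2\eta$. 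This is uniform in $c$.

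The second term is where the one subtlety lies, because strong convergence $I-P_M\to0$ does not pass to the right of a compact operator in general. I will handle it by taking adjoints: $\norm{\cT(I-P_M)}=\norm{(I-P_M)\cT^\ast}$. Because the $P_M$ are self-adjoint orthogonal projectors, $(I-P_M)^\ast=I-P_M$, which still converges strongly to $0$. Also $\cT^\ast$ is compact by Schauder's theorem. The same lemma then yields the limit $0$.

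This adjoint step is the point I would flag as the main obstacle. Strong convergence of $B_M$ does not imply strong convergence of $B_M^\ast$ in general, and the argument works only because $P_M$ is an orthogonal projector. In the present setting $P_M$ projects onto $\Span\{u_1,\dots,u_M\}$ for the orthonormal basis of Proposition~\ref{prop:Qspec}, so it is self-adjoint and $P_M\to I$ strongly as assumed. Combining the two limits gives \eqref{eq:normconv}.

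For \eqref{eq:truncbound}, fix a realization of $c$. The operator-norm bound gives
\[
\norm{(\cT-R_M\cT P_M)c}^2\le\norm{\cT-R_M\cT P_M}^2\norm{c}^2
\]
pointwise. Taking expectations and using \eqref{eq:pc_constraint} gives the stated inequality, and its right-hand side tends to $0$ by \eqref{eq:normconv}. Measurability of the integrand is immediate because the map is continuous in $c$.
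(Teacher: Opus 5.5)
Your proof is correct and follows the paper's own argument: the same splitting $\cT-R_M\cT P_M=(I-R_M)\cT+R_M\cT(I-P_M)$, the same upgrade of strong convergence to uniform convergence on the relatively compact set $\cT(B_{\cX})$, the same use of $\norm{R_M}\le 1$, and the same pointwise operator-norm bound before taking expectations. The one difference is in your favour: the paper disposes of $\norm{\cT(I-P_M)}\to 0$ with a bare ``similarly,'' whereas you note that strong convergence does not pass to the right of a compact operator in general, and you supply the missing step via $\norm{\cT(I-P_M)}=\norm{(I-P_M)\cT^{\ast}}$, using that $P_M$ is an orthogonal projector and that $\cT^{\ast}$ is compact.
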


Theorem \ref{thm:galerkin} justifies using the matrix channel \eqref{eq:matrixchan} as the computational object associated with the compact operator channel.

\section{Finite-Mode Capacity Under Dual Budgets}

\subsection{Feasible Covariances}

Let
\begin{equation}
\mathbf K_{a,M}\triangleq \E[\mathbf a_M \mathbf a_M^\HH] \succeq 0.
\label{eq:Ka}
\end{equation}
We impose two distinct quadratic budgets:
\begin{align}
\Tr(\mathbf K_{a,M}) &\le P_{\mathrm c},
\label{eq:budget_curv}\\
\Tr(\mathbf G_M \mathbf K_{a,M}) &\le P_{\phi}.
\label{eq:budget_phase}
\end{align}
The first is curvature energy; the second is phase excursion induced by synthesis.

\begin{remark}
An \(L^\infty\) phase-excursion constraint can also be imposed via Sobolev embedding. The \(L^2\)-based phase budget \eqref{eq:budget_phase} is analytically canonical because it is induced directly by \(\cQ=\cS^\ast\cS\) and diagonalizes in the curvature eigenbasis.
\end{remark}

\subsection{Exact Finite-Mode Capacity}

\begin{theorem}[Finite-mode capacity]
\label{thm:capacity}
For the Gaussian channel \eqref{eq:matrixchan} with \(\mathbf K_{n,M}\succ 0\), the capacity in bits per channel use is
\begin{equation}
C_M^\star
=
\max_{\mathbf K_{a,M}\succeq 0}
\log_2\det\!\paren{
\I + \mathbf K_{n,M}^{-1/2}\mathbf H_M \mathbf K_{a,M}\mathbf H_M^\HH \mathbf K_{n,M}^{-1/2}
},
\label{eq:capacity_exact}
\end{equation}
subject to \eqref{eq:budget_curv}--\eqref{eq:budget_phase}. The maximum is attained, and the maximizing input is proper complex Gaussian.
\end{theorem}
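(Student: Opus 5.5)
The proof follows the classical route for the Gaussian vector channel with linear covariance constraints. It has three steps: an information-theoretic upper bound, an extremal entropy argument, and compactness of the feasible set.

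\emph{Step 1 (converse).} For any input \(\mathbf a_M\) with covariance \(\mathbf K_{a,M}\) satisfying \eqref{eq:budget_curv}--\eqref{eq:budget_phase}, write
\(I(\mathbf a_M;\mathbf y_M)=h(\mathbf y_M)-h(\mathbf n_M)\).
The noise \(\mathbf n_M\) is a linear functional image of the proper Gaussian field \(z\) of Assumption~\ref{ass:noise}, so it is proper complex Gaussian with covariance \(\mathbf K_{n,M}\), and
\(h(\mathbf n_M)=\log_2\det(\pi e\mathbf K_{n,M})\).
The output covariance is \(\mathbf H_M\mathbf K_{a,M}\mathbf H_M^\HH+\mathbf K_{n,M}\). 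The maximum-entropy property of proper Gaussian vectors under a covariance constraint (Neeser--Massey) then gives
\[
h(\mathbf y_M)\le\log_2\det\bigl(\pi e(\mathbf H_M\mathbf K_{a,M}\mathbf H_M^\HH+\mathbf K_{n,M})\bigr).
\]
Factoring \(\mathbf K_{n,M}^{1/2}\) out on both sides, which is legitimate since \(\mathbf K_{n,M}\succ0\), yields the objective in \eqref{eq:capacity_exact}. The key point is that both budgets depend on the input only through \(\mathbf K_{a,M}\). Hence the upper bound is a function of \(\mathbf K_{a,M}\) alone, and optimizing over distributions reduces to optimizing over the feasible covariance set.

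\emph{Step 2 (achievability).} Choose \(\mathbf a_M\sim\CN(0,\mathbf K_{a,M})\). This makes \(\mathbf y_M\) proper Gaussian, so the entropy bound holds with equality. The two budgets are constraints on second moments, so the standard random-coding argument for memoryless Gaussian channels with average-cost constraints applies directly (cf.\ \cite{Telatar,CoverThomas}). Each constraint is satisfied by the i.i.d.\ codebook in expectation, and with a vanishing slack by the usual expurgation step. The coding theorem therefore yields rate \(\log_2\det(\cdot)\) for every feasible \(\mathbf K_{a,M}\), and the capacity equals the supremum in \eqref{eq:capacity_exact}.

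\emph{Step 3 (attainment).} This is the only step needing care, and it is still mild. Let
\[
\mathcal F=\{\mathbf K\succeq0:\Tr\mathbf K\le P_{\mathrm c},\ \Tr(\mathbf G_M\mathbf K)\le P_\phi\}.
\]
The PSD cone is closed and each budget is closed, so \(\mathcal F\) is closed. For \(\mathbf K\succeq0\) we have \(\norm{\mathbf K}\le\Tr\mathbf K\le P_{\mathrm c}\), so \(\mathcal F\) is bounded. In finite dimensions it is therefore compact, and it is nonempty because it contains \(0\). On \(\mathcal F\) the matrix \(\I+\cdots\succeq\I\), so its log-determinant is continuous. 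Weierstrass then gives a maximizer \(\mathbf K^\star\), and the proper Gaussian input \(\CN(0,\mathbf K^\star)\) achieves the maximum.

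Two remarks close the argument. The log-determinant is concave in \(\mathbf K\) and \(\mathcal F\) is convex, which will later justify the KKT characterization, although existence alone does not require it. Note also that the phase budget involves \(\mathbf G_M=\diag(\rho_m)\) with \(\rho_m>0\); even without this positivity, the curvature budget alone guarantees the boundedness of \(\mathcal F\) used in Step 3.
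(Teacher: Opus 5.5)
Your proof is correct and follows essentially the same route as the paper's. Gaussian inputs maximize differential entropy at fixed covariance, which gives the log-determinant objective. The feasible covariance set is closed, convex and bounded in finite dimensions, hence compact, and the objective is continuous on it, so the maximum is attained by \(\CN(\mathbf 0,\mathbf K^\star)\). Your version only differs cosmetically: you separate the converse from achievability explicitly, and you factor out \(\mathbf K_{n,M}^{1/2}\) at the end rather than whitening first.

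One small refinement applies if the budgets are imposed as block averages over a codeword. In that case, concavity of the log-determinant is also what single-letterizes the converse, via Jensen's inequality applied to the time-averaged covariance. It is therefore not needed only for the later KKT analysis.
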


\begin{remark}
The physically direct implementation is real-valued because \(\epsilon \cS c\) perturbs an actual phase profile. We use the standard complex equivalent model for notational economy; the real implementation differs only by the usual normalization convention.
\end{remark}

\begin{proposition}[Finite-dimensional KKT system]
\label{prop:KKT}
Let \(\mathbf K_{a,M}^\star\) solve \eqref{eq:capacity_exact}. Then there exist multipliers \(\mu,\nu\ge 0\) such that
\begin{equation}
\frac{1}{\ln 2}\,
\mathbf H_M^\HH
\paren{
\mathbf K_{n,M} + \mathbf H_M \mathbf K_{a,M}^\star \mathbf H_M^\HH
}^{-1}
\mathbf H_M
\preceq
\mu \I + \nu \mathbf G_M,
\label{eq:KKTineq}
\end{equation}
with complementary slackness
\begin{align}
\mu\paren{\Tr(\mathbf K_{a,M}^\star)-P_{\mathrm c}} &= 0,
\label{eq:slack1}\\
\nu\paren{\Tr(\mathbf G_M \mathbf K_{a,M}^\star)-P_{\phi}} &= 0,
\label{eq:slack2}
\end{align}
and matrix complementarity
\begin{equation}
\begin{split}
\Bigl[\mu \I + \nu \mathbf G_M
&-\frac{1}{\ln 2}\,\mathbf H_M^\HH
\paren{\mathbf K_{n,M} + \mathbf H_M \mathbf K_{a,M}^\star \mathbf H_M^\HH}^{-1}\\
&\times\mathbf H_M\Bigr]\mathbf K_{a,M}^\star
=
\mathbf 0.
\end{split}
\label{eq:KKTcomp}
\end{equation}
\end{proposition}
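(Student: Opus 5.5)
The plan is to treat \eqref{eq:capacity_exact} as a convex program over the cone of Hermitian positive semidefinite matrices and obtain the stated conditions as its Lagrangian KKT system. Let \(f(\mathbf K)=\log_2\det\bigl(\mathbf K_{n,M}+\mathbf H_M\mathbf K\mathbf H_M^\HH\bigr)-\log_2\det\mathbf K_{n,M}\), which equals the objective by Sylvester's determinant identity. Then \(f\) is concave on \(\{\mathbf K\succeq 0\}\), because \(\log\det\) is concave on positive definite matrices and we compose it with an affine map. The two budgets \eqref{eq:budget_curv}--\eqref{eq:budget_phase} are linear in \(\mathbf K\), and \(\mathbf G_M=\diag(\rho_m)\) has strictly positive entries by Proposition~\ref{prop:Qspec}.

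\textbf{Step 1: Slater's condition.} Take \(\mathbf K=t\I\) with \(t>0\) small enough that \(tM<P_{\mathrm c}\) and \(t\sum_m\rho_m<P_\phi\), assuming \(P_{\mathrm c},P_\phi>0\). This point is strictly feasible and lies in the interior of the PSD cone. By Theorem~\ref{thm:capacity} a maximizer \(\mathbf K_{a,M}^\star\) exists, so strong duality holds. There therefore exist \(\mu,\nu\ge0\) and a dual matrix \(\mathbf Z\succeq0\) for the cone constraint such that \(\mathbf K_{a,M}^\star\) maximizes the Lagrangian
\[
f(\mathbf K)-\mu\bigl(\Tr\mathbf K-P_{\mathrm c}\bigr)-\nu\bigl(\Tr(\mathbf G_M\mathbf K)-P_\phi\bigr)+\Tr(\mathbf Z\mathbf K).
\]
Complementary slackness for the two scalar constraints gives \eqref{eq:slack1}--\eqref{eq:slack2}, and for the cone it gives \(\Tr(\mathbf Z\mathbf K_{a,M}^\star)=0\).

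\textbf{Step 2: Stationarity.} The gradient of \(f\) with respect to the Hermitian argument is
\[
\nabla f(\mathbf K)=\frac{1}{\ln2}\,\mathbf H_M^\HH\bigl(\mathbf K_{n,M}+\mathbf H_M\mathbf K\mathbf H_M^\HH\bigr)^{-1}\mathbf H_M ,
\]
which follows from \(\partial\log\det\mathbf X=\mathbf X^{-1}\) and the chain rule. Since the Lagrangian is concave and unconstrained over Hermitian matrices, its maximizer is characterized by a zero gradient: \(\nabla f(\mathbf K_{a,M}^\star)-\mu\I-\nu\mathbf G_M+\mathbf Z=\mathbf 0\). Solving for \(\mathbf Z\) gives \(\mathbf Z=\mu\I+\nu\mathbf G_M-\nabla f(\mathbf K_{a,M}^\star)\), and \(\mathbf Z\succeq0\) is exactly \eqref{eq:KKTineq}.

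\textbf{Step 3: Matrix complementarity.} For positive semidefinite \(\mathbf Z\) and \(\mathbf K^\star\), the trace condition \(\Tr(\mathbf Z\mathbf K^\star)=\Tr\bigl(\mathbf K^{\star1/2}\mathbf Z\mathbf K^{\star1/2}\bigr)=0\) forces \(\mathbf Z^{1/2}\mathbf K^{\star1/2}=\mathbf 0\), and hence \(\mathbf Z\mathbf K^\star=\mathbf 0\). This is \eqref{eq:KKTcomp}.

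\textbf{Main obstacle.} I expect the only real subtlety to be the careful justification of Step 2. One must handle the complex Hermitian derivative convention; the \(1/\ln2\) factor and the real inner product \(\RePart\Tr(\cdot)\) make this consistent. An alternative route that avoids differentiating on the boundary of the cone is to use the first-order optimality condition for a concave function on a convex set, \(\Tr\bigl(\nabla f(\mathbf K^\star)(\mathbf K-\mathbf K^\star)\bigr)\le0\) for all feasible \(\mathbf K\), and to apply standard convex-duality theorems (e.g.\ Boyd--Vandenberghe) to produce the multipliers. The degenerate cases \(P_{\mathrm c}=0\) or \(P_\phi=0\) force \(\mathbf K^\star=\mathbf 0\), where Slater fails. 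In those cases I would check directly that \(\mu\) or \(\nu\) can be chosen large enough, which is possible because \(\I\succ0\) and \(\mathbf G_M\succ0\).
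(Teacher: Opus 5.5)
Your proposal is correct and follows the paper's own route: you form the Lagrangian with a positive semidefinite multiplier $\mathbf Z$ for the cone constraint, set the derivative $\frac{1}{\ln 2}\mathbf H_M^\HH(\mathbf K_{n,M}+\mathbf H_M\mathbf K\mathbf H_M^\HH)^{-1}\mathbf H_M-\mu\I-\nu\mathbf G_M+\mathbf Z$ to zero at the optimum, and read off \eqref{eq:KKTineq}--\eqref{eq:KKTcomp}. You also supply steps the paper leaves implicit: Slater's condition via the strictly feasible point $t\I$, the implication $\Tr(\mathbf Z\mathbf K^\star)=0\Rightarrow\mathbf Z\mathbf K^\star=\mathbf 0$, and the zero-budget cases, so your version is the more complete argument.
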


\subsection{Commuting Case: Generalized Water-Filling}

Define the whitened channel Gramian
\begin{equation}
\mathbf B_M \triangleq \mathbf H_M^\HH \mathbf K_{n,M}^{-1} \mathbf H_M.
\label{eq:BM}
\end{equation}

\begin{corollary}[Generalized water-filling]
\label{cor:gwf}
Suppose \(\mathbf B_M\) and \(\mathbf G_M\) commute. Then there exists a unitary \(\mathbf U\) such that
\begin{align}
\mathbf U^\HH \mathbf B_M \mathbf U &= \diag(\gamma_1,\dots,\gamma_M),
\label{eq:Bdiag}\\
\mathbf U^\HH \mathbf G_M \mathbf U &= \diag(\rho_1,\dots,\rho_M).
\label{eq:Gdiag}
\end{align}
The optimal covariance is diagonal in this common basis, with powers
\begin{equation}
p_m^\star
=
\bracks{
\frac{1}{(\mu+\nu \rho_m)\ln 2}
-
\frac{1}{\gamma_m}
}_+,
\qquad m=1,\dots,M,
\label{eq:gwf}
\end{equation}
where modes with \(\gamma_m=0\) receive \(p_m^\star=0\), and \(\mu,\nu\ge 0\) are chosen so that
\begin{align}
\sum_{m=1}^{M} p_m^\star &\le P_{\mathrm c},
\label{eq:gwf1}\\
\sum_{m=1}^{M} \rho_m p_m^\star &\le P_{\phi}
\label{eq:gwf2}
\end{align}
with complementary slackness.
\end{corollary}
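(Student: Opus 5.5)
The plan is to reduce the matrix program of Theorem~\ref{thm:capacity} to a separable scalar problem in the common eigenbasis, and then read off \eqref{eq:gwf} from the scalar KKT conditions.

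\textbf{Step 1 (common basis).} $\mathbf B_M$ and $\mathbf G_M$ are Hermitian and commute, so the spectral theorem for commuting normal matrices gives a unitary $\mathbf U$ diagonalizing both, which is \eqref{eq:Bdiag}--\eqref{eq:Gdiag}. Labels are chosen so that the diagonal entries of $\mathbf U^\HH\mathbf G_M\mathbf U$ are the $\rho_m$; that is, the eigenvectors are ordered to match \eqref{eq:GM}. Since the $\beta_m$ are distinct, Proposition~\ref{prop:Qspec_closed_form} shows the $\rho_m$ are distinct. Hence $\mathbf U$ is diagonal up to phases, and $\mathbf B_M$ is itself diagonal in the curvature-mode basis. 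This remark is not needed for the rest of the argument.

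\textbf{Step 2 (rewrite objective and constraints).} Apply Sylvester's identity $\det(\I+\mathbf X\mathbf Y)=\det(\I+\mathbf Y\mathbf X)$ to rewrite the objective of \eqref{eq:capacity_exact} as
\[
\log_2\det\bigl(\I+\mathbf B_M^{1/2}\mathbf K_{a,M}\mathbf B_M^{1/2}\bigr).
\]
Set $\tilde{\mathbf K}=\mathbf U^\HH\mathbf K_{a,M}\mathbf U$ and $\Gamma=\diag(\gamma_m)$. The objective becomes $\log_2\det(\I+\Gamma^{1/2}\tilde{\mathbf K}\Gamma^{1/2})$. Both budgets depend only on the diagonal of $\tilde{\mathbf K}$:
\[
\Tr\mathbf K_{a,M}=\sum_m\tilde K_{mm},
\qquad
\Tr(\mathbf G_M\mathbf K_{a,M})=\sum_m\rho_m\tilde K_{mm}.
\]
The matrix $\I+\Gamma^{1/2}\tilde{\mathbf K}\Gamma^{1/2}$ is positive definite. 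Hadamard's inequality then bounds the determinant by $\prod_m(1+\gamma_m\tilde K_{mm})$, with equality when $\tilde{\mathbf K}$ is diagonal. Replacing $\tilde{\mathbf K}$ by its diagonal part keeps feasibility and does not decrease the objective. So an optimizer exists that is diagonal in the common basis, with powers $p_m=\tilde K_{mm}\ge0$.

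\textbf{Step 3 (scalar KKT).} The reduced problem is to maximize $\sum_m\log_2(1+\gamma_m p_m)$ over $p\ge0$, subject to the two linear constraints \eqref{eq:gwf1}--\eqref{eq:gwf2}. This is a concave program with linear constraints, and Slater's condition holds whenever $P_{\mathrm c},P_\phi>0$ (take $p=0$). KKT conditions are therefore necessary and sufficient. They can also be obtained by specializing Proposition~\ref{prop:KKT}, since \eqref{eq:KKTineq} and \eqref{eq:KKTcomp} become diagonal in this basis. The stationarity condition reads
\[
\frac{\gamma_m}{(1+\gamma_m p_m)\ln2}\le\mu+\nu\rho_m,
\]
with equality when $p_m>0$. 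Solving the active case and taking the positive part gives \eqref{eq:gwf}. For $\gamma_m=0$ the objective does not depend on $p_m$, while $p_m$ still consumes budget, so $p_m^\star=0$ without loss of generality.

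\textbf{Main obstacle.} The delicate step is the degenerate multiplier case $\mu=\nu=0$, where $\mu+\nu\rho_m$ vanishes and \eqref{eq:gwf} is undefined. I would rule it out directly. If some $\gamma_m>0$, stationarity would require $\gamma_m/((1+\gamma_m p_m)\ln 2)\le 0$, which is impossible. So at least one multiplier is positive, and because every $\rho_m>0$, we get $\mu+\nu\rho_m>0$ for all $m$. If all $\gamma_m=0$, the capacity is zero and the statement is trivial. Complementary slackness \eqref{eq:slack1}--\eqref{eq:slack2} then carries over verbatim to \eqref{eq:gwf1}--\eqref{eq:gwf2}.
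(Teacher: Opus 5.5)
Your proposal is correct and follows the same route as the paper's proof. Both jointly diagonalize $\mathbf B_M$ and $\mathbf G_M$, reduce to the separable scalar program, and read off \eqref{eq:gwf} from the scalar KKT conditions; your Hadamard step (which the paper's finite-dimensional argument leaves implicit in ``the problem becomes'', although it uses the same inequality for Theorem~\ref{thm:infwf}) and your exclusion of $\mu=\nu=0$ fill in details the paper skips rather than change the argument.
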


\begin{algorithm}[t]
\caption{Dual-budget generalized water-filling}
\label{alg:gwf}
\begin{algorithmic}[1]
\Require \(\{\gamma_m,\rho_m\}_{m=1}^M\), \(P_{\mathrm c}\), \(P_\phi\), tolerance \(\varepsilon_{\mathrm tol}\)
\State Initialize \(\nu_{\min}=0\), choose \(\nu_{\max}\) sufficiently large
\While{\(\nu_{\max}-\nu_{\min}>\varepsilon_{\mathrm tol}\)}
    \State \(\nu\gets (\nu_{\min}+\nu_{\max})/2\)
    \State Find \(\mu\ge 0\) by bisection so that \(\sum_{m=1}^M p_m(\mu,\nu)=P_{\mathrm c}\) whenever possible, where
    \[
    p_m(\mu,\nu)=\bracks{
    \frac{1}{(\mu+\nu\rho_m)\ln 2}-\frac{1}{\gamma_m}
    }_+
    \]
    with \(p_m(\mu,\nu)=0\) if \(\gamma_m=0\)
    \If{\(\sum_{m=1}^M \rho_m p_m(\mu,\nu) > P_\phi\)}
        \State \(\nu_{\min}\gets \nu\)
    \Else
        \State \(\nu_{\max}\gets \nu\)
    \EndIf
\EndWhile
\State Return \(p_m^\star=p_m(\mu,\nu)\)
\end{algorithmic}
\end{algorithm}

\begin{proposition}[Rank-sensitive upper bound]
\label{prop:capub}
Let \(r_M=\rank(\mathbf H_M)\), and define
\[
P_{\mathrm{eff},M} \triangleq \min\braces{P_{\mathrm c},\, P_{\phi}/\rho_M},
\]
where \(\rho_M\) is the smallest diagonal entry of \(\mathbf G_M\). Then
\begin{equation}
C_M^\star
\le
r_M
\log_2\!\paren{
1+
\frac{P_{\mathrm{eff},M}}{r_M}
\Tr(\mathbf B_M)
}.
\label{eq:capub}
\end{equation}
A sharper but less trace-only variant replaces \(\Tr(\mathbf B_M)\) by \(\lambda_{\max}(\mathbf B_M)\).
\end{proposition}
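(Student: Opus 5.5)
The plan is to bound the log-determinant for an arbitrary feasible covariance by a function of \(\Tr(\mathbf B_M\mathbf K_{a,M})\) alone. The two budgets will then be folded into a single trace budget \(P_{\mathrm{eff},M}\). Since Theorem~\ref{thm:capacity} guarantees that the maximum in \eqref{eq:capacity_exact} is attained, it suffices to prove \eqref{eq:capub} for every \(\mathbf K_{a,M}\succeq 0\) satisfying \eqref{eq:budget_curv}--\eqref{eq:budget_phase}. The trivial case \(r_M=0\), where \(\mathbf H_M=\mathbf 0\) and both sides vanish (reading the right side as its limit), is set aside first.

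\textbf{Step 1 (merging the budgets).} Because \(\mathbf G_M=\diag(\rho_1,\dots,\rho_M)\) by \eqref{eq:GM} and \(\rho_M\) is its smallest entry, we have \(\mathbf G_M\succeq\rho_M\I\). For \(\mathbf K_{a,M}\succeq0\) this gives \(\Tr(\mathbf G_M\mathbf K_{a,M})\ge\rho_M\Tr(\mathbf K_{a,M})\). Combined with \eqref{eq:budget_phase}, it yields \(\Tr(\mathbf K_{a,M})\le P_\phi/\rho_M\). Together with \eqref{eq:budget_curv}, we obtain \(\Tr(\mathbf K_{a,M})\le P_{\mathrm{eff},M}\).

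\textbf{Step 2 (rank reduction).} By Sylvester's determinant identity,
\[
\det\!\paren{\I+\mathbf K_{n,M}^{-1/2}\mathbf H_M\mathbf K_{a,M}\mathbf H_M^\HH\mathbf K_{n,M}^{-1/2}}
=\det\!\paren{\I+\mathbf K_{a,M}^{1/2}\mathbf B_M\mathbf K_{a,M}^{1/2}},
\]
with \(\mathbf B_M\) as in \eqref{eq:BM}. Since \(\mathbf K_{n,M}\succ0\), we have \(\rank\mathbf B_M=\rank(\mathbf K_{n,M}^{-1/2}\mathbf H_M)=r_M\). Hence the positive semidefinite matrix \(\mathbf K_{a,M}^{1/2}\mathbf B_M\mathbf K_{a,M}^{1/2}\) has at most \(r_M\) nonzero eigenvalues \(\eta_1,\dots,\eta_{r_M}\ge0\).

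\textbf{Step 3 (concavity).} By concavity of \(\log_2(1+x)\) (Jensen over the \(r_M\) nonzero eigenvalues),
\[
\sum_{i\le r_M}\log_2(1+\eta_i)\le r_M\log_2\!\paren{1+\tfrac{1}{r_M}\textstyle\sum_i\eta_i}.
\]
Here \(\sum_i\eta_i=\Tr(\mathbf B_M\mathbf K_{a,M})\).

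\textbf{Step 4 (trace bound).} For positive semidefinite matrices,
\[
\Tr(\mathbf B_M\mathbf K_{a,M})\le\lambda_{\max}(\mathbf B_M)\Tr(\mathbf K_{a,M})\le\lambda_{\max}(\mathbf B_M)\,P_{\mathrm{eff},M}.
\]
Then \(\lambda_{\max}(\mathbf B_M)\le\Tr(\mathbf B_M)\) because \(\mathbf B_M\succeq0\). Monotonicity of the right side of Step 3 in its argument then gives \eqref{eq:capub}, and stopping one inequality early gives the sharper \(\lambda_{\max}\) variant.

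There is no real obstacle in this argument. The step needing the most care is Step 2: the rank count must go through the whitened Gramian, which uses \(\mathbf K_{n,M}\succ0\), so that the Jensen step uses \(r_M\) terms rather than \(M\).
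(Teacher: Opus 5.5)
Your proof is correct and follows the paper's argument essentially step for step. You merge the two budgets via \(\mathbf G_M\succeq\rho_M\I\) into \(\Tr(\mathbf K_{a,M})\le P_{\mathrm{eff},M}\), apply Jensen over at most \(r_M\) nonzero eigenvalues, and bound \(\Tr(\mathbf B_M\mathbf K_{a,M})\le\lambda_{\max}(\mathbf B_M)P_{\mathrm{eff},M}\le\Tr(\mathbf B_M)P_{\mathrm{eff},M}\); the only cosmetic difference is that you count the rank through the Sylvester identity on \(\mathbf K_{a,M}^{1/2}\mathbf B_M\mathbf K_{a,M}^{1/2}\), whereas the paper reads it directly off \(\widetilde{\mathbf H}_M\mathbf K_{a,M}\widetilde{\mathbf H}_M^\HH\).
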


\section{Infinite-Dimensional Capacity and the Modal Limit}
\label{sec:capacity_infinite}

\subsection{Whitened Operator Channel}

Under Assumption \ref{ass:noise}, define the whitened channel
\begin{equation}
\tilde y = \tilde{\cT} c + \tilde z,
\qquad
\tilde{\cT}\triangleq \cK_z^{-1/2}\cT,
\qquad
\tilde z\sim \CN(0,I).
\label{eq:whitechan}
\end{equation}
Let
\[
\cB \triangleq \tilde{\cT}^\ast \tilde{\cT}.
\]
Since \(\cT\) is compact by Corollary \ref{cor:compact}, \(\tilde{\cT}\) and \(\cB\) are compact.

\begin{definition}[Admissible covariance operators]
\label{def:admissible}
Let \(\Tc(\cX)\) denote the trace-class operators on \(\cX\). Define
\begin{equation}
\mathfrak K
\triangleq
\braces{
\cK\in \Tc(\cX):
\cK\succeq 0,\;
\Tr(\cK)\le P_{\mathrm c},\;
\Tr(\cQ\cK)\le P_\phi
}.
\label{eq:Kset}
\end{equation}
\end{definition}

\subsection{Fredholm Capacity Formula}

\begin{theorem}[Infinite-dimensional capacity]
\label{thm:fredholm}
For the whitened operator channel \eqref{eq:whitechan}, the capacity over zero-mean Gaussian inputs with admissible covariance operators \(\cK\in\mathfrak K\) is
\begin{equation}
C_\infty
=
\sup_{\cK\in\mathfrak K}
\log_2\detF\!\paren{
I+\tilde{\cT}\cK\tilde{\cT}^\ast
},
\label{eq:fredholm_capacity}
\end{equation}
where \(\detF\) is the Fredholm determinant. The operator
\(
\tilde{\cT}\cK\tilde{\cT}^\ast
\)
is positive trace-class for every \(\cK\in\mathfrak K\), so \eqref{eq:fredholm_capacity} is well defined. Moreover,
\[
C_\infty
\le
\frac{\norm{\tilde{\cT}}^2 P_{\mathrm c}}{\ln 2}
<\infty.
\]
\end{theorem}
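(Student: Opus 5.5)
We prove the three assertions of Theorem~\ref{thm:fredholm} in order: the operator $\tilde{\cT}\cK\tilde{\cT}^\ast$ is positive trace-class, the Fredholm-determinant expression is the Gaussian capacity, and the supremum obeys the stated bound.

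\textbf{Trace-class property.} For $\cK\in\mathfrak K$ we have $\cK\succeq 0$ trace-class, and $\tilde{\cT}=\cK_z^{-1/2}\cT$ is bounded. Indeed, Assumption~\ref{ass:noise} gives $\cK_z\succeq\sigma_z^2 I$, hence $\norm{\cK_z^{-1/2}}\le\sigma_z^{-1}$. Since $\Tc$ is a two-sided ideal in the bounded operators, $\tilde{\cT}\cK\tilde{\cT}^\ast\in\Tc(\cY)$. Positivity follows from
\[
\inner{y}{\tilde{\cT}\cK\tilde{\cT}^\ast y}=\inner{\tilde{\cT}^\ast y}{\cK\tilde{\cT}^\ast y}\ge 0 .
\]
Lemma~\ref{lem:fredholm_welldef} then applies, in the slightly weaker form "bounded $\times$ trace-class $\times$ bounded", and gives the product representation $\detF(I+\tilde{\cT}\cK\tilde{\cT}^\ast)=\prod_m(1+\eta_m)$ with $\sum_m\eta_m<\infty$. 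So each term inside the supremum is finite.

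\textbf{Capacity identification.} I would reduce to finite dimensions via the Galerkin machinery. Diagonalize $\cK=\sum_k \kappa_k\, e_k\otimes e_k$. A zero-mean Gaussian input $c=\sum_k\sqrt{\kappa_k}\,g_k e_k$ with i.i.d.\ $g_k\sim\CN(0,1)$ is then a well-defined $\cX$-valued random variable, since $\E\norm{c}^2=\Tr\cK$. Its mutual information with $\tilde y$ is, by the standard Gaussian-measure formula (Kolmogorov--Gelfand--Yaglom), the limit of the finite-dimensional quantities $\log_2\det(\I+\tilde{\mathbf H}_M\mathbf K_M\tilde{\mathbf H}_M^\HH)$. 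The justification has three parts:
\begin{itemize}
\item mutual information is the supremum over finite-dimensional projections of input and output, which is monotone in $M$;
\item Theorem~\ref{thm:capacity} gives the Gaussian log-det value at each finite $M$;
\item continuity of $\detF$ in trace norm lets these values converge to the Fredholm log-det, because $R_M\tilde{\cT}P_M\cK P_M\tilde{\cT}^\ast R_M\to\tilde{\cT}\cK\tilde{\cT}^\ast$ in $\Tc$ (strong convergence of projections acting on a trace-class operator).
\end{itemize}
Taking the supremum over $\mathfrak K$ gives \eqref{eq:fredholm_capacity}. Since the statement is restricted to Gaussian inputs, no converse over non-Gaussian laws is needed here.

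\textbf{Upper bound.} Using $\log(1+x)\le x$ on the product representation,
\[
\ln\detF(I+\tilde{\cT}\cK\tilde{\cT}^\ast)\le\Tr(\tilde{\cT}\cK\tilde{\cT}^\ast)=\Tr(\cK^{1/2}\cB\cK^{1/2})\le\norm{\cB}\Tr\cK\le\norm{\tilde{\cT}}^2P_{\mathrm c}.
\]
Dividing by $\ln 2$ and taking the supremum over $\mathfrak K$ yields the stated bound. The phase budget is not needed for this estimate.

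\textbf{Main obstacle.} The delicate step is the capacity identification, namely the interchange of the limit in $M$ with the mutual-information functional and its supremum. I would first try the monotone finite-projection characterization of mutual information, which avoids any limit exchange in the optimization. The trace-norm convergence of the projected covariance operators then supplies the matching value for each fixed $\cK$.
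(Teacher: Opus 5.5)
Your trace-class step and your upper bound are the same as the paper's. The paper also uses only boundedness of $\tilde{\cT}$, together with $\log_2\detF(I+A)\le\Tr(A)/\ln 2$ and $\Tr(\tilde{\cT}^\ast\tilde{\cT}\cK)\le\norm{\tilde{\cT}}^2\Tr(\cK)$.

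You diverge on the identification $I(c;\tilde y)=\log_2\detF(I+\tilde{\cT}\cK\tilde{\cT}^\ast)$. The paper truncates only the input, taking $\cK_N=\sum_{m\le N}\lambda_m\psi_m\otimes\psi_m$ in the eigenbasis of $\cK$, and keeps the full output. It quotes the finite-dimensional Gaussian formula and then simply asserts that mutual information passes to the trace-norm limit. Your use of the Kolmogorov--Gelfand--Yaglom/Dobrushin characterization through finite-dimensional projections is the better tool. It gives $I(c;\tilde y)$ a meaning even though the whitened noise $\tilde z\sim\CN(0,I)$ is only cylindrical on an infinite-dimensional $\cY$. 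It also replaces the paper's unjustified interchange with a monotone limit.

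As written, however, your second bullet misidentifies the finite-$M$ quantity. The projected pair is $(P_Mc,R_M\tilde y)$, with $R_M\tilde y=R_M\tilde{\cT}P_Mc+R_M\tilde{\cT}(I-P_M)c+R_M\tilde z$. So $I(P_Mc;R_M\tilde y)$ is not $\log_2\det(\I+\tilde{\mathbf H}_M\mathbf K_M\tilde{\mathbf H}_M^\HH)$, because the leakage $R_M\tilde{\cT}(I-P_M)c$ acts as extra Gaussian noise. The Galerkin log-det is the mutual information of a different channel, the truncated-input one of Theorem~\ref{thm:capacity}. That theorem is also a maximum over covariances, whereas you only need the fixed-covariance formula.

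If $P_M$ is a spectral projector of $\cK$, conditioning on the independent $(I-P_M)c$ shows the Galerkin log-det is an \emph{upper} bound on $I(P_Mc;R_M\tilde y)$. That yields only $I(c;\tilde y)\le\log_2\detF(I+\tilde{\cT}\cK\tilde{\cT}^\ast)$. With the $\cQ$-eigenprojectors of the Galerkin scheme, even that independence fails.

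The cleanest repair is to project only the output:
\begin{itemize}
\item Since $c$ is a genuine $\cX$-valued Gaussian element, $R_M\tilde y$ is a finite-dimensional Gaussian vector, and $I(c;R_M\tilde y)=\log_2\detF(I+R_M\tilde{\cT}\cK\tilde{\cT}^\ast R_M)$ exactly.
\item For nested $R_M\to I$ strongly, Dobrushin's theorem gives $I(c;R_M\tilde y)\uparrow I(c;\tilde y)$.
\item Meanwhile $R_M\tilde{\cT}\cK\tilde{\cT}^\ast R_M\to\tilde{\cT}\cK\tilde{\cT}^\ast$ in trace norm, and $\log_2\detF$ is $1/\ln 2$-Lipschitz in trace norm on the positive cone, so the two limits coincide.
\end{itemize}
With that change your identification is complete, and more rigorous than the paper's sketch.
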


\begin{remark}
For \(A\succeq 0\) trace-class, the Fredholm determinant is
\[
\detF(I+A)=\prod_{m\ge 1}(1+\lambda_m(A)),
\]
where \(\lambda_m(A)\) are the eigenvalues of \(A\). Thus \eqref{eq:fredholm_capacity} is the natural infinite-dimensional extension of the matrix \(\log\det\) formula.
\end{remark}

\subsection{Capacity Attainment and Operator KKT Conditions}

\begin{theorem}[Attainment and operator KKT system]
\label{thm:attainment_kkt}
Assume \(P_{\mathrm c}>0\), \(P_\phi>0\), and \(\tilde{\cT}\neq 0\). Then the supremum in \eqref{eq:fredholm_capacity} is attained by at least one covariance operator \(\cK^\star\in\mathfrak K\). Furthermore, there exist multipliers \(\mu,\nu\ge 0\) such that
\begin{equation}
\frac{1}{\ln 2}\,
\tilde{\cT}^{\ast}
\paren{
I+\tilde{\cT}\cK^\star \tilde{\cT}^{\ast}
}^{-1}
\tilde{\cT}
\preceq
\mu I+\nu \cQ,
\label{eq:opKKTineq}
\end{equation}
with complementary slackness
\begin{align}
\mu\paren{\Tr(\cK^\star)-P_{\mathrm c}}&=0,
\label{eq:opslack1}\\
\nu\paren{\Tr(\cQ\cK^\star)-P_\phi}&=0,
\label{eq:opslack2}
\end{align}
and operator complementarity
\begin{equation}
\bracks{
\mu I+\nu \cQ
-
\frac{1}{\ln 2}\,
\tilde{\cT}^{\ast}
\paren{
I+\tilde{\cT}\cK^\star \tilde{\cT}^{\ast}
}^{-1}
\tilde{\cT}
}
\cK^\star
=0.
\label{eq:opKKTcomp}
\end{equation}
\end{theorem}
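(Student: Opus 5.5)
The plan has two parts. First, attainment follows from the direct method in the weak-$\ast$ topology of $\Tc(\cX)$. Second, the KKT system follows from Lagrangian duality for a concave program with a Slater point.

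\emph{Attainment.} Write $F(\cK)=\log_2\detF(I+\tilde\cT\cK\tilde\cT^\ast)=\log_2\detF(I+\cK^{1/2}\cB\cK^{1/2})$. Take a maximizing sequence $\cK_n\in\mathfrak K$. Since $\Tr\cK_n\le P_{\mathrm c}$, the sequence is bounded in $\Tc(\cX)=\mathcal K(\cX)^\ast$, so by Banach--Alaoglu a subsequence converges weak-$\ast$ to some $\cK^\star\succeq0$.

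Feasibility of the limit follows from lower semicontinuity. For every finite-rank projector $\Pi$ we have $\Tr(\Pi\cK^\star)=\lim\Tr(\Pi\cK_n)\le P_{\mathrm c}$. Letting $\Pi\uparrow I$ gives $\Tr\cK^\star\le P_{\mathrm c}$. The same argument with the spectral projectors of $\cQ$ gives $\Tr(\cQ\cK^\star)\le P_\phi$, since $\Tr(\cQ\cK)=\sum_m\rho_m\langle u_m,\cK u_m\rangle$ is a supremum of weak-$\ast$ continuous functionals.

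The key step is that $F$ is weak-$\ast$ upper semicontinuous, and in fact continuous, on $\mathfrak K$. Since $\cB$ is compact (Corollary~\ref{cor:compact}), the map $\cK\mapsto\Tr(\cB\cK)$ is weak-$\ast$ continuous. I would split $\cB=\cB P_M+\cB(I-P_M)$ and show the tail contributes uniformly little:
\[
\log\detF(I+\cK^{1/2}\cB\cK^{1/2})-\log\detF(I+\cK^{1/2}P_M\cB P_M\cK^{1/2})\le\Tr\big(\cK(\cB-P_M\cB P_M)\big)\le 2P_{\mathrm c}\norm{\cB-P_M\cB P_M}\to0 .
\]
This uses $\log\det(I+A)-\log\det(I+A')\le\Tr(A-A')$ for $A\succeq A'\succeq0$. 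I would use the eigenprojectors $P_M$ of $\cB$ itself so that the operator ordering holds. The truncated functional depends only on the finite matrix $P_M\cK P_M$, which converges under weak-$\ast$ convergence, so it is continuous. Hence $F(\cK^\star)\ge\limsup F(\cK_n)=C_\infty$, and the supremum is attained. I expect this uniform-truncation estimate to be the main obstacle; the rest is routine.

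\emph{KKT.} The functional $F$ is concave on the convex set $\mathfrak K$, because $\log\det$ is concave and the map $\cK\mapsto\tilde\cT\cK\tilde\cT^\ast$ is linear. Its Gateaux derivative at $\cK$ in direction $\Delta$ is $\frac1{\ln2}\Tr\big(\tilde\cT^\ast(I+\tilde\cT\cK\tilde\cT^\ast)^{-1}\tilde\cT\,\Delta\big)$, and this is well defined because $\tilde\cT^\ast(\cdot)^{-1}\tilde\cT$ is bounded.

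Slater's condition holds: since $P_{\mathrm c},P_\phi>0$, a small multiple of a rank-one projector onto $u_1$ is strictly feasible. Convex duality in $\Tc(\cX)$, with the two scalar constraints, then gives multipliers $\mu,\nu\ge0$ such that $\cK^\star$ maximizes $F(\cK)-\mu\Tr\cK-\nu\Tr(\cQ\cK)$ over all $\cK\succeq0$ of finite trace and finite $\Tr(\cQ\cK)$. It also gives complementary slackness \eqref{eq:opslack1}--\eqref{eq:opslack2}.

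The first-order optimality conditions then follow from the choice of perturbation. Perturbing by $\Delta=t\,vv^\ast$ with $t>0$ gives $\langle v,(\mu I+\nu\cQ-\Phi)v\rangle\ge0$, where $\Phi$ denotes the gradient operator; this is \eqref{eq:opKKTineq}. Perturbing by $\Delta=\pm t\cK^\star$ gives $\Tr\big((\mu+\nu\cQ-\Phi)\cK^\star\big)=0$. A positive operator $X=\mu+\nu\cQ-\Phi$ with $\Tr(X\cK^\star)=0$ satisfies $X^{1/2}\cK^{\star1/2}=0$, hence $X\cK^\star=0$, which is \eqref{eq:opKKTcomp}.

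Finally, I would check that $\Tr(\cQ\cK^\star)$ and $\Tr(\Phi\cK^\star)$ are finite, so the traces used above are legitimate.
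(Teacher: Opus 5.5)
Your proposal is correct and follows the paper's route. The shared steps are: weak-$\ast$ compactness of the trace-class ball viewed as the dual of the compact operators; feasibility of the limit by testing against finite-rank (compact) operators; weak-$\ast$ continuity of the objective via a finite-rank truncation that is uniform in $\Tr\cK\le P_{\mathrm c}$; and then convex KKT conditions.

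The differences are minor. You truncate with the eigenprojectors of $\cB$ and a one-sided $\log\det$ monotonicity bound, where the paper uses finite-rank approximants $\tilde{\cT}_J$ with a Lipschitz bound. You also derive the KKT system explicitly, by dualizing only the two scalar constraints under Slater and then perturbing inside the positive cone. That is actually a cleaner justification than the paper's bare appeal to standard KKT with a normal-cone operator.
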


\begin{remark}
The finite-dimensional KKT system in Proposition \ref{prop:KKT} is the Galerkin version of \eqref{eq:opKKTineq}--\eqref{eq:opKKTcomp}. Thus, the water-filling law is not merely a finite-dimensional artifact; it is the finite-rank manifestation of an operator optimality condition.
\end{remark}

\subsection{Infinite-Dimensional Generalized Water-Filling}

\begin{theorem}[Commuting operator case]
\label{thm:infwf}
Suppose \(\cB\) and \(\cQ\) commute. Then there exists an orthonormal basis \(\{u_m\}_{m\ge 1}\) of \(\cX\) such that
\begin{align}
\cB u_m &= \beta_m u_m,
\qquad \beta_m\ge 0,
\label{eq:Beigs_inf}\\
\cQ u_m &= \rho_m u_m,
\qquad \rho_m>0.
\label{eq:Qeigs_inf}
\end{align}
In this basis,
\begin{equation}
C_\infty
=
\sup_{\{p_m\}_{m\ge 1}}
\sum_{m=1}^{\infty}\log_2(1+\beta_m p_m)
\label{eq:Cinf_diag}
\end{equation}
subject to
\begin{align}
p_m &\ge 0,
\qquad m\ge 1,
\label{eq:p_nonneg}\\
\sum_{m=1}^{\infty} p_m &\le P_{\mathrm c},
\label{eq:p_curv}\\
\sum_{m=1}^{\infty} \rho_m p_m &\le P_\phi.
\label{eq:p_phase}
\end{align}
Moreover, the optimal allocation obeys the infinite-dimensional generalized water-filling law
\begin{equation}
p_m^\star
=
\bracks{
\frac{1}{(\mu+\nu \rho_m)\ln 2}
-
\frac{1}{\beta_m}
}_+,
\qquad m\ge 1,
\label{eq:inf_gwf}
\end{equation}
for some multipliers \(\mu,\nu\ge 0\), with \(p_m^\star=0\) whenever \(\beta_m=0\).
\end{theorem}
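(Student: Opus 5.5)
The plan is to follow the standard route for operator water-filling: diagonalize simultaneously, reduce the Fredholm determinant to a product over modes, show that diagonal covariances are optimal, and then read off the allocation from the KKT system of Theorem~\ref{thm:attainment_kkt}.

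\textbf{Step 1: common eigenbasis.} Since \(\cB=\tilde{\cT}^\ast\tilde{\cT}\) and \(\cQ\) are compact, self-adjoint and commuting, each eigenspace of \(\cQ\) is invariant under \(\cB\). By Theorem~\ref{thm:synthesis}, \(\cS\) is injective, so \(\ker\cQ=\{0\}\). Hence the eigenspaces of \(\cQ\) for \(\rho_m>0\) are finite-dimensional and together span \(\cX\). Diagonalizing \(\cB\) inside each of them gives the orthonormal basis \(\{u_m\}\) satisfying \eqref{eq:Beigs_inf}--\eqref{eq:Qeigs_inf}, with \(\beta_m\ge0\) because \(\cB\succeq0\).

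\textbf{Step 2: reduction to diagonal covariances.} For a diagonal \(\cK=\sum_m p_m u_m\otimes u_m\), Sylvester's identity \(\detF(I+\tilde{\cT}\cK\tilde{\cT}^\ast)=\detF(I+\cK^{1/2}\cB\cK^{1/2})\) turns the objective into \(\prod_m(1+\beta_m p_m)\). The constraints \(\Tr\cK\) and \(\Tr(\cQ\cK)\) become \eqref{eq:p_curv}--\eqref{eq:p_phase}.

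For a general \(\cK\in\mathfrak K\), let \(\Delta(\cK)\) be its diagonal part in the basis \(\{u_m\}\). Then \(\Tr\Delta(\cK)=\Tr\cK\), and \(\Tr(\cQ\Delta(\cK))=\Tr(\cQ\cK)\) because \(\cQ\) is diagonal, so \(\Delta(\cK)\in\mathfrak K\). It remains to show that the objective does not decrease under \(\cK\mapsto\Delta(\cK)\). This is Hadamard's inequality \(\det(I+X)\le\prod_m(1+X_{mm})\) applied to \(X=\cB^{1/2}\cK\cB^{1/2}\), whose diagonal entries are \(\beta_m p_m\) because \(\cB\) is diagonal. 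I would prove it first for finite sections \(P_M X P_M\), where it is the matrix case. I would then pass to the limit using continuity of \(\detF\) in trace norm, since \(P_M X P_M\to X\) in \(\Tc\).

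\textbf{Step 3: the allocation law.} Theorem~\ref{thm:attainment_kkt} (or compactness of the feasible \(\ell^1\)-weighted set) gives an optimal \(\cK^\star\), and by Step~2 it may be taken diagonal. In the diagonal case, \(\tilde{\cT}^\ast(I+\tilde{\cT}\cK^\star\tilde{\cT}^\ast)^{-1}\tilde{\cT}\) equals \(\cB(I+\cK^\star\cB)^{-1}\) via the push-through identity. This is diagonal with entries \(\beta_m/(1+\beta_m p_m^\star)\).

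Reading \eqref{eq:opKKTineq} and \eqref{eq:opKKTcomp} entrywise gives two conditions. First, \(\beta_m/((1+\beta_m p_m^\star)\ln2)\le\mu+\nu\rho_m\), with equality whenever \(p_m^\star>0\). Second, if \(\beta_m=0\) the left side vanishes; since we need \(\mu+\nu\rho_m>0\), complementarity then forces \(p_m^\star=0\). Solving the equality gives \(p_m^\star=1/((\mu+\nu\rho_m)\ln 2)-1/\beta_m\) on the active set, and the inequality forces zero exactly when this quantity is nonpositive, which is \eqref{eq:inf_gwf}.

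\textbf{Expected obstacle.} The main difficulty is the infinite-dimensional Hadamard step together with the degenerate multiplier case \(\mu=\nu=0\). To exclude the latter, I would observe that if both multipliers vanished, \eqref{eq:opKKTineq} would force \(\cB=0\), contradicting \(\tilde{\cT}\neq0\). This gives \(\mu+\nu\rho_m>0\) for every \(m\), since \(\rho_m>0\).
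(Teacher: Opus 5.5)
Your proposal is correct and follows essentially the paper's route. Both proofs take a common eigenbasis of the commuting compact operators, apply Hadamard's inequality to principal truncations and pass to the trace-norm limit to show the diagonal part \(\Delta(\cK)\) is feasible and no worse, and then use KKT conditions on the diagonal problem. The differences are minor: you get the scalar conditions by specializing the operator KKT system of Theorem~\ref{thm:attainment_kkt} to a diagonal maximizer, and you explicitly rule out \(\mu=\nu=0\), both of which the paper leaves implicit. The specialization is legitimate because, by concavity and Slater's condition (\(\cK=0\) is strictly feasible), those conditions hold at every maximizer.
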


\subsection{Convergence of Modal Capacities}

Define
\[
\mathfrak K_M
\triangleq
\braces{
\cK\in \mathfrak K:
\cK=P_M \cK P_M
}.
\]
Let
\begin{equation}
C_M
\triangleq
\sup_{\cK\in \mathfrak K_M}
\log_2\detF\!\paren{
I+\tilde{\cT}\cK\tilde{\cT}^\ast
}.
\label{eq:CM_infinite}
\end{equation}

\begin{theorem}[Modal convergence \(C_M\to C_\infty\)]
\label{thm:CMtoCinfty}
Let \(P_M\) be the orthogonal projector onto the span of the first \(M\) eigenvectors of \(\cQ\). Then
\begin{equation}
C_M \uparrow C_\infty
\qquad \text{as } M\to\infty.
\label{eq:CMconv}
\end{equation}
If, in addition, a receive Galerkin projection \(R_M\to I\) strongly is used, the corresponding matrix-channel capacities obtained from \eqref{eq:matrixchan} converge to the same limit by Theorem \ref{thm:galerkin}.
\end{theorem}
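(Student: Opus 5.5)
Monotonicity and the upper bound \(C_M\le C_\infty\) are immediate, since \(\mathfrak K_M\subseteq\mathfrak K_{M+1}\subseteq\mathfrak K\). It therefore suffices to show \(\lim_M C_M\ge C_\infty\). By Theorem~\ref{thm:attainment_kkt} the supremum is attained at some \(\cK^\star\in\mathfrak K\). If one prefers to avoid invoking attainment, it is enough to take a near-optimal \(\cK\in\mathfrak K\) with objective within \(\eta\) of \(C_\infty\). The plan is to compress this covariance to the first \(M\) curvature modes, set \(\cK_M\triangleq P_M\cK^\star P_M\), and show that the objective converges along this sequence.

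\textbf{Step 1 (feasibility).} Since \(P_M\) projects onto eigenvectors \(u_1,\dots,u_M\) of \(\cQ\) (Proposition~\ref{prop:Qspec}), \(P_M\) commutes with \(\cQ\). Hence
\[
\Tr(\cK_M)=\sum_{m\le M}\inner{u_m}{\cK^\star u_m}\le\Tr(\cK^\star)\le P_{\mathrm c},
\qquad
\Tr(\cQ\cK_M)=\sum_{m\le M}\rho_m\inner{u_m}{\cK^\star u_m}\le\Tr(\cQ\cK^\star)\le P_\phi .
\]
Moreover \(\cK_M\succeq0\) is trace-class and satisfies \(P_M\cK_MP_M=\cK_M\), so \(\cK_M\in\mathfrak K_M\). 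This is where the particular choice of \(P_M\), diagonal in the \(\cQ\)-eigenbasis, matters: a generic projector would not preserve the phase budget.

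\textbf{Step 2 (trace-norm convergence).} We show \(\norm{\cK_M-\cK^\star}_{\Tc}\to0\). Because \(P_M\to I\) strongly, a standard lemma gives \(AP_M\to A\) and \(P_MA\to A\) in trace norm for any trace-class \(A\). One way to see this is to write \(A=A^{1/2}A^{1/2}\) with \(A^{1/2}\) Hilbert--Schmidt and use \(\norm{(I-P_M)A^{1/2}}_{\Hc}^2=\sum_{m>M}\norm{A^{1/2}u_m}^2\to0\). The triangle inequality then yields
\[
\norm{P_M\cK^\star P_M-\cK^\star}_{\Tc}\le\norm{P_M(\cK^\star P_M-\cK^\star)}_{\Tc}+\norm{(P_M-I)\cK^\star}_{\Tc}\to0 .
\]
Consequently \(\norm{\tilde\cT\cK_M\tilde\cT^\ast-\tilde\cT\cK^\star\tilde\cT^\ast}_{\Tc}\le\norm{\tilde\cT}^2\norm{\cK_M-\cK^\star}_{\Tc}\to0\).

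\textbf{Step 3 (continuity of the determinant).} This is the main obstacle: justifying that \(A\mapsto\log\detF(I+A)\) is continuous in trace norm on positive operators. I would use the standard estimate \(\abs{\detF(I+A)-\detF(I+B)}\le\norm{A-B}_{\Tc}\exp(\norm{A}_{\Tc}+\norm{B}_{\Tc}+1)\) from \cite{Simon}. Alternatively, the positive case can be handled directly through \(\log\detF(I+A)=\Tr\log(I+A)\) and operator concavity, which gives \(\abs{\Tr\log(I+A)-\Tr\log(I+B)}\le\norm{A-B}_{\Tc}\). Since both determinants are at least \(1\), the logarithm is Lipschitz on this range and the objective converges.

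Combining the three steps, \(C_M\ge\log_2\detF(I+\tilde\cT\cK_M\tilde\cT^\ast)\to C_\infty\), and together with monotonicity this gives \(C_M\uparrow C_\infty\).

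For the final claim about receive projections: for fixed \(M\), the matrix capacity with \(R_{M'}\) differs from the one with \(\tilde\cT P_M\) by a perturbation of size \(\norm{\tilde\cT P_M-R_{M'}\tilde\cT P_M}\to0\). This follows from Theorem~\ref{thm:galerkin}, since the finite-rank operator \(\tilde\cT P_M\) is compact. Applying the same Lipschitz bound, with \(\Tr\cK\le P_{\mathrm c}\) uniform over \(\mathfrak K_M\), shows the capacities converge uniformly. A diagonal argument in \((M,M')\) then gives the same limit \(C_\infty\).
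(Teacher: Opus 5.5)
Your argument is correct and is essentially the paper's own proof: monotonicity from $\mathfrak K_M\subseteq\mathfrak K_{M+1}\subseteq\mathfrak K$, compression $P_M\cK P_M$ of a (near-)optimal covariance, feasibility from the commutation of $P_M$ with $\cQ$, trace-norm convergence, and continuity of $\log\detF$; your Steps~2--3 only make explicit the Hilbert--Schmidt tail argument and the trace-norm Lipschitz bound that the paper uses implicitly. For the receive-projection claim, which the paper only asserts, note that \eqref{eq:matrixchan} projects before whitening, so for non-white $\cK_z$ the relevant compression is by the orthogonal projector $\Pi_{M'}$ onto $\cK_z^{1/2}\Range(R_{M'})$ rather than by $R_{M'}$ itself; since $\Pi_{M'}\to I$ strongly and $\norm{(I-\Pi_{M'})\tilde{\cT}P_M}\le\norm{(I-\Pi_{M'})\tilde{\cT}}\to0$ uniformly in $M$, your perturbation argument then goes through without any diagonal extraction, even with the paper's coupled index $M'=M$.
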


Theorem \ref{thm:CMtoCinfty} is the exact infinite-dimensional justification for finite-mode curvature-domain channels. It turns the modal truncation from a heuristic into a convergent approximation scheme.

\section{Receive-Side Curvature Estimation and Noise Amplification}
\label{sec:estimation}

\subsection{Discrete Phase Sampling}

Let the receive aperture be sampled at
\[
x_n=(n-1)\Delta x,
\qquad
n=1,\dots,N,
\]
where \(\Delta x=L/(N-1)\). Suppose coherent phase recovery yields
\begin{equation}
\hat{\bm{\phi}} = \bm{\phi} + \mathbf w_{\phi},
\qquad
\mathbf w_{\phi}\sim \mathcal N(\mathbf 0,\sigma_\phi^2 \I_N),
\label{eq:phase_noise}
\end{equation}
where \(\bm{\phi}\in\R^N\) contains unwrapped phase samples.

Define the second-difference operator
\begin{equation}
\mathbf D_2
=
\frac{1}{\Delta x^2}
\begin{bmatrix}
1 & -2 & 1 & 0 & \cdots & 0\\
0 & 1 & -2 & 1 & \cdots & 0\\
\vdots & & \ddots & \ddots & \ddots & \vdots\\
0 & \cdots & 0 & 1 & -2 & 1
\end{bmatrix}
\in \R^{(N-2)\times N}.
\label{eq:D2}
\end{equation}
The matrix \(\mathbf D_2\) is the \emph{interior} second-difference operator: it maps a phase vector sampled on the full grid \(\{x_n\}_{n=1}^{N}\) onto the curvatures inferred at the \(N-2\) interior grid points \(\{x_n\}_{n=2}^{N-1}\), \emph{with no boundary condition imposed on \(\bm{\phi}\)}. In particular, the discrete null space of \(\mathbf D_2\) is
\begin{equation}
\ker(\mathbf D_2) \;=\; \mathrm{span}\{\,\mathbf 1_N,\;\mathbf x_N\,\},
\qquad
[\mathbf x_N]_n=x_n,
\label{eq:D2_kernel}
\end{equation}
which is exactly the sampled affine gauge \(\cA\) of Section~\ref{sec:model}. Restricting the estimator to the gauge-fixed subspace \(\cV\) therefore makes the derivative-noise theory well-defined: on \(\cV\), \(\mathbf D_2\) has empty null space and is injective on its domain.

\begin{lemma}[Rank of \(\mathbf D_2\)]
\label{lem:D2_rank}
The interior second-difference matrix \(\mathbf D_2\in\R^{(N-2)\times N}\) has full row rank \(N-2\) \cite{HornJohnson}.
\end{lemma}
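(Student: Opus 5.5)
Since \(\mathbf D_2\) has exactly \(N-2\) rows, full row rank amounts to showing that these rows are linearly independent. Equivalently, the map \(\R^N\to\R^{N-2}\) defined by \(\bm\phi\mapsto\mathbf D_2\bm\phi\) should be surjective. I would give a direct argument that does not rely on the citation, and then cross-check it with the kernel identity \eqref{eq:D2_kernel}. The scalar factor \(\Delta x^{-2}\) is nonzero, so I will drop it and work with the integer stencil matrix \(\mathbf E\), whose \(i\)-th row is \(\mathbf e_i-2\mathbf e_{i+1}+\mathbf e_{i+2}\).

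\emph{Step 1 (triangular minor).} I would extract the \((N-2)\times(N-2)\) submatrix of \(\mathbf E\) formed by its first \(N-2\) columns. Row \(i\) has its leading entry \(1\) in column \(i\) and is zero in every column \(j<i\). The submatrix is therefore upper triangular with unit diagonal, so its determinant is \(1\neq 0\). This gives \(\rank\mathbf E\ge N-2\). Since \(\mathbf E\) has only \(N-2\) rows, \(\rank\mathbf E=N-2\).

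\emph{Step 2 (consistency with the kernel).} As a check, and to justify \eqref{eq:D2_kernel}, I would apply rank--nullity: \(\dim\ker\mathbf D_2=N-(N-2)=2\). A direct computation shows \(\mathbf D_2\mathbf 1_N=0\), since \(1-2+1=0\). It also shows \(\mathbf D_2\mathbf x_N=0\), since \(x_{i}-2x_{i+1}+x_{i+2}=0\) on a uniform grid. The vectors \(\mathbf 1_N\) and \(\mathbf x_N\) are linearly independent whenever \(N\ge 2\) and \(\Delta x>0\). Hence they span the kernel, which identifies \(\ker\mathbf D_2\) with the sampled affine gauge \(\cA\).

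\emph{Main obstacle.} None of this is substantive. The only point needing care is the degenerate range of \(N\). For \(N\le 2\) the matrix has no rows, so the statement is vacuous, and I would assume \(N\ge 3\). An alternative route is a forward-substitution argument: given any \(\mathbf c\in\R^{N-2}\), set \(\phi_1=\phi_2=0\) and define \(\phi_{i+2}=\Delta x^2 c_i+2\phi_{i+1}-\phi_i\). This solves \(\mathbf D_2\bm\phi=\mathbf c\) explicitly and proves surjectivity directly. It also exhibits the two free initial values as the affine gauge freedom.
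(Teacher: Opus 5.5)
Your proof is correct, and it is essentially the paper's own argument made precise. The paper's phrase ``pivots \(1/\Delta x^{2}\) along each row after Gauss elimination'' is exactly your observation that the first \(N-2\) columns of \(\mathbf D_2\) form an upper-triangular block with diagonal \(\Delta x^{-2}\), and hence have nonzero determinant.

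You were also right not to rely on the paper's alternative justification via diagonal dominance of \(\mathbf D_2\mathbf D_2^{\TT}\), because that claim does not hold. An interior row has diagonal \(6\) against an off-diagonal absolute sum of \(4+4+1+1=10\) (in units of \(\Delta x^{-4}\)), and the smallest eigenvalue of this pentadiagonal Toeplitz matrix actually tends to \(0\) as \(N\) grows. Your triangular-minor argument, or your forward-substitution variant, is therefore the sound way to establish the lemma.
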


\begin{proof}
The Vandermonde-like structure of the second-difference stencil rows makes \(\mathbf D_2\) upper-Hessenberg after transposition, with pivots \(1/\Delta x^{2}\) along each row after Gauss elimination. Equivalently, the pentadiagonal Gram matrix \(\mathbf D_2\mathbf D_2^{\TT}\in\R^{(N-2)\times(N-2)}\) is diagonally dominant with strictly positive diagonal (see \eqref{eq:kc_diag}--\eqref{eq:kc_off2} below), hence invertible; therefore \(\mathrm{rank}(\mathbf D_2)=N-2\).
\end{proof}

Consequently, \(\mathbf K_{\mathrm c}=\sigma_\phi^{2}\,\mathbf D_2\mathbf D_2^{\TT}\) has full rank \(N-2\) and inverts stably on the gauge-fixed subspace, and the pentadiagonal structure recorded in Theorem~\ref{thm:curvnoise} (below, equations~\eqref{eq:kc_off1}--\eqref{eq:kc_off2}) is well-defined without boundary-condition ambiguity.

The discrete curvature estimate is
\begin{equation}
\hat{\mathbf c} = \mathbf D_2 \hat{\bm{\phi}}.
\label{eq:chat}
\end{equation}

\begin{theorem}[Derivative-noise covariance]
\label{thm:curvnoise}
Under \eqref{eq:phase_noise}, the estimator \eqref{eq:chat} is unbiased for the discrete second difference \(\mathbf D_2\bm{\phi}\), and its noise covariance is
\begin{equation}
\mathbf K_{\mathrm c}
=
\sigma_\phi^2 \mathbf D_2 \mathbf D_2^\TT.
\label{eq:Kc}
\end{equation}
Moreover, \(\mathbf K_{\mathrm c}\) is pentadiagonal, with
\begin{align}
[\mathbf K_{\mathrm c}]_{ii} &= \frac{6\sigma_\phi^2}{\Delta x^4},
\label{eq:kc_diag}\\
[\mathbf K_{\mathrm c}]_{i,i\pm 1} &= -\frac{4\sigma_\phi^2}{\Delta x^4},
\label{eq:kc_off1}\\
[\mathbf K_{\mathrm c}]_{i,i\pm 2} &= \frac{\sigma_\phi^2}{\Delta x^4},
\label{eq:kc_off2}
\end{align}
with all other entries equal to zero. In addition,
\begin{equation}
\norm{\mathbf K_{\mathrm c}}_2
\le
\frac{16\sigma_\phi^2}{\Delta x^4}.
\label{eq:kc_norm}
\end{equation}
\end{theorem}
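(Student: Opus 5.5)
Unbiasedness and the covariance formula \eqref{eq:Kc} follow directly from linearity. Substituting \eqref{eq:phase_noise} into \eqref{eq:chat} gives \(\hat{\mathbf c}=\mathbf D_2\bm{\phi}+\mathbf D_2\mathbf w_\phi\). Since \(\E\mathbf w_\phi=\mathbf 0\), we get \(\E\hat{\mathbf c}=\mathbf D_2\bm{\phi}\). The error \(\mathbf D_2\mathbf w_\phi\) is zero-mean Gaussian with covariance \(\mathbf D_2(\sigma_\phi^2\I_N)\mathbf D_2^\TT=\sigma_\phi^2\mathbf D_2\mathbf D_2^\TT\). No gauge restriction is needed here, because \(\mathbf D_2\) acts on the raw samples.

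For the pentadiagonal entries I would compute the row inner products of \(\mathbf D_2\). Row \(i\) is \(\Delta x^{-2}(e_i-2e_{i+1}+e_{i+2})^\TT\), supported on columns \(i,i+1,i+2\).
\begin{itemize}
\item For \(\abs{i-k}\ge 3\) the supports are disjoint, so the entry is zero.
\item For \(k=i\) the entry is \(\Delta x^{-4}(1+4+1)=6\Delta x^{-4}\).
\item For \(k=i+1\) the overlap is on columns \(i+1,i+2\), giving \(\Delta x^{-4}((-2)(1)+(1)(-2))=-4\Delta x^{-4}\).
\item For \(k=i+2\) the only shared column is \(i+2\), giving \(\Delta x^{-4}\cdot 1\cdot 1\).
\end{itemize}
Multiplying by \(\sigma_\phi^2\) yields \eqref{eq:kc_diag}--\eqref{eq:kc_off2}. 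Symmetry covers the \(i-1,i-2\) entries. Because \(\mathbf D_2\) is the interior operator with full stencils in every row (Lemma~\ref{lem:D2_rank} context), there are no truncated boundary rows, so the diagonal is uniformly \(6\). I would state this explicitly, since a boundary-modified stencil would change the corner entries.

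For the norm bound \eqref{eq:kc_norm}, the cleanest route is \(\norm{\mathbf K_{\mathrm c}}_2=\sigma_\phi^2\norm{\mathbf D_2}_2^2\), with \(\norm{\mathbf D_2}_2\le 4\Delta x^{-2}\). To bound \(\norm{\mathbf D_2}_2\), write \(\Delta x^2\mathbf D_2=\mathbf J_0-2\mathbf J_1+\mathbf J_2\). Here each \(\mathbf J_k\in\R^{(N-2)\times N}\) is a partial shift selecting columns \(k+1,\dots,k+N-2\); it has orthonormal rows, so \(\norm{\mathbf J_k}_2=1\). The triangle inequality then gives \(\norm{\Delta x^2\mathbf D_2}_2\le 1+2+1=4\).

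As a cross-check, Gershgorin applied to the symmetric pentadiagonal \(\mathbf K_{\mathrm c}\) bounds every eigenvalue by \(\sigma_\phi^2\Delta x^{-4}(6+4+4+1+1)=16\sigma_\phi^2\Delta x^{-4}\). That settles \eqref{eq:kc_norm} in one line. Alternatively, \(\Delta x^2\mathbf D_2\mathbf w\) is the sequence \(\sum_k h_k w_{i+k}\) with \(h=(1,-2,1)\), and its \(\ell^2\) gain is at most \(\sup_\theta\abs{\hat h(\theta)}=\sup_\theta 4\sin^2(\theta/2)=4\).

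I do not expect a genuine obstacle. The only points needing care are the following.
\begin{itemize}
\item The boundary-row bookkeeping: the diagonal is \(6\) in every row, not \(5\) as in Dirichlet-truncated Laplacians.
\item The direction of the bound. Gershgorin or the symbol argument gives the upper bound directly, and the \(\Theta(\Delta x^{-4})\) claim of the abstract would additionally need a lower bound. For that I would test \(\mathbf K_{\mathrm c}\) on the alternating vector \(((-1)^i)\). For interior rows this gives Rayleigh quotient \(6+8+2=16\), minus \(O(1/N)\) boundary corrections, so the bound is approached from below as \(N\to\infty\).
\end{itemize}
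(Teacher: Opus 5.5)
Your proposal is correct and follows essentially the paper's own proof. You substitute \eqref{eq:phase_noise} to get unbiasedness and $\mathbf K_{\mathrm c}=\sigma_\phi^2\mathbf D_2\mathbf D_2^{\TT}$, read off the entries $6,-4,1$ from overlapping stencils, and use $\norm{\mathbf K_{\mathrm c}}_2=\sigma_\phi^2\norm{\mathbf D_2}_2^2$ with $\norm{\mathbf D_2}_2\le 4\Delta x^{-2}$, where your triangle inequality over $\mathbf J_0-2\mathbf J_1+\mathbf J_2$ is just the paper's $\ell^1$-stencil bound $\norm{[1,-2,1]}_1=4$ written out. Your alternating-vector lower bound goes beyond what the paper proves here and is correct (the Rayleigh quotient is exactly $16-12/(N-2)$ in units of $\sigma_\phi^2\Delta x^{-4}$), so it supplies the $\Theta(\Delta x^{-4})$ claim made in the abstract.
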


Theorem \ref{thm:curvnoise} makes the sensing bottleneck explicit: finer spatial sampling amplifies curvature noise quartically unless regularization or bandwidth control is imposed.

\subsection{Projected Noise Covariance}

Let \(\mathbf U_M\in\R^{(N-2)\times M}\) collect samples of \(M\) discrete curvature basis functions. The projected observation is
\begin{equation}
\mathbf y_M = \mathbf U_M^\TT \hat{\mathbf c}
= \mathbf U_M^\TT \mathbf c + \mathbf n_M,
\label{eq:discproj}
\end{equation}
with covariance
\begin{equation}
\mathbf K_{n,M}
=
\mathbf U_M^\TT \mathbf K_{\mathrm c} \mathbf U_M.
\label{eq:discKn}
\end{equation}
Hence the finite-dimensional information-theoretic noise matrix is directly computable from aperture sampling geometry.

\subsection{Regularized Curvature Reconstruction}

Let \(\widetilde{\bm{\phi}}\) denote the piston/tilt-removed phase vector, and let \(\mathbf S_M\in\R^{N\times M}\) be the phase-synthesis matrix whose \(m\)-th column contains samples of \(\xi_m=\cS u_m\). Consider
\begin{equation}
\hat{\mathbf a}_{\lambda}
=
\arg\min_{\mathbf a\in\R^M}
\frac{1}{\sigma_\phi^2}
\norm{\widetilde{\bm{\phi}}-\mathbf S_M \mathbf a}_2^2
+
\lambda \norm{\mathbf L \mathbf a}_2^2,
\label{eq:regest}
\end{equation}
where \(\mathbf L\) is a regularization operator.

Throughout this paper we specify \(\mathbf L=\mathbf D_2\), the interior second-difference operator of \eqref{eq:D2}. With this choice, \(\mathbf L^{\TT}\mathbf L=\mathbf D_2^{\TT}\mathbf D_2\) is the discrete biharmonic and the regularizer \(\lambda\norm{\mathbf D_2 \mathbf a}_{2}^{2}\) is a discrete \(H^{2}\)-Sobolev penalty on the reconstructed curvature. This is the natural choice for four reasons: (i) it matches the phase-realizability metric \(\mathbf M=\mathbf L_{\mathrm curv}+\varepsilon \mathbf W_{\mathrm t}\) used at the transmitter in the companion benchmark paper, so transmit and receive regularizations are consistent; (ii) the resulting Fredholm formulation of Section~\ref{sec:capacity_infinite} inherits the same \(\rho_m\asymp m^{-4}\) modal-decay signature at the receiver, closing the derivation loop; (iii) the closed-form bias--variance decomposition of Proposition~\ref{prop:bv} depends only on the sparsity pattern of \(\mathbf L\), so the analysis is portable to any weighted or graph-Laplacian variant; and (iv) the choice can be replaced by a learned operator (Section~\ref{sec:limitations}), which is the natural entry point for machine-learning collaborations without changing the analytical form of the MSE bound. When we refer to ``the regularizer'' without qualification, we mean \(\mathbf L=\mathbf D_2\).

\begin{proposition}[Bias--variance decomposition \cite{Kay}]
\label{prop:bv}
Define
\[
\mathbf R_\lambda
\triangleq
\paren{
\mathbf S_M^\TT \mathbf S_M
+
\lambda \sigma_\phi^2 \mathbf L^\TT \mathbf L
}^{-1}.
\]
Then
\begin{equation}
\hat{\mathbf a}_{\lambda}
=
\mathbf R_\lambda \mathbf S_M^\TT \widetilde{\bm{\phi}}.
\label{eq:regsol}
\end{equation}
If \(\widetilde{\bm{\phi}}=\mathbf S_M \mathbf a + \mathbf w\) with \(\mathbf w\sim \mathcal N(\mathbf 0,\sigma_\phi^2 \I)\), then
\begin{align}
\E[\hat{\mathbf a}_\lambda]-\mathbf a
&=
-\lambda \sigma_\phi^2
\mathbf R_\lambda \mathbf L^\TT \mathbf L \mathbf a,
\label{eq:bias}\\
\Cov(\hat{\mathbf a}_\lambda)
&=
\sigma_\phi^2
\mathbf R_\lambda
\mathbf S_M^\TT \mathbf S_M
\mathbf R_\lambda,
\label{eq:covreg}
\end{align}
and
\begin{align}
\E \norm{\hat{\mathbf a}_\lambda-\mathbf a}_2^2
&=
\lambda^2 \sigma_\phi^4
\norm{
\mathbf R_\lambda \mathbf L^\TT \mathbf L \mathbf a
}_2^2
\nonumber\\
&\quad+
\sigma_\phi^2
\Tr\!\paren{
\mathbf R_\lambda
\mathbf S_M^\TT \mathbf S_M
\mathbf R_\lambda
}.
\label{eq:mse}
\end{align}
\end{proposition}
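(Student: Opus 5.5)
Since \(\widehat{\mathbf a}_\lambda\) minimizes a strictly convex quadratic, I will first derive \eqref{eq:regsol} from the normal equations. The gradient of the objective in \eqref{eq:regest} with respect to \(\mathbf a\) is
\[
-\tfrac{2}{\sigma_\phi^{2}}\mathbf S_M^\TT(\widetilde{\bm\phi}-\mathbf S_M\mathbf a)+2\lambda\mathbf L^\TT\mathbf L\mathbf a .
\]
Multiplying its vanishing by \(\sigma_\phi^{2}/2\) gives \((\mathbf S_M^\TT\mathbf S_M+\lambda\sigma_\phi^{2}\mathbf L^\TT\mathbf L)\mathbf a=\mathbf S_M^\TT\widetilde{\bm\phi}\). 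The one genuine point to check is invertibility of \(\mathbf R_\lambda^{-1}\), so that the minimizer is unique. Positive semidefiniteness is clear; for definiteness I need \(\ker\mathbf S_M\cap\ker\mathbf L=\{0\}\). This follows from \(\mathbf S_M\) having full column rank. The columns of \(\mathbf S_M\) are samples of the \(\xi_m=\cS u_m\), which are linearly independent functions because \(\cS\) is injective (Theorem~\ref{thm:synthesis}). For a sampling grid fine enough that these samples stay independent, \(\mathbf S_M^\TT\mathbf S_M\succ0\). I expect this to be the only real obstacle. If full column rank is not available, I will state explicitly that the proposition is read under the standing assumption that \(\mathbf R_\lambda\) exists, which is the assumption implicit in its definition. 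Either way, \(\mathbf R_\lambda^{-1}\succ0\) for \(\lambda\ge0\) gives strict convexity, and \eqref{eq:regsol} follows.

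Next I substitute \(\widetilde{\bm\phi}=\mathbf S_M\mathbf a+\mathbf w\) to get
\[
\widehat{\mathbf a}_\lambda=\mathbf R_\lambda\mathbf S_M^\TT\mathbf S_M\mathbf a+\mathbf R_\lambda\mathbf S_M^\TT\mathbf w .
\]
The key algebraic step is the identity
\[
\mathbf R_\lambda\mathbf S_M^\TT\mathbf S_M=\mathbf R_\lambda\bigl(\mathbf R_\lambda^{-1}-\lambda\sigma_\phi^{2}\mathbf L^\TT\mathbf L\bigr)=\I-\lambda\sigma_\phi^{2}\mathbf R_\lambda\mathbf L^\TT\mathbf L .
\]
Taking expectations with \(\E\mathbf w=\mathbf 0\) then yields the bias \eqref{eq:bias}. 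The fluctuation is \(\widehat{\mathbf a}_\lambda-\E\widehat{\mathbf a}_\lambda=\mathbf R_\lambda\mathbf S_M^\TT\mathbf w\). Using \(\E\mathbf w\mathbf w^\TT=\sigma_\phi^{2}\I\) and the symmetry \(\mathbf R_\lambda^\TT=\mathbf R_\lambda\), its covariance is \(\sigma_\phi^{2}\mathbf R_\lambda\mathbf S_M^\TT\mathbf S_M\mathbf R_\lambda\), which is \eqref{eq:covreg}.

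Finally, I obtain the MSE \eqref{eq:mse} from the standard decomposition
\[
\E\norm{\widehat{\mathbf a}_\lambda-\mathbf a}^{2}=\norm{\E\widehat{\mathbf a}_\lambda-\mathbf a}^{2}+\Tr\Cov(\widehat{\mathbf a}_\lambda).
\]
The cross term vanishes because the fluctuation has zero mean. Inserting \eqref{eq:bias} and \eqref{eq:covreg} gives the two terms shown.

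One point to settle before writing it up is the normalization of the variance term. The covariance computed above already carries the factor \(\sigma_\phi^{2}\), so its trace is \(\sigma_\phi^{2}\Tr(\mathbf R_\lambda\mathbf S_M^\TT\mathbf S_M\mathbf R_\lambda)\), which is exactly the second term of \eqref{eq:mse}. I will write this bookkeeping out explicitly so that no power of \(\sigma_\phi\) is double-counted. Nothing here uses Gaussianity beyond the first two moments of \(\mathbf w\); I will remark on this, consistent with the citation to \cite{Kay}.
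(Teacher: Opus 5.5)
Your proposal is correct and follows the paper's proof essentially step for step. The normal equations give \eqref{eq:regsol}, the identity \(\mathbf R_\lambda\mathbf S_M^\TT\mathbf S_M=\I-\lambda\sigma_\phi^{2}\mathbf R_\lambda\mathbf L^\TT\mathbf L\) gives the bias, the sandwich \(\mathbf R_\lambda\mathbf S_M^\TT\Cov(\mathbf w)\mathbf S_M\mathbf R_\lambda\) gives the covariance, and squared bias plus the trace of the covariance gives the MSE; your additional check that \(\mathbf R_\lambda^{-1}\succ 0\), which the paper simply presupposes by defining \(\mathbf R_\lambda\) as an inverse, is a sound but optional refinement.
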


\section{Detection Geometry, Codebooks, and Design Problems}

\subsection{Whitened Geometric Distance}

For two curvature-domain symbols \(\mathbf a,\mathbf b\in\C^M\), define
\begin{equation}
d_M^2(\mathbf a,\mathbf b)
\triangleq
(\mathbf a-\mathbf b)^\HH
\mathbf B_M
(\mathbf a-\mathbf b),
\label{eq:dist}
\end{equation}
where \(\mathbf B_M\) is defined in \eqref{eq:BM}.

\begin{proposition}[Pairwise error exponent]
\label{prop:pep}
For maximum-likelihood detection on \eqref{eq:matrixchan},
\begin{equation}
P(\mathbf a\to \mathbf b \mid \mathbf H_M)
=
Q\!\paren{
\frac{d_M(\mathbf a,\mathbf b)}{\sqrt{2}}
},
\label{eq:pep}
\end{equation}
and therefore
\begin{equation}
P(\mathbf a\to \mathbf b \mid \mathbf H_M)
\le
\frac{1}{2}\exp\!\paren{-\frac{d_M^2(\mathbf a,\mathbf b)}{4}}.
\label{eq:pepbound}
\end{equation}
\end{proposition}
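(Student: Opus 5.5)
The plan is to reduce pairwise ML detection to a scalar Gaussian test by whitening. Under Assumption~\ref{ass:noise} the projected noise \(\mathbf n_M\) in \eqref{eq:matrixchan} is proper complex Gaussian with \(\mathbf K_{n,M}\succ 0\). Whiten by \(\tilde{\mathbf y}=\mathbf K_{n,M}^{-1/2}\mathbf y_M\), which gives \(\tilde{\mathbf y}=\tilde{\mathbf H}\mathbf a+\tilde{\mathbf n}\) with \(\tilde{\mathbf n}\sim\CN(\mathbf 0,\I)\) and \(\tilde{\mathbf H}=\mathbf K_{n,M}^{-1/2}\mathbf H_M\). Because whitening is invertible, it does not change ML decisions.

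For the binary comparison of \(\mathbf a\) against \(\mathbf b\), the likelihoods are proportional to \(\exp(-\|\tilde{\mathbf y}-\tilde{\mathbf H}\mathbf a\|^2)\). The pairwise error event \(\{\|\tilde{\mathbf y}-\tilde{\mathbf H}\mathbf b\|^2<\|\tilde{\mathbf y}-\tilde{\mathbf H}\mathbf a\|^2\}\) given that \(\mathbf a\) was sent is therefore a half-space event. Set \(\mathbf v=\tilde{\mathbf H}(\mathbf a-\mathbf b)\). Then \(\|\mathbf v\|^2=(\mathbf a-\mathbf b)^\HH\mathbf B_M(\mathbf a-\mathbf b)=d_M^2\) by \eqref{eq:BM} and \eqref{eq:dist}. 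Expanding the squares, the error event becomes
\[
\RePart\bigl(\mathbf v^\HH\tilde{\mathbf n}\bigr)<-\tfrac12\|\mathbf v\|^2 .
\]
Since \(\tilde{\mathbf n}\) is circular with covariance \(\I\), the variable \(\RePart(\mathbf v^\HH\tilde{\mathbf n})\) is real Gaussian with mean zero and variance \(\|\mathbf v\|^2/2\). The probability is thus \(Q\bigl(\tfrac{\|\mathbf v\|^2/2}{\|\mathbf v\|/\sqrt2}\bigr)=Q(d_M/\sqrt2)\), which is \eqref{eq:pep}. If \(d_M=0\) the event is a tie; the usual convention gives \(Q(0)=1/2\), and this is consistent with the bound.

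The bound \eqref{eq:pepbound} then follows from the Chernoff-type inequality \(Q(x)\le\tfrac12 e^{-x^2/2}\) for \(x\ge0\), evaluated at \(x=d_M/\sqrt2\). That inequality has a short proof: substitute \(t=x+s\) in the Gaussian tail integral, use \(e^{-xs}\le1\), and obtain \(Q(x)\le e^{-x^2/2}Q(0)\).

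The main obstacle is the variance convention in the complex model. Whether \(\RePart(\mathbf v^\HH\tilde{\mathbf n})\) has variance \(\|\mathbf v\|^2/2\) depends on the normalization \(\CN(\mathbf 0,\I)\), meaning total variance \(1\) per complex entry. This is exactly what produces the \(\sqrt2\) in \eqref{eq:pep}. I would state the convention explicitly, so that \eqref{eq:pep} and \(\mathbf B_M\) are mutually consistent. I would also note that for the real-valued implementation (Remark after Theorem~\ref{thm:capacity}) the argument changes only by this normalization factor.
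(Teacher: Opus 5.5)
Your proposal is correct and follows the paper's proof essentially step for step: whiten, reduce the pairwise ML event to a half-space test on $\RePart(\mathbf v^\HH\tilde{\mathbf n})\sim\mathcal N(0,\|\mathbf v\|^2/2)$, identify $\|\mathbf v\|^2=d_M^2$ via $\mathbf B_M$, and apply $Q(x)\le\tfrac12 e^{-x^2/2}$ at $x=d_M/\sqrt2$. You also write the error event with the correct sign, $\RePart(\mathbf v^\HH\tilde{\mathbf n})<-\tfrac12\|\mathbf v\|^2$; the paper's $\ge\|\bm{\delta}\|^2/2$ is a sign slip that is harmless by symmetry. And you supply the short proof of the Chernoff-type tail bound, which the paper takes for granted.
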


\begin{remark}[Pairwise-error normalization]
Proposition~\ref{prop:pep} uses the standard proper-complex whitened-noise convention, under which the real decision statistic has variance \(d_M^2/2\) and the pairwise-error probability is \(Q(d_M/\sqrt{2})\). If instead one simulates a real scalar sufficient statistic with unit variance and normalized separation \(d\), the corresponding reference curve is \(Q(d)\). The numerical diagnostic in Section~\ref{sec:simulations} uses this real-scalar normalization, while the complex-equivalent formula remains \eqref{eq:pep}.
\end{remark}

\subsection{Finite-Codebook Packing}

For a codebook \(\mathcal C=\braces{\mathbf a^{(1)},\dots,\mathbf a^{(K)}}\), the natural design problem is
\begin{subequations}
\begin{align}
\max_{\mathcal C}\quad
&
\min_{i\neq j}
d_M^2\paren{\mathbf a^{(i)},\mathbf a^{(j)}}
\label{eq:cbobj}\\
\text{s.t.}\quad
&
\norm{\mathbf a^{(k)}}_2^2 \le P_{\mathrm c},
\qquad k=1,\dots,K,
\label{eq:cb1}\\
&
(\mathbf a^{(k)})^\HH \mathbf G_M \mathbf a^{(k)} \le P_{\phi},
\qquad k=1,\dots,K.
\label{eq:cb2}
\end{align}
\label{eq:P2}
\end{subequations}

Unlike Euclidean packing, \eqref{eq:P2} is shaped jointly by the propagation operator and the phase-synthesis penalty.

\subsection{System Design Problems}

The framework induces three optimization problems.

\subsubsection*{P1: Capacity-optimal covariance}
\begin{subequations}
\begin{align}
\max_{\mathbf K_{a,M}\succeq 0}\quad
&
\log_2\det\!\paren{
\I + \mathbf K_{n,M}^{-1/2}\mathbf H_M \mathbf K_{a,M}\mathbf H_M^\HH \mathbf K_{n,M}^{-1/2}
}
\label{eq:P1a}\\
\text{s.t.}\quad
&
\Tr(\mathbf K_{a,M})\le P_{\mathrm c},
\label{eq:P1b}\\
&
\Tr(\mathbf G_M \mathbf K_{a,M})\le P_{\phi}.
\label{eq:P1c}
\end{align}
\label{eq:P1}
\end{subequations}

\subsubsection*{P2: Finite-codebook packing}
This is exactly \eqref{eq:P2}.

\subsubsection*{P3: Programmable-environment design}
Let a programmable environment be parameterized by \(\bm{\theta}\in\Theta\), yielding \(\mathbf H_M(\bm{\theta})\). Then
\begin{subequations}
\begin{align}
\max_{\bm{\theta}\in\Theta}\quad
&
\log_2\det\!\paren{
\I + \mathbf K_{n,M}^{-1/2}\mathbf H_M(\bm{\theta}) \mathbf K_{a,M}
\mathbf H_M^\HH(\bm{\theta}) \mathbf K_{n,M}^{-1/2}
}
\label{eq:P3a}
\end{align}
\label{eq:P3}
\end{subequations}
is the natural programmable-surface objective. A structure-preserving alternative is
\begin{equation}
\eta_{\mathrm f}(\bm{\theta})
=
\frac{
\abs{
\inner{c}{\cT(\bm{\theta})c}_{\cY}
}^2
}{
\norm{c}_{\cX}^2
\norm{\cT(\bm{\theta})c}_{\cY}^2
},
\label{eq:etaf}
\end{equation}
which measures curvature coherence rather than received power alone.

\section{Scaling Laws}

\subsection{Useful-Mode Counting}

Consider the diagonal modal model
\begin{equation}
\mathrm{snr}_m
=
\frac{\alpha p}{\sigma_0^2 + \beta \lambda_m^2},
\qquad
\lambda_m = \paren{\frac{m\pi}{L}}^2,
\label{eq:snrmodel}
\end{equation}
where \(\alpha p\) is modal signal power, \(\sigma_0^2\) is baseline noise, and \(\beta\lambda_m^2\) captures differentiation-induced loss.

\begin{proposition}[Useful-mode law]
\label{prop:useful}
Fix a reliability threshold \(\Gamma>0\). The number of \(\Gamma\)-useful modes,
\[
M_{\mathrm{use}}(\Gamma)
\triangleq
\#\braces{m:\mathrm{snr}_m\ge \Gamma},
\]
satisfies
\begin{equation}
M_{\mathrm{use}}(\Gamma)
=
\left\lfloor
\frac{L}{\pi}
\paren{
\frac{\alpha p/\Gamma - \sigma_0^2}{\beta}
}^{1/4}
\right\rfloor_+.
\label{eq:muse}
\end{equation}
\end{proposition}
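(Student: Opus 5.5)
The plan is to reduce the counting problem to a single monotone inequality in \(m\) and solve it explicitly. Throughout I assume, as the model \eqref{eq:snrmodel} implicitly does, that \(\alpha p>0\), \(\beta>0\), \(\sigma_0^2\ge 0\), and that \(m\) ranges over the positive integers \(m\ge 1\).

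First I show that \(\mathrm{snr}_m\) is strictly decreasing in \(m\). Indeed \(\lambda_m^2=(m\pi/L)^4\) is strictly increasing. Hence the denominator \(\sigma_0^2+\beta\lambda_m^2\) is strictly increasing and positive, while the numerator \(\alpha p\) is a fixed positive constant. Consequently the set \(\{m:\mathrm{snr}_m\ge\Gamma\}\) is an initial segment \(\{1,\dots,M\}\) of the integers (possibly empty). Its cardinality is therefore the largest integer \(m\ge 1\) satisfying the threshold, or \(0\) if no such integer exists.

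Next I rewrite the threshold condition. Since the denominator is positive, \(\mathrm{snr}_m\ge\Gamma\) is equivalent to \(\beta(m\pi/L)^4\le \alpha p/\Gamma-\sigma_0^2\). I then split into two cases.

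\emph{Case 1:} \(\alpha p/\Gamma-\sigma_0^2<0\). No \(m\) qualifies, so \(M_{\mathrm{use}}=0\). In this case the real fourth root in \eqref{eq:muse} is undefined, so I read the formula with the convention that the expression inside \(\lfloor\cdot\rfloor_+\) is replaced by \(\bigl[\cdot\bigr]_+\) before taking the root. I would state this convention explicitly; it is the main (minor) obstacle, since the statement as written is only literally meaningful when \(\alpha p/\Gamma\ge\sigma_0^2\).

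\emph{Case 2:} \(\alpha p/\Gamma-\sigma_0^2\ge 0\). Taking nonnegative fourth roots, which is monotone on \([0,\infty)\), the condition becomes
\[
m\le \frac{L}{\pi}\Bigl(\frac{\alpha p/\Gamma-\sigma_0^2}{\beta}\Bigr)^{1/4}\triangleq x.
\]
The number of positive integers \(m\le x\) is \(\lfloor x\rfloor\) when \(x\ge 0\). This equals \(\lfloor x\rfloor_+\), and \(\lfloor x\rfloor_+=0\) automatically when \(x<1\).

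Combining the two cases with the initial-segment property from the first step yields \eqref{eq:muse}. One edge case remains: equality \(\mathrm{snr}_m=\Gamma\) when \(x\) is an integer. This is handled correctly because the inequality is non-strict on both sides, so \(m=x\) is counted and \(\lfloor x\rfloor=x\).
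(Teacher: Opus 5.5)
Your proof is correct and takes the same route as the paper's: rewrite \(\mathrm{snr}_m\ge\Gamma\) as \(\beta(m\pi/L)^4\le \alpha p/\Gamma-\sigma_0^2\), take fourth roots, and count the positive integers below the resulting threshold. Your separate handling of a negative radicand, by clipping it at zero before taking the root, is more careful than the paper's appeal to the positive-floor convention, which as written leaves the real fourth root of a negative number undefined.
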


\subsection{Differentiation-Limited Root-Law}

\begin{theorem}[Differentiation-limited capacity scaling]
\label{thm:rootlaw}
Consider the diagonal modal channel
\begin{equation}
C(\Gamma)
=
\sum_{m=1}^{\infty}
\log_2\!\paren{
1+\frac{\Gamma}{1+\beta m^4}
},
\qquad
\Gamma\ge 1,\;\beta>0.
\label{eq:Cgamma}
\end{equation}
Then there exist finite constants \(0<c_1(\beta)\le c_2(\beta)<\infty\) such that
\begin{equation}
c_1(\beta)\,\Gamma^{1/4}
\le
C(\Gamma)
\le
c_2(\beta)\,\Gamma^{1/4},
\qquad \Gamma\ge 1.
\label{eq:rootlaw}
\end{equation}
Hence
\begin{equation}
C(\Gamma)=\Theta(\Gamma^{1/4}).
\label{eq:theta}
\end{equation}
\end{theorem}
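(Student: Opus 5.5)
We prove Theorem~\ref{thm:rootlaw} by splitting the modal sum at the crossover index where the modal SNR is of order one. Set
\[
m_\Gamma \triangleq (\Gamma/\beta)^{1/4},
\]
so that \(\Gamma/(1+\beta m^4)\gtrsim 1\) for \(m\lesssim m_\Gamma\) and decays like \((m_\Gamma/m)^4\) beyond. Each term is decreasing in \(m\), so \(m\mapsto f(m)\triangleq \log_2(1+\Gamma/(1+\beta m^4))\) is monotone. This lets us sandwich the sum between integrals,
\[
\int_1^\infty f\le C(\Gamma)\le f(1)+\int_1^\infty f .
\]
The substitution \(m=m_\Gamma t\) then reduces the whole problem to the scaled function
\[
g_\Gamma(t)=\log_2\!\bigl(1+\Gamma/(1+\Gamma t^4)\bigr).
\]

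\emph{Upper bound.} Split the sum at \(m\le \lceil m_\Gamma\rceil\) and \(m>m_\Gamma\).
\begin{itemize}
\item For the low modes, use \(1+\beta m^4\ge 1\), so each term is at most \(\log_2(1+\Gamma)\). Multiplying by \(\lceil m_\Gamma\rceil\) gives \(O(\Gamma^{1/4}\log\Gamma)\). This is too weak, so it must be refined.
\item The refinement uses \(1+\beta m^4\ge \beta m^4\), which bounds each low-mode term by \(\log_2(1+(m_\Gamma/m)^4)\). Its sum over \(m\le m_\Gamma\) is at most \(\int_0^{m_\Gamma}\log_2(1+(m_\Gamma/m)^4)\,dm = m_\Gamma\int_0^1\log_2(1+t^{-4})\,dt\). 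That integral is finite because the only singularity is a \(\log t\) singularity, which is integrable.
\item For the tail, use \(\log_2(1+x)\le x/\ln 2\) and \(\Gamma/(1+\beta m^4)\le (m_\Gamma/m)^4\). This gives a contribution of at most \(\frac{1}{\ln 2}\sum_{m>m_\Gamma}(m_\Gamma/m)^4\le \frac{m_\Gamma}{3\ln 2}(1+o(1))\), with a uniform constant because \(m_\Gamma\ge \beta^{-1/4}\).
\end{itemize}
Both pieces are \(O(m_\Gamma)=O(\beta^{-1/4}\Gamma^{1/4})\), which yields \(c_2(\beta)\). The one subtlety is that the \(m=1\) term, or the ceiling, may dominate when \(m_\Gamma<1\). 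In that case bound \(C(\Gamma)\) directly by \(\frac{\Gamma}{\ln 2}\sum_m(1+\beta m^4)^{-1}\). Since \(\Gamma\le\beta\) in this regime, we have \(\Gamma\le \beta^{3/4}\Gamma^{1/4}\), so the bound is again of the form constant times \(\Gamma^{1/4}\).

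\emph{Lower bound.} Keep only the modes \(1\le m\le \lfloor m_\Gamma\rfloor\). For these, \(\beta m^4\le \Gamma\), so \(1+\beta m^4\le 1+\Gamma\le 2\Gamma\). Each term is then at least \(\log_2(1+\tfrac12)\), giving
\[
C(\Gamma)\ge \lfloor m_\Gamma\rfloor\log_2\tfrac32 .
\]
When \(m_\Gamma\ge 2\), we have \(\lfloor m_\Gamma\rfloor\ge m_\Gamma/2\), which gives a bound of order \(\Gamma^{1/4}\). When \(m_\Gamma<2\) (possibly even \(<1\)), the range of \(\Gamma\) is \([1,16\beta)\), a compact interval. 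On it, \(C(\Gamma)\ge\log_2(1+\Gamma/(1+\beta))>0\) while \(\Gamma^{1/4}\le(16\beta)^{1/4}\), so a positive constant again works.

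Taking the minimum of the two regime constants gives \(c_1(\beta)\), and together with \(c_2(\beta)\) this establishes \eqref{eq:rootlaw}. The main obstacle is the low-mode part of the upper bound: the naive bound carries a spurious \(\log\Gamma\) factor. The fix is the scaling argument that converts the low-mode sum into \(m_\Gamma\int_0^1\log_2(1+t^{-4})\,dt\), together with checking that this integral is finite.
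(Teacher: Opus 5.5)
Your proof is correct and follows the paper's argument. The lower bound keeps the modes below the crossover index $(\Gamma/\beta)^{1/4}$ (the paper uses $(\Gamma/2\beta)^{1/4}$) with a constant per-mode floor and treats small $\Gamma$ separately, and the upper bound uses the same domination $\Gamma/(1+\beta m^4)\le(\Gamma/\beta)m^{-4}$ followed by rescaling to a finite integral of $\log_2(1+t^{-4})$. The only difference is cosmetic: you split at $m_\Gamma$ and linearize the tail, whereas the paper bounds the whole sum by the single dominated integral $\Gamma^{1/4}\int_0^\infty\log_2(1+1/(\beta t^4))\,dt$; your refined bounds already cover $m_\Gamma<1$, so that separate case is unnecessary though harmless.
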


\begin{corollary}[Sharp constant in the differentiation-limited law]
\label{cor:rootlaw_sharp}
The constants in \eqref{eq:rootlaw} can be identified exactly in the limit:
\begin{equation}
\lim_{\Gamma\to\infty}
\frac{C(\Gamma)}{(\Gamma/\beta)^{1/4}}
=
\frac{\pi\sqrt{2}}{\ln 2}
=6.40972\ldots
\label{eq:rootlaw_sharp}
\end{equation}
so for every \(\varepsilon>0\) the constants of Theorem~\ref{thm:rootlaw} may be taken as
\(c_{1}=(1-\varepsilon)\pi\sqrt2/(\beta^{1/4}\ln 2)\) and
\(c_{2}=(1+\varepsilon)\pi\sqrt2/(\beta^{1/4}\ln 2)\) for all sufficiently
large \(\Gamma\). The law is therefore not merely of order \(\Gamma^{1/4}\):
its leading coefficient is a pure number, independent of every parameter of the
problem except the gauge strength \(\beta\).
\end{corollary}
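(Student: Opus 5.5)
The plan is to read the sum \eqref{eq:Cgamma} as a Riemann sum at the natural scale \(m\sim(\Gamma/\beta)^{1/4}\) and compare it with an explicit integral. Set \(h\triangleq(\beta/\Gamma)^{1/4}\), so that \(h\to0\) as \(\Gamma\to\infty\), and define the decreasing, integrable function \(f(t)=\log_2(1+t^{-4})\) on \((0,\infty)\). Since \(\Gamma/(\beta m^{4})=(hm)^{-4}\), the ``pure'' model in which the \(+1\) is dropped has terms \(f(hm)\). The target constant is then \(\int_0^\infty f(t)\,dt\), which I compute in closed form.

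\textbf{The integral.} Integrating \(\ln(1+t^{-4})\) by parts, the boundary term \(t\ln(1+t^{-4})\) vanishes at both ends: it behaves like \(-4t\ln t\to0\) as \(t\to0\), and like \(t^{-3}\to0\) as \(t\to\infty\). This leaves
\[
\int_0^\infty \ln(1+t^{-4})\,dt=\int_0^\infty \frac{4\,dt}{1+t^{4}} = 4\cdot\frac{\pi}{2\sqrt2}=\pi\sqrt2,
\]
using the standard value \(\int_0^\infty(1+t^4)^{-1}dt=\pi/(2\sqrt2)\), obtained from the Beta function or residues. Dividing by \(\ln2\) gives \(\int_0^\infty f=\pi\sqrt2/\ln 2\), which is the constant in \eqref{eq:rootlaw_sharp}.

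\textbf{Upper bound.} Since \(1+\beta m^4\ge\beta m^4\), each term of \(C(\Gamma)\) is at most \(f(hm)\). Because \(f\) is decreasing, \(h\sum_{m\ge1}f(hm)\le\int_0^\infty f\). Hence
\[
\frac{C(\Gamma)}{(\Gamma/\beta)^{1/4}} = h\,C(\Gamma)\le \int_0^\infty f\qquad\text{for every }\Gamma .
\]

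\textbf{Lower bound.} This is the step where the \(+1\) in the denominator must be controlled, and I expect it to be the only real obstacle. Fix \(\delta>0\) and choose \(m_\delta\) with \(1+\beta m^4\le(1+\delta)\beta m^4\) for all \(m\ge m_\delta\). For these \(m\) the term is at least \(f\bigl((1+\delta)^{1/4}hm\bigr)\). Monotonicity then gives
\[
h\sum_{m\ge m_\delta} f\bigl((1+\delta)^{1/4}hm\bigr)\ \ge\ (1+\delta)^{-1/4}\int_{(1+\delta)^{1/4}h m_\delta}^{\infty} f(t)\,dt .
\]
As \(h\to0\), the lower limit tends to \(0\), so the right-hand side tends to \((1+\delta)^{-1/4}\int_0^\infty f\) by monotone convergence; integrability of \(f\) at \(0\) holds because \(f(t)\sim 4\log_2(1/t)\) there. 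The finitely many discarded terms \(m<m_\delta\) are nonnegative, so dropping them only weakens the lower bound; they are also each \(O(\log\Gamma)\) and thus irrelevant after multiplication by \(h\). Letting \(\Gamma\to\infty\) and then \(\delta\to0\) yields \(\liminf h\,C(\Gamma)\ge\int_0^\infty f\).

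\textbf{Conclusion.} Combining the two bounds gives \eqref{eq:rootlaw_sharp}. The statement about \(c_1,c_2\) is just the \(\varepsilon\)-definition of this limit, after rewriting \((\Gamma/\beta)^{1/4}=\beta^{-1/4}\Gamma^{1/4}\).
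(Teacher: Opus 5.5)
Your proof is correct, and it differs from the paper's mainly in the order in which the limit and the \(+1\) are handled.

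The paper keeps the \(+1\) throughout. It applies the integral test to \(f_\Gamma(x)=\log_2(1+\Gamma/(1+\beta x^4))\) itself, which gives \(|C(\Gamma)-\int_0^\infty f_\Gamma|\le\log_2(1+\Gamma)\). It then rescales by \(x=(\Gamma/\beta)^{1/4}u\) and passes \(\Gamma\to\infty\) inside the integral by monotone convergence, using \(\Gamma/(1+\Gamma u^4)\uparrow u^{-4}\).

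You rescale first and compare the sum directly with the \(\Gamma\)-free profile \(f(t)=\log_2(1+t^{-4})\). You dispose of the \(+1\) by a sandwich: you drop it for the upper bound, and you inflate the denominator by \(1+\delta\) beyond \(m_\delta\) for the lower bound.

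The paper's route needs no auxiliary \(\delta\) or double limit, and it gives an explicit \(O(\log\Gamma)\) sum-versus-integral error. Yours makes explicit that the upper bound \(C(\Gamma)\le(\pi\sqrt2/\ln 2)(\Gamma/\beta)^{1/4}\) holds for every \(\Gamma>0\). Hence \(c_2\) can be taken with \(\varepsilon=0\) globally, not just for large \(\Gamma\). This is also implicit in the paper's chain, since its rescaled integrand is dominated by \(\log_2(1+u^{-4})\), but the corollary does not state it.

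The closed-form evaluation \(\int_0^\infty\ln(1+t^{-4})\,dt=\pi\sqrt2\) by integration by parts is the same in both.
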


\begin{proposition}[Reduced-order engineering law]
\label{cor:scalar}
Let \(\mathrm{snr}_m=\gamma_m p_m/(1+\beta m^{4})\) and suppose that for some
\(\bar\gamma>0\) and \(\delta\in[0,1)\),
\[
(1-\delta)\bar\gamma\;\le\;\mathrm{snr}_m\;\le\;(1+\delta)\bar\gamma,
\qquad m=1,\dots,M_{\mathrm{eff}},
\]
and that the remaining modes carry little total signal-to-noise ratio,
\(\sum_{m>M_{\mathrm{eff}}}\mathrm{snr}_m\le\varepsilon\). Then
\begin{equation}
\begin{split}
M_{\mathrm{eff}}\log_2\!\paren{1+(1-\delta)\bar\gamma}
&\;\le\; C\\
&\;\le\;
M_{\mathrm{eff}}\log_2\!\paren{1+(1+\delta)\bar\gamma}
+\frac{\varepsilon}{\ln 2}.
\end{split}
\label{eq:red1}
\end{equation}
In particular, if \(\delta\to 0\) and \(\varepsilon\to 0\) then
\(C\to M_{\mathrm{eff}}\log_2(1+\bar\gamma)\), and writing
\(B_{\mathrm{geo}}=M/T_s\), \(\eta=M_{\mathrm{eff}}/M\) and
\(\gamma\,\mathrm{SNR}_{\mathrm{geo}}=\bar\gamma\) gives the engineering
surrogate
\begin{equation}
C_{\mathrm{geo}}
=
\eta B_{\mathrm{geo}}
\log_2\!\paren{1+\gamma \,\mathrm{SNR}_{\mathrm{geo}}}
\label{eq:scalarlaw}
\end{equation}
as the \(\delta,\varepsilon\to 0\) limit of \eqref{eq:red1}, not as an
independent postulate.
\end{proposition}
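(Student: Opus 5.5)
The statement is Proposition~\ref{cor:scalar}, and the capacity it bounds is the diagonal sum
\[
C=\sum_{m\ge1}\log_2(1+\mathrm{snr}_m),
\]
the same diagonal modal form as \eqref{eq:Cgamma} and \eqref{eq:Cinf_diag}. I would split this sum at \(M_{\mathrm{eff}}\) into a head over \(m\le M_{\mathrm{eff}}\) and a tail over \(m>M_{\mathrm{eff}}\), and bound the two pieces separately. The only properties needed are that \(t\mapsto\log_2(1+t)\) is increasing and that \(\log_2(1+t)\le t/\ln 2\) for \(t\ge0\).

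\textbf{Lower bound.} Every summand is nonnegative, so I drop the tail. On the head, the hypothesis gives \(\mathrm{snr}_m\ge(1-\delta)\bar\gamma\), and monotonicity gives each head term at least \(\log_2(1+(1-\delta)\bar\gamma)\). Summing the \(M_{\mathrm{eff}}\) head terms yields the left inequality of \eqref{eq:red1}.

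\textbf{Upper bound.} Each head term is at most \(\log_2(1+(1+\delta)\bar\gamma)\), again by monotonicity. For the tail I use \(\ln(1+t)\le t\), which gives
\[
\sum_{m>M_{\mathrm{eff}}}\log_2(1+\mathrm{snr}_m)\le\frac{1}{\ln 2}\sum_{m>M_{\mathrm{eff}}}\mathrm{snr}_m\le\frac{\varepsilon}{\ln 2}.
\]
Adding the head and tail bounds gives the right inequality of \eqref{eq:red1}. This is the one place that needs care: the tail sum must converge, and it does by the \(\varepsilon\) hypothesis. I note that the specific form \(\gamma_m p_m/(1+\beta m^4)\) plays no role beyond fixing what \(\mathrm{snr}_m\) denotes.

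\textbf{Limit and surrogate.} As \(\delta,\varepsilon\to0\), both sides of \eqref{eq:red1} converge to \(M_{\mathrm{eff}}\log_2(1+\bar\gamma)\), by continuity of \(\log_2(1+\cdot)\). The surrogate then follows by substitution. Multiplying by the use rate \(1/T_s\) and writing \(M_{\mathrm{eff}}=\eta M\) turns \(M_{\mathrm{eff}}/T_s\) into \(\eta B_{\mathrm{geo}}\). Setting \(\bar\gamma=\gamma\,\mathrm{SNR}_{\mathrm{geo}}\) then gives \eqref{eq:scalarlaw}.

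The main obstacle is only bookkeeping. One must state that \(C\) here is in bits per use, while \(C_{\mathrm{geo}}\) is in bits per second after the \(1/T_s\) normalization. There is no genuine analytic difficulty.
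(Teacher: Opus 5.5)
Your proposal is correct and follows the paper's proof essentially verbatim. Both use termwise monotonicity of \(\log_2(1+x)\) on the first \(M_{\mathrm{eff}}\) modes, the bound \(\log_2(1+x)\le x/\ln 2\) on the tail, and nonnegativity of the tail for the lower bound. Your added remark is a useful clarification that the paper leaves implicit: \(C\) is measured in bits per use, while \(C_{\mathrm{geo}}\) carries the \(1/T_s\) normalization.
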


\begin{proof}
The exact modal capacity is \(C=\sum_{m\ge 1}\log_2(1+\mathrm{snr}_m)\).
Monotonicity of \(x\mapsto\log_2(1+x)\) applied termwise to the first
\(M_{\mathrm{eff}}\) modes gives both bracketing terms. For the tail, 
\(\log_2(1+x)\le x/\ln 2\) for \(x\ge 0\), so
\(\sum_{m>M_{\mathrm{eff}}}\log_2(1+\mathrm{snr}_m)\le\varepsilon/\ln 2\).
The tail is non-negative, which gives the lower bound without it.
\end{proof}

\section{Spectral Composition and the Curvature Crossover}
\label{sec:crossover}

Theorem~\ref{thm:rootlaw} is a statement about a diagonal model with gains
\(1/(1+\beta m^{4})\). Two questions must be answered before it can be read as a
statement about an aperture: in which variable the exponent is expressed, and
whether the model's gains are the tangent channel's gains. This section answers
both. The answers are that the exponent is \(1/5\) rather than \(1/4\) once the
budget is a power budget, and that the polynomial law holds only below the
Shannon number of the geometry --- above it the classical
degrees-of-freedom law returns.

\subsection{The exponent depends on the budget}

\begin{proposition}[Per-mode versus total power]
\label{prop:budget_convention}
Let \(g_m=1/(1+\beta m^{4})\).
\begin{enumerate}
\item[(i)] If every mode carries equal signal power, so that
\(\mathrm{snr}_m=\Gamma g_m\) with a common \(\Gamma\), then
\(C(\Gamma)=\sum_m\log_2(1+\Gamma g_m)=\Theta(\Gamma^{1/4})\).
\item[(ii)] If instead the powers \(\{p_m\}\) are chosen to maximise
\(\sum_m\log_2(1+g_m p_m)\) subject to \(\sum_m p_m\le P\), then
\(C(P)=\Theta(P^{1/5})\).
\end{enumerate}
\end{proposition}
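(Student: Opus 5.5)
Part (i) is exactly Theorem~\ref{thm:rootlaw} with \(\Gamma\) identified as the common per-mode SNR, so I would simply invoke it. If a self-contained argument is wanted, set the threshold index \(m_\Gamma=(\Gamma/\beta)^{1/4}\) and split the sum there. For the lower bound, every \(m\le m_\Gamma/2\) has \(\beta m^4\le\Gamma/16\), hence gain at least \(\log_2(1+c)\) for an absolute constant \(c>0\); counting these modes gives \(\gtrsim\Gamma^{1/4}\). For the upper bound, use \(\log_2(1+\Gamma g_m)\le\log_2(1+\Gamma)\) on \(m\le m_\Gamma\). Taken alone this costs a \(\log\Gamma\) factor, so I would refine it by noting \(\log_2(1+\Gamma/(\beta m^4))=O(\log(m_\Gamma/m))\) for \(m\le m_\Gamma\). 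The sum \(\sum_{m\le m_\Gamma}\log(m_\Gamma/m)\) is \(O(m_\Gamma)\) by Stirling. For the tail \(m>m_\Gamma\), use \(\log_2(1+x)\le x/\ln 2\) and \(\sum_{m>m_\Gamma}\Gamma/(\beta m^4)=O(\Gamma m_\Gamma^{-3}/\beta)=O(m_\Gamma)\).

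Part (ii) is the substantive claim. By the single-budget case of Corollary~\ref{cor:gwf}/Theorem~\ref{thm:infwf} (\(\nu=0\)), the optimum is classical water-filling \(p_m^\star=[\theta-1/g_m]_+=[\theta-1-\beta m^4]_+\) with water level \(\theta\). The active set is therefore \(m\le M(\theta)\asymp(\theta/\beta)^{1/4}\). The power constraint reads
\[
P=\sum_{m\le M}(\theta-1-\beta m^4)\asymp \theta M-\beta M^5/5\asymp \theta^{5/4}\beta^{-1/4},
\]
so \(\theta\asymp(P\beta^{1/4})^{4/5}\) and \(M\asymp(P/\beta)^{1/5}\). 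The capacity is
\[
C=\sum_{m\le M}\log_2\bigl(\theta g_m\bigr)=\sum_{m\le M}\log_2\frac{\theta}{1+\beta m^4},
\]
which, by the same logarithmic-sum estimate as in (i), equals \(\Theta(M)=\Theta(P^{1/5})\).

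I would make each \(\asymp\) rigorous with explicit two-sided bounds rather than integral approximations, treating small \(P\) separately since the claim is asymptotic. As an alternative route that avoids solving for \(\theta\), a direct two-sided argument also works. For the lower bound, spread \(P\) equally over the first \(K=\lfloor(P/\beta)^{1/5}\rfloor\) modes. Each mode then gets SNR \(g_mP/K\gtrsim P/(K(1+\beta K^4))\gtrsim 1\), which yields \(\Omega(K)\).

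The main obstacle is the upper bound in (ii), which must hold for every feasible allocation, not just the water-filling one. There I would use the KKT characterization to reduce to the water-filling form and then bound \(\sum_{m\le M}\log(\theta/(\beta m^4))\). This step needs care to avoid a spurious \(\log P\) factor: I would write \(\theta\approx\beta M^4\), so that each term is \(4\log(M/m)+O(1)\), and close with Stirling. Handling the \(+1\) in \(1+\beta m^4\) and the integer rounding of \(M\) is routine but must be tracked to get constants independent of \(P\).
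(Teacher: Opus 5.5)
Your proposal is correct and follows essentially the paper's argument: (i) is quoted from Theorem~\ref{thm:rootlaw}, and (ii) uses single-budget water-filling with water level \(\theta=1/\mu\approx\beta M^{4}\), the budget sum \(P\asymp\beta M^{5}\) to get \(M\asymp(P/\beta)^{1/5}\), and the estimate \(\sum_{m\le M}\log(M/m)=O(M)\) to get \(C=\Theta(M)\); the paper uses the asymptotic forms \(M=(5P/4\beta)^{1/5}(1+o(1))\) and \(C=4M/\ln 2+O(\log M)\) where you propose explicit two-sided bounds. One minor slip, in your optional self-contained version of (i): the per-mode bound should read \(\log_2(1+\Gamma g_m)\le 1+4\log_2(m_\Gamma/m)\), i.e.\ \(O(1+\log(m_\Gamma/m))\) rather than \(O(\log(m_\Gamma/m))\), and this still sums to \(O(m_\Gamma)\).
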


\begin{proof}
Part (i) is Theorem~\ref{thm:rootlaw}. For (ii), water-filling activates the
modes with \(g_m\ge\mu\), i.e.\ \(m\le M=(1/(\beta\mu))^{1/4}\), and assigns
\(p_m=\mu^{-1}-g_m^{-1}\). Summing,
\(P=\sum_{m\le M}(\beta M^{4}-1-\beta m^{4})=\tfrac{4}{5}\beta M^{5}+O(M)\), so
\(M=(5P/4\beta)^{1/5}(1+o(1))\). Then
\(C=\sum_{m\le M}\log_2(g_m/\mu)=\int_0^{M}\log_2(M^{4}/m^{4})\,dm+O(\log M)
=\tfrac{4M}{\ln 2}+O(\log M)\), giving \(C(P)=\Theta(P^{1/5})\).
\end{proof}

The reconciliation is that the two conventions do not describe the same
resource. Reaching per-mode SNR \(\Gamma\) across the \(\Theta(\Gamma^{1/4})\)
active modes costs total power \(\Theta(\Gamma^{5/4})\), and
\((\Gamma^{5/4})^{1/5}=\Gamma^{1/4}\). Both statements are correct; only the
second is a statement about a power budget, and it is the second that should be
quoted when the phrase ``high-SNR scaling'' is used.

\subsection{The composition law}

The tangent channel is the product \(\cT=\cRop\,\cP\,M_{j\sqrt{P/L}e^{j\phi_{\mathrm b}}}\,\cS\).
Its singular values are not the products of the factors' singular values, so no
mode-by-mode factorisation \(\gamma_m=\rho^{\mathrm{prop}}_m\rho_m\) holds. What
does hold, and is what the scaling argument needs, is a two-sided envelope.

\begin{theorem}[Spectral composition envelope]
\label{thm:composition}
For every \(i,j\ge 1\),
\begin{equation}
\sigma_{i+j-1}(\cT)\;\le\;\norm{M}\,\sigma_i(\cP)\,\sigma_j(\cS),
\label{eq:horn}
\end{equation}
and in particular
\begin{equation}
\sigma_m(\cT)\;\le\;\norm{M}\min\braces{\norm{\cP}\sqrt{\rho_m},\;\sqrt{\rho_1}\,\sigma_m(\cP)}.
\label{eq:two_branches}
\end{equation}
Consequently the decay of \(\{\gamma_m=\sigma_m(\cT)^{2}\}\) is controlled by
whichever of the gauge spectrum \(\{\rho_m\}\) and the propagation spectrum
\(\{\sigma_m(\cP)^{2}\}\) decays faster.
\end{theorem}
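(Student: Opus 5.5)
The plan is to treat $\cT=\cRop\,\cP\,M\,\cS$, with $M\triangleq M_{j\sqrt{P/L}e^{j\phi_{\mathrm b}}}$, as a product of one compact factor pair $(\cP,\cS)$ and bounded outer factors. Inequality \eqref{eq:horn} will then follow from two standard facts about singular values of compact operators. The first is the ideal property $\sigma_k(AXB)\le\norm{A}\,\sigma_k(X)\,\norm{B}$. The second is the Ky Fan--Horn--Weyl product inequality $\sigma_{i+j-1}(XY)\le\sigma_i(X)\,\sigma_j(Y)$ for compact $X,Y$. In the statement $\norm{\cRop}$ is tacitly absorbed, taking $\norm{\cRop}\le1$ or reading $\norm{M}$ as $\norm{\cRop}\norm{M}$. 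I will make this explicit and carry $\norm{\cRop}$ through.

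Both facts come from the min--max (Allakhverdov) characterization $\sigma_k(X)=\min\{\norm{X-F}:\rank F<k\}$.

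For the ideal property, if $\rank F<k$ then $\rank(AFB)<k$ and $\norm{AXB-AFB}\le\norm{A}\norm{X-F}\norm{B}$.

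For the product inequality, take best approximants $F_X$ of rank $<i$ and $F_Y$ of rank $<j$. Write
\[
XY-\bigl(F_XY+XF_Y-F_XF_Y\bigr)=(X-F_X)(Y-F_Y).
\]
The bracketed operator has rank at most $(i-1)+(j-1)<i+j-1$. Since infima may not be attained in general, I will work with $\epsilon$-approximants, or use truncated SVDs directly, because $X,Y$ are compact. This gives $\sigma_{i+j-1}(XY)\le\sigma_i(X)\sigma_j(Y)$.

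To prove \eqref{eq:horn}, apply this with $X=\cP$ and $Y=M\cS$. This gives $\sigma_{i+j-1}(\cP M\cS)\le\sigma_i(\cP)\,\sigma_j(M\cS)\le\norm{M}\sigma_i(\cP)\sigma_j(\cS)$. Applying the ideal property once more with $\cRop$ on the left gives \eqref{eq:horn}.

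For \eqref{eq:two_branches}, I need $\sigma_j(\cS)=\sqrt{\rho_j}$. This holds because $\cQ=\cS^\ast\cS$ has eigenvalues $\rho_j$ by Proposition~\ref{prop:Qspec} (closed form in Proposition~\ref{prop:Qspec_closed_form}). Setting $i=1$, $j=m$ uses $\sigma_1(\cP)=\norm{\cP}$ and gives the gauge branch $\norm{M}\norm{\cP}\sqrt{\rho_m}$. Setting $i=m$, $j=1$ uses $\sigma_1(\cS)=\sqrt{\rho_1}=C_L$, by the sharpness in Theorem~\ref{thm:synthesis}, and gives the propagation branch. Taking the minimum yields \eqref{eq:two_branches}. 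Squaring gives
\[
\gamma_m\le\norm{M}^2\min\{\norm{\cP}^2\rho_m,\ \rho_1\sigma_m(\cP)^2\}.
\]
This is the "whichever decays faster" conclusion in the sense of an upper envelope. It is worth flagging in the text that only the upper bound is claimed.

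The main obstacle is bookkeeping rather than depth. First, the spaces differ: $\cS:\cX\to\cX$ must be viewed as a compact operator into $\cU_t=L^2(\Omega)$ before composing with $\cP$. Second, if $\phi_{\mathrm b}$ is real then $\norm{M}=\sqrt{P/L}$, which I will note. Third, the Horn inequality must be justified for compact operators on infinite-dimensional Hilbert spaces and not only for matrices. I would handle this either via the rank-approximation argument above or by reducing to finite sections and passing to the limit using norm convergence of finite-rank truncations, as in Theorem~\ref{thm:galerkin}.
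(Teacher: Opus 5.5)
Your proposal is correct and follows the same route as the paper: the Horn--Weyl product inequality for compact operators, with the multiplier absorbed through its operator norm, then specialization at \((i,j)=(1,m)\) and \((m,1)\) using \(\sigma_j(\cS)=\sqrt{\rho_j}\). You differ only in deriving the product inequality from the approximation-number characterization instead of citing it, and in carrying the factor \(\norm{\cRop}\) explicitly; the paper's statement silently drops that factor, which is harmless only when \(\norm{\cRop}\le 1\).
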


\begin{proof}
Inequality \eqref{eq:horn} is the Horn--Weyl product inequality
\(\sigma_{i+j-1}(AB)\le\sigma_i(A)\sigma_j(B)\) for compact operators
\cite{Simon}, applied twice to the three-fold product with \(\norm{M}\) absorbed
as the operator norm of the unit-modulus multiplier. Taking \((i,j)=(1,m)\) and
\((i,j)=(m,1)\) gives the two branches of \eqref{eq:two_branches}.
\end{proof}

The two factors have qualitatively different tails.
Proposition~\ref{prop:Qspec_closed_form} gives \(\rho_m=(L/\beta_m)^{4}\), a
polynomial decay, and its successive ratios \(\rho_m/\rho_{m+1}\) tend to \(1\).
For a physical propagation kernel between bounded apertures, by contrast, the
singular values of \(\cP\) are near-constant up to the Shannon number
\begin{equation}
N_{\mathrm S}\;=\;\frac{L_{\mathrm t}L_{\mathrm r}}{\lambda d}
\qquad\text{(one-dimensional apertures)},
\label{eq:shannon_number}
\end{equation}
and decay super-exponentially beyond it, with successive ratios growing without
bound; this is the prolate-spheroidal eigenvalue behaviour of Slepian and Pollak
\cite{SlepianPollak1961}, the sampling-density theorem of Landau
\cite{Landau1967}, and the communication-mode count of Miller \cite{Miller}.
The gauge decay is a property of the aperture and the affine quotient alone; the
propagation decay is a property of the medium and the geometry.

\subsection{The crossover}

\begin{theorem}[Curvature crossover]
\label{thm:crossover}
Consider water-filling at total power \(P\) on the tangent channel of a
geometry with Shannon number \(N_{\mathrm S}\), and let \(M(P)\) denote the
number of active modes. Then:
\begin{enumerate}
\item[(i)] \emph{(Gauge-limited branch.)} While \(M(P)\ll N_{\mathrm S}\), the
active gains follow \(\rho_m\) up to a bounded multiplicative factor, and the
local exponent \(d\log C/d\log P\) equals \(1/(1+a)+o(1)\), where \(a\) is the
local log--log slope of \(\{\rho_m\}\) over the active window. Since
\(\rho_m=(L/\pi)^{4}(m+\tfrac12)^{-4}(1+o(1))\), that slope runs from
\(3.42\) over the first ten modes to \(4\) asymptotically, so the exponent runs
from \(0.23\) down to \(1/5\).
\item[(ii)] \emph{(Propagation-limited branch.)} Suppose the plunge is at
least exponential, \(\gamma_m\le C_0 e^{-\kappa(m-N_{\mathrm S})}\) for
\(m>N_{\mathrm S}\) and some \(C_0,\kappa>0\), as it is for the prolate
spectrum \cite{SlepianPollak1961,Landau1967}. Then
\(M(P)=N_{\mathrm S}+O(\log P)\) and
\[
C(P)=\Theta\!\paren{N_{\mathrm S}\log P+\log^{2}P} .
\]
The first term dominates while \(\log P=O(N_{\mathrm S})\), which is the
operating regime of interest; the second is the contribution of the
\(O(\log P)\) plunge modes that sit just above the water level. In either
case the local exponent tends to zero.
\end{enumerate}
The crossover power \(P^{\star}\) is the budget at which \(M(P)=N_{\mathrm S}\),
and is computable from the gain sequence alone.
\end{theorem}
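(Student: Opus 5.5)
The plan is to reduce both branches to the water-filling bookkeeping already used in the proof of Proposition~\ref{prop:budget_convention}(ii), with the gain sequence supplied by Theorem~\ref{thm:composition}. For a nonincreasing gain sequence \(\{\gamma_m\}\), water-filling at level \(\mu\) activates \(m\le M\) with \(\gamma_m\ge\mu\). It spends
\[
P=\sum_{m\le M}\bigl(\mu^{-1}-\gamma_m^{-1}\bigr)
\]
and earns
\[
C=\sum_{m\le M}\log_2(\gamma_m/\mu).
\]
By the envelope theorem, \(dC/dP=1/(\mu\ln 2)\). Hence the local exponent is
\[
\frac{d\log C}{d\log P}=\frac{P}{\mu C\ln 2}.
\]
Everything therefore reduces to asymptotics of the two sums, given the local shape of \(\gamma_m\).

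\emph{Branch (i).} While \(M(P)\ll N_{\mathrm S}\), I use that \(\sigma_m(\cP)\) is near-constant on the active window, i.e.\ \(\sigma_m(\cP)\in[c\,\norm{\cP},\norm{\cP}]\) for \(m\le M\). The upper branch of \eqref{eq:two_branches} gives \(\gamma_m\lesssim\rho_m\). For the matching lower bound I would use the reverse Horn inequality \(\sigma_{m}(\cT)\ge\sigma_{\min}\) of the restricted factors. Concretely, I restrict \(\cP M\) to the \(N_{\mathrm S}\)-dimensional near-flat right singular subspace and compose with \(\cS\). A Courant--Fischer argument then gives \(\gamma_m\ge c'\rho_{m+k}\) with bounded shift \(k\). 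Since \(\rho_m/\rho_{m+k}\to1\), this yields \(\gamma_m\asymp\rho_m\) up to a bounded factor.

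Next I model \(\gamma_m\approx A m^{-a}\) locally, with \(a\) the log--log slope. The sums become
\[
P\approx\mu^{-1}M\bigl(1-\tfrac{1}{1+a}\bigr)=\frac{a}{1+a}\,\mu^{-1}M,
\qquad
C\approx\frac{aM}{\ln 2}.
\]
Substituting gives the exponent
\[
\frac{P}{\mu C\ln 2}=\frac{1}{1+a}+o(1).
\]
The numerical range of \(a\) comes from evaluating the slope of \eqref{eq:rho_closed_form} via \eqref{eq:rho_asymptotic}, using Remark~\ref{rem:Qspec_decay}: about \(3.42\) over the first ten modes, tending to \(4\) asymptotically. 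The bounded multiplicative factor only perturbs \(\mu\) by a constant, which leaves the logarithmic derivative unchanged to \(o(1)\).

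\emph{Branch (ii).} I split the active set into three parts.

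First, the modes \(m\le N_{\mathrm S}\). Their gains are bounded above and below, for example \(\gamma_{N_{\mathrm S}}\ge g_0>0\). They contribute \(\Theta(N_{\mathrm S}\log(1/\mu))\).

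Second, the plunge modes. From \(\gamma_m\le C_0e^{-\kappa(m-N_{\mathrm S})}\ge\mu\), I get \(M-N_{\mathrm S}\le\kappa^{-1}\log(C_0/\mu)\). Each of these modes contributes at most \(\log_2(C_0/\mu)\), for a total of \(O(\log^2(1/\mu))\). A matching lower bound on the plunge count needs a lower exponential envelope too. Absent that, I would state the \(\log^2\) term as an upper contribution and note that the \(\Theta\) then reads with the \(N_{\mathrm S}\log P\) term as the lower bound.

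Third, the link between \(\mu\) and \(P\). I relate \(\mu^{-1}\) to \(P\) by \(P\ge\mu^{-1}-\gamma_1^{-1}\) and \(P\le M\mu^{-1}\). This gives \(\log(1/\mu)=\log P+O(\log M)\), and \(M=N_{\mathrm S}+O(\log P)\) follows.

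The local exponent is then \(P/(\mu C\ln2)\). Here \(P/\mu^{-1}\) is at most \(M\), while \(C\gtrsim N_{\mathrm S}\log P\), so the exponent is \(O(1/\log P)\to0\). The crossover \(P^\star\) is defined by solving \(M(P)=N_{\mathrm S}\), i.e.\ \(\mu=\gamma_{N_{\mathrm S}}\). This gives \(P^\star=\sum_{m\le N_{\mathrm S}}(\gamma_{N_{\mathrm S}}^{-1}-\gamma_m^{-1})\), which depends only on the gains.

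\emph{Main obstacle.} The main difficulty is the lower envelope in branch (i). Theorem~\ref{thm:composition} gives only upper bounds, and "follows \(\rho_m\) up to a bounded factor" needs a two-sided statement. I would first try the min-max argument above, using \(\sigma_m(\cT)\ge\sigma_{\min}(\cRop\cP M|_{W})\,\sigma_m(\Pi_W\cS)\) on the flat subspace \(W\). The remaining risk is controlling the interaction between \(W\) and the range of \(\cS\). If that fails, I would state branch (i) conditionally on the hypothesis \(\gamma_m\asymp\rho_m\) over the active window, in the same spirit as the hypothesis assumed in (ii).
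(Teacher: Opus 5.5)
Your route matches the paper's: both branches reduce to asymptotics of \(P=\sum_{m\le M}(\mu^{-1}-\gamma_m^{-1})\) and \(C=\sum_{m\le M}\log_2(\gamma_m/\mu)\). The comments below concern three slips, one gap, and two places where your argument is sounder than the paper's.

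\emph{Branch (i).} The paper does not prove a two-sided envelope. It takes \(\gamma_m\asymp c\,m^{-a}\) on the active window ``by hypothesis'' and then computes for an exact power law, so your fallback is exactly the paper's position.

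Do not expect the min--max route to close. With \(A=\cRop\cP M_{j\sqrt{P/L}\,e^{j\phi_{\mathrm b}}}\), the bound \(\sigma_m(\cT)\ge\sigma_{N_{\mathrm S}}(A)\,\sigma_m(\Pi_W\cS)\) is fine. However, \(W\) has infinite codimension, so interlacing bounds the eigenvalues of \(\Pi_W\cS\cS^{\ast}\Pi_W\) only from above. A bounded shift \(k\) needs the bias-modulated subspace \(W\) to nearly contain the first \(m+k\) free--free modes. That depends on \(\phi_{\mathrm b}\) and the geometry; it is not a theorem.

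There are two corrections to your exponent computation.
\begin{itemize}
\item With \(p_m=\mu^{-1}-\gamma_m^{-1}\), the envelope theorem gives \(dC/dP=\mu/\ln 2\), not \(1/(\mu\ln 2)\). Hence \(d\log C/d\log P=\mu P/(C\ln 2)=\sum_{m\le M}(1-\mu/\gamma_m)\big/\sum_{m\le M}\ln(\gamma_m/\mu)\). With this identity your substitution does give \(1/(1+a)\), and your ``\(P\mu\le M\)'' in (ii) already uses it. The identity is a real improvement on the paper. The paper only derives \(C\sim K P^{1/(1+a)}\) and reads a local slope off a growth law, which needs an extra concavity argument to become a statement about the derivative.
\item ``A bounded factor only shifts \(\mu\) by a constant'' holds only for an \(m\)-independent factor. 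The staircase \(\gamma_m=2^{-ak}\) on \([2^k,2^{k+1})\) stays within \(2^{a}\) of \(m^{-a}\). Yet for \(a=4\) the ratio above oscillates forever in \(\log P\), between about \(0.175\) and \(0.215\). A bounded factor preserves the growth \(\Theta(P^{1/(1+a)})\), not the local derivative. You need a slowly varying (e.g.\ convergent) factor to get the \(o(1)\); the paper makes the same reduction silently.
\end{itemize}

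\emph{Branch (ii).} Your inequality \(\mu^{-1}\le P+\gamma_1^{-1}\) is the one that \(M\le N_{\mathrm S}+O(\log P)\) actually needs. At that step the paper invokes \(\mu\le M/P\), which bounds \(\ln(1/\mu)\) only from below.

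Your refusal to claim the \(\Omega(\log^{2}P)\) part is correct, not overcautious. The hypothesis is a one-sided envelope. The paper's ``inequality reversed on a constant fraction of the range'' silently assumes a lower exponential envelope as well. With a hard cutoff after \(N_{\mathrm S}\), or with the super-exponential prolate tail beyond the plunge, the plunge modes contribute \(o(\log^{2}P)\). So \(\Theta(N_{\mathrm S}\log P+\log^{2}P)\) is valid only for \(\log P=O(N_{\mathrm S})\), where it reduces to \(\Theta(N_{\mathrm S}\log P)\).

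The remaining gap is your last step. The bounds \(\mu P\le M\) and \(C\ln 2\gtrsim N_{\mathrm S}\ln P\) give only \(d\log C/d\log P\le 1/\ln P+1/(\kappa N_{\mathrm S})+o(1)\). The reason is that \(M\) contains about \(\kappa^{-1}\ln P\) plunge modes, so this bound does not vanish as \(P\to\infty\) at fixed \(N_{\mathrm S}\). You have to play the plunge numerator terms against their own denominator terms:
\begin{itemize}
\item Let \(J(t)=\#\{m>N_{\mathrm S}:\gamma_m>t\}\). The plunge part of the denominator is \(\int_\mu^\infty J(t)\,dt/t\ge\tfrac12 J(\sqrt{\mu})\ln(1/\mu)\).
\item Suppose the exponent stayed \(\ge\delta\) along \(\mu\to0\). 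The bound above would then force \(J(\sqrt{\mu})=O(1)\), using the envelope bound \(J(\mu)\le\kappa^{-1}\ln(C_0/\mu)+1\). Since \(J\) is nonincreasing and \(\sqrt{\mu}\to0\), \(J\) would be bounded everywhere.
\item The same assumption also forces \(J(\mu)\gtrsim\delta N_{\mathrm S}\ln(1/\mu)\), which is a contradiction.
\end{itemize}
For an exactly exponential plunge that denominator is \(\Theta(\log^{2}P)\), which recovers your \(O(1/\log P)\) rate. Note that the paper's closing ``\(\log C/\log P\to0\)'' does not by itself give the local statement either.
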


\begin{proof}
\emph{Part (i).} Over the active window the gains obey \(\gamma_m\asymp c\,m^{-a}\)
by hypothesis, so it suffices to compute the water-filling exponent for an exact
power law. Let \(g_m=c\,m^{-a}\) with \(a>1\). Water-filling gives
\(p_m=(1/\mu-1/g_m)_+\), active precisely for \(m\le M\) with
\(M=(c/\mu)^{1/a}\). Summing the budget and using Euler--Maclaurin,
\begin{align*}
P&=\frac{M}{\mu}-\frac1c\sum_{m\le M}m^{a}\\
&=\frac{M^{a+1}}{c}\paren{1-\frac{1}{a+1}}(1+o(1))
=\frac{a\,M^{a+1}}{c\,(a+1)}(1+o(1)),
\end{align*}
so \(M=\paren{(a+1)cP/a}^{1/(a+1)}(1+o(1))\). For the capacity,
\begin{align*}
C&=\sum_{m\le M}\log_2\frac{g_m}{\mu}
=a\sum_{m\le M}\log_2\frac{M}{m}\\
&=\frac{a}{\ln 2}\paren{M\ln M-\ln M!}
=\frac{a}{\ln 2}\,M+O(\log M)
\end{align*}
by Stirling. Hence \(C(P)=\Theta(P^{1/(1+a)})\) and the local log--log slope is
\(1/(1+a)+o(1)\), which is the claim. Setting \(a=4\) recovers
Proposition~\ref{prop:budget_convention}(ii).

\emph{Part (ii).} Let \(\mu\) be the water level. Each active mode contributes at
most \(1/\mu\) to the budget, so \(P\le M/\mu\) and therefore \(\mu\le M/P\). A
mode is active only if \(\gamma_m>\mu\); for \(m>N_{\mathrm S}\) the plunge bound
then forces \(C_0e^{-\kappa(m-N_{\mathrm S})}>\mu\), that is
\[
m<N_{\mathrm S}+\kappa^{-1}\ln\frac{C_0}{\mu}
\le N_{\mathrm S}+\kappa^{-1}\ln\paren{C_0P},
\]
using \(\mu\le M/P\) and \(M\ge 1\). Hence \(M(P)=N_{\mathrm S}+O(\log P)\).
For the capacity, split the active set. The first \(N_{\mathrm S}\) modes have
gains bounded above and below by positive constants, and
\(1/\mu\le P+1/\gamma_1\), so together they contribute
\(\Theta(N_{\mathrm S}\log P)\). Writing \(J=M-N_{\mathrm S}=O(\log P)\), the
plunge modes contribute
\begin{align*}
\sum_{j=1}^{J}\log_2\frac{\gamma_{N_{\mathrm S}+j}}{\mu}
&\le
\sum_{j=1}^{J}\paren{\log_2\frac{C_0}{\mu}-\frac{j\kappa}{\ln 2}}\\
&=\Theta(J^{2})=\Theta(\log^{2}P),
\end{align*}
the matching lower bound following from the same computation with the
inequality reversed on a constant fraction of the range. Adding the two parts
gives \(C(P)=\Theta(N_{\mathrm S}\log P+\log^{2}P)\). Finally
\(\log C/\log P\to 0\), so the local exponent tends to zero.
\end{proof}

\begin{remark}[What this means for the falsifiable prediction]
\label{rem:crossover_falsifier}
Theorem~\ref{thm:crossover} sharpens the differentiation-limited law into a
prediction that can fail: the polynomial branch should be visible in a power
sweep if and only if the aperture is operated below its Shannon number, and
should terminate at \(N_{\mathrm S}\) computed from geometry alone --- with no
free parameter. Section~\ref{sec:diag_crossover} measures both branches and the
termination point on a paraxial Fresnel kernel, together with a null run in
which the propagation operator is replaced by the identity and the polynomial
branch does not terminate. It also follows that a diagnostic run on a diagonal
model with flat propagation, however large its dynamic range, cannot test the
prediction at all: it lies entirely in branch~(i) by construction.
\end{remark}

\begin{remark}[Why the gauge decay is not a capacity ceiling]
The \(m^{-4}\) law is often read as a penalty the curvature coordinate pays.
Theorem~\ref{thm:composition} shows the reading is regime-dependent. Below
\(N_{\mathrm S}\) the gauge is the binding constraint and the coordinate choice
has a rate consequence; above it the medium is binding, every basis spanning the
same \(N_{\mathrm S}\)-dimensional signal space achieves the same capacity to
within a bounded factor, and the curvature coordinate's advantages must be
argued on conditioning, realizability and estimation cost rather than on rate.
\end{remark}

\section{Two-Dimensional Gauge-Fixed Extension}
\label{sec:2d}

The preceding development used a one-dimensional aperture to keep the notation sharp. The same construction extends to two-dimensional continuous apertures once scalar second derivatives are replaced by Hessian-compatible curvature tensors.

Let \(\Omega_2\subset\R^2\) be a bounded connected Lipschitz aperture with coordinates \(x=(x_1,x_2)\). Let
\[
\bar x_i
\triangleq
\frac{1}{|\Omega_2|}\int_{\Omega_2} x_i\,dx,
\qquad i=1,2,
\]
and define the affine basis
\[
\psi_0(x)=1,
\qquad
\psi_1(x)=x_1-\bar x_1,
\qquad
\psi_2(x)=x_2-\bar x_2.
\]
Set
\[
\cA_2\triangleq \Span\{\psi_0,\psi_1,\psi_2\},
\]
and define the two-dimensional gauge-fixed phase space
\begin{equation}
\cV_2
\triangleq
\braces{
\phi\in H^2(\Omega_2):
\inner{\phi}{\psi_i}_{L^2(\Omega_2)}=0,\; i=0,1,2
}.
\label{eq:V2def}
\end{equation}
The two-dimensional curvature object is the Hessian
\[
\nabla^2\phi
=
\begin{bmatrix}
\partial_{11}\phi & \partial_{12}\phi\\
\partial_{12}\phi & \partial_{22}\phi
\end{bmatrix}.
\]
Not every symmetric \(L^2\)-tensor field is a Hessian. Therefore define the Hessian-compatible curvature space
\begin{equation}
\cC_2
\triangleq
\Range(\nabla^2|_{\cV_2})
\subset
L^2(\Omega_2;\mathrm{Sym}(2)),
\label{eq:C2def}
\end{equation}
equipped with the inherited \(L^2\)-tensor norm.

\begin{theorem}[Two-dimensional Hessian-gauge synthesis]
\label{thm:2d_synthesis}
The Hessian map
\[
\nabla^2:\cV_2\to \cC_2
\]
is a bounded bijection. Its inverse
\[
\cS_2\triangleq(\nabla^2)^{-1}:\cC_2\to \cV_2
\]
is bounded. Viewed as an operator \(\cS_2:\cC_2\to L^2(\Omega_2)\), it is compact. Moreover, if two phases differ by an affine function, then they have the same Hessian; conversely, if two \(H^2\)-phases have the same Hessian, then they differ by an affine function.
\end{theorem}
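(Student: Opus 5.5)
The plan mirrors the one-dimensional argument of Theorem~\ref{thm:synthesis}, but the explicit free--free spectrum is replaced by a compactness argument. The central claim is a Poincar\'{e}--Wirtinger inequality on \(\cV_2\): there is a constant \(C_{\Omega_2}<\infty\) such that
\begin{equation*}
\norm{\phi}_{H^2(\Omega_2)}\le C_{\Omega_2}\norm{\nabla^2\phi}_{L^2(\Omega_2;\mathrm{Sym}(2))}
\qquad \text{for all }\phi\in\cV_2 .
\end{equation*}
Every other claim of the theorem follows from this inequality.

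\textbf{Step 1: the kernel of the Hessian.} First I would settle the algebraic statements. If \(\phi_1-\phi_2\in\cA_2\), then \(\nabla^2(\phi_1-\phi_2)=0\), since affine functions have zero Hessian. Conversely, suppose \(h=\phi_1-\phi_2\in H^2(\Omega_2)\) satisfies \(\nabla^2 h=0\) in the distributional sense. Then each \(\partial_i h\in H^1\) has zero weak gradient. Because \(\Omega_2\) is connected, each \(\partial_i h\) equals a constant \(a_i\). Hence \(h-a_1x_1-a_2x_2\) has zero weak gradient and is itself constant, so \(h\) is affine.

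\textbf{Step 2: injectivity on \(\cV_2\).} An affine function that lies in \(\cV_2\) is \(L^2\)-orthogonal to \(\cA_2\) and also belongs to \(\cA_2\), so it is zero. Thus \(\ker(\nabla^2|_{\cV_2})=\{0\}\). Boundedness of \(\nabla^2:\cV_2\to L^2\) is immediate from the \(H^2\)-norm. Surjectivity onto \(\cC_2\) holds by the definition \eqref{eq:C2def}.

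\textbf{Step 3: the inequality, which is the main obstacle.} The inverse must be bounded on \(\cC_2\), and I expect proving the displayed inequality to be the hardest step. I would argue by contradiction. Suppose there are \(\phi_k\in\cV_2\) with \(\norm{\phi_k}_{H^2}=1\) and \(\norm{\nabla^2\phi_k}_{L^2}\to0\). Since \(\Omega_2\) is bounded and Lipschitz, the Rellich--Kondrachov embedding \(H^2\hookrightarrow H^1\) is compact. Passing to a subsequence, \(\phi_k\to\phi\) in \(H^1\).

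The Hessians tend to zero in \(L^2\), so \((\phi_k)\) is Cauchy in the full \(H^2\)-norm and \(\phi_k\to\phi\) in \(H^2\). The limit satisfies \(\nabla^2\phi=0\), \(\norm{\phi}_{H^2}=1\), and \(\phi\in\cV_2\), because the three orthogonality constraints pass to the limit. Step~1 shows \(\phi\) is affine, and Step~2 then forces \(\phi=0\). This contradicts \(\norm{\phi}_{H^2}=1\).

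The Lipschitz hypothesis is used exactly here, to obtain compactness up to the boundary. The same inequality also shows that \(\cC_2\) is closed in \(L^2(\Omega_2;\mathrm{Sym}(2))\), so \(\cC_2\) is a Hilbert space and the bounded-inverse statement is meaningful. Consequently \(\cS_2\) is bounded from \(\cC_2\) to \((\cV_2,\norm{\cdot}_{H^2})\).

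\textbf{Step 4: compactness.} Regarded as an operator into \(L^2(\Omega_2)\), \(\cS_2\) is the composition of the bounded map \(\cS_2:\cC_2\to\cV_2\subset H^2(\Omega_2)\) with the compact Rellich embedding \(H^2(\Omega_2)\hookrightarrow L^2(\Omega_2)\). A bounded operator followed by a compact one is compact, which completes the proof.
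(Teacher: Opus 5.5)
Your proof is correct and follows the paper's argument. The shared steps are: an affine kernel together with the gauge constraints gives injectivity, surjectivity holds by the definition of \(\cC_2\), a second-order Poincar\'{e} inequality on \(\cV_2\) bounds the inverse, and the Rellich embedding \(H^2(\Omega_2)\hookrightarrow L^2(\Omega_2)\) gives compactness. The only difference is that the paper cites the Poincar\'{e}--Bramble--Hilbert inequality, whereas you prove it by the standard compactness-and-contradiction argument. That proof also yields the closedness of \(\cC_2\) in \(L^2(\Omega_2;\mathrm{Sym}(2))\), which the paper leaves implicit.
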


The exact phase-only two-dimensional transmit law is therefore
\begin{multline}
s_{\epsilon,2}(K)(x)
=
\sqrt{\frac{P}{|\Omega_2|}}\,
\exp\!\big(j(\phi_{\mathrm b}(x)+\epsilon(\cS_2 K)(x))\big),\\
K\in\cC_2.
\label{eq:2d_tx}
\end{multline}
For bounded propagation and coherent reconstruction operators, the tangent channel is
\begin{equation}
\cT_2
=
\cRop \cP M_{j\sqrt{P/|\Omega_2|}\,e^{j\phi_{\mathrm b}}}\cS_2.
\label{eq:T2def}
\end{equation}
Since \(\cS_2\) is compact, \(\cT_2\) is compact. Defining
\[
\cQ_2\triangleq \cS_2^\ast\cS_2
\]
on \(\cC_2\), all finite-mode, Fredholm-capacity, attainment, KKT, Galerkin-convergence, and water-filling statements above hold verbatim with
\[
(\cX,\cV,\cS,\cQ,\cT)
\quad\text{replaced by}\quad
(\cC_2,\cV_2,\cS_2,\cQ_2,\cT_2).
\]

\begin{remark}[Compatibility is essential]
The tensor space \(\cC_2\) is not the full space \(L^2(\Omega_2;\mathrm{Sym}(2))\). It is the subspace of Hessian-compatible curvature tensors. In distributional form, smooth Hessian fields satisfy the Saint-Venant compatibility relation
\[
\partial_{22}K_{11}
+
\partial_{11}K_{22}
-
2\partial_{12}K_{12}
=
0.
\]
The definition \eqref{eq:C2def} enforces this compatibility intrinsically and avoids introducing nonintegrable curvature symbols that cannot arise from any physical phase profile.
\end{remark}

\section{Discussion: Why the Framework Is Physically Conservative}

The framework developed above is intentionally conservative.

\subsection{No New Electromagnetic Law}

All propagation remains classical and linear in the field. The only exact physical law assumed at the transmitter is the ordinary phase-only aperture field \eqref{eq:exact_tx}. The curvature-domain channel is simply the coherent tangent model of that law.

\subsection{No Degrees of Freedom Created Ex Nihilo}

Curvature-domain coding cannot exceed the capacity of the full aperture-field channel with the same physical constraints. Its role is not to create new physics but to expose a geometric coordinate system within ordinary phase-coded continuous apertures. If a system lacks the sensing resolution to estimate curvature reliably, the theory predicts precisely that failure through Theorem \ref{thm:curvnoise}.

\subsection{Falsifiable Predictions}

The theory yields direct experimental predictions.

\begin{itemize}
    \item Without regularization, curvature-estimation noise scales as \(\Delta x^{-4}\). This is a property of second differencing on uniform samples and would hold under any framework that estimates a second derivative; it is a consistency check on the receiver model, not a discriminating test of the curvature coordinate.
    \item The number of useful curvature modes obeys a quarter-power law \emph{in the per-mode SNR}, for as long as the propagation operator supplies modes to activate.
    \item Under differentiation-limited sensing and a \emph{per-mode} SNR \(\Gamma\), total rate grows as \(\Theta(\Gamma^{1/4})\); under a \emph{total-power} budget \(P\) the same gain law gives \(\Theta(P^{1/5})\) (Proposition~\ref{prop:budget_convention}). The two must not be interchanged.
    \item \emph{The discriminating prediction.} The polynomial branch exists if and only if the aperture is operated below its Shannon number \(N_{\mathrm S}=L_{\mathrm t}L_{\mathrm r}/(\lambda d)\), and terminates there --- with no free parameter (Theorem~\ref{thm:crossover}). A power sweep on a fixed geometry must therefore show the local exponent fall from \(\approx 1/5\) to \(0\) as the active-mode count crosses \(N_{\mathrm S}\) computed from geometry alone. This is the prediction of the present framework that a competing account of the same aperture would not make.
    \item Programmable environments can improve curvature coherence even when received power changes modestly.
\end{itemize}

These are measurement-level statements; they can be tested without invoking any nonstandard physics.

\subsection{Limitations}
\label{sec:limitations}

The one-dimensional aperture model is the main vehicle for explicit formulas, closed-form finite differences, and transparent modal scaling. Section \ref{sec:2d} provides the corresponding two-dimensional Hessian-gauge extension, but practical two-dimensional implementations require additional modeling choices: polarization, vector electromagnetic boundary conditions, mutual coupling, surface hardware constraints, phase wrapping, finite phase resolution, calibration error, and nonideal coherent reconstruction. Likewise, the local linear model is exact only within a modulation chart around a bias phase; Section~\ref{sec:multi_chart} makes the multi-chart continuation explicit, but its numerical evaluation on hardware-in-the-loop testbeds is deferred. These limitations do not invalidate the curvature coordinate; they specify where hardware and electromagnetic details enter.

\paragraph*{Directions deferred to future work.} Three items are framed here as future work rather than as gaps in the present paper:
\begin{itemize}\itemsep 2pt
    \item \emph{Learned regularization operator} \(\mathbf L\). The specification \(\mathbf L=\mathbf D_2\) in \eqref{eq:regest} is analytical. Replacing \(\mathbf L\) with a learned graph-Laplacian or attention-weighted regularizer on the aperture graph — trained on measured phase-noise data from a specific hardware platform — is a natural machine-learning contribution that preserves the closed-form bias--variance decomposition of Proposition~\ref{prop:bv}. This is a promising joint direction with holographic MIMO hardware groups.
    \item \emph{Wideband generalization} \cite{Verdu2002Spectral}. The propagation operator \(\cP\) is treated as narrowband. In wideband holographic MIMO the operator becomes frequency-dependent, \(\cP=\cP(f)\), and the tangent channel decomposes as \(\cT(f)=\cRop(f)\cP(f)M_{j\sqrt{P/L}\,e^{j\phi_{\mathrm b}(f)}}\cS\); the gauge-fixed synthesis operator \(\cS\) is frequency-independent by construction. A coherent-wideband capacity of the form \(C_{\mathrm WB}=\int_{0}^{B}\log_{2}\detF\!\bigl(\I+\widetilde\cT(f)\cK(f)\widetilde\cT(f)^{\ast}\bigr)\,df\) is a plausible conjecture, upper-bounded (per frequency) by the narrowband Fredholm capacity of Theorem~\ref{thm:fredholm}. Rigorous treatment is deferred to a companion paper.
    \item \emph{Vector electromagnetic extension}. Replacing the scalar propagation operator \(\cP\) with a vector electromagnetic operator (accounting for polarization, mutual coupling, and vector boundary conditions on the aperture surface) is a natural next step for connecting the coordinate theory here to full EIT. The gauge-fixing and Fredholm-capacity machinery of Sections~\ref{sec:model}--\ref{sec:capacity_infinite} transfers verbatim once the operator is compact on the vector \(L^{2}\) space.
\end{itemize}

\section{Numerical Diagnostics and Reproducibility}
\label{sec:simulations}

This section reports deterministic diagnostics for the analytical mechanisms
developed above. Every quantity is computed from the operators the theorems are
about: the gauge operator of Section~\ref{sec:model}, the tangent channel of
Theorem~\ref{thm:linearization} assembled on a paraxial Fresnel kernel, the
dual-budget water-filling of Corollary~\ref{cor:gwf}, and the receiver-side
second-difference operator of Theorem~\ref{thm:curvnoise}. No spectrum is
tabulated, no target value is solved for backwards, and no residual is floored
by an additive constant.

\subsection{Discretisation and its verification}

The gauge-fixed space is discretised in the orthonormal shifted-Legendre basis
\(\widehat P_k(x)=\sqrt{(2k+1)/L}\,P_k(2x/L-1)\). Two properties make this the
natural choice and both are checked rather than assumed. First, the \(L^{2}\)
mass matrix is the identity, so every eigenproblem below is a standard symmetric
problem rather than an ill-scaled pencil. Second --- and this is the load-bearing
point --- the affine gauge is spanned \emph{exactly} by \(\widehat P_0\)
(constant) and \(\widehat P_1\) (proportional to \(x-L/2\)), and by orthogonality
no other basis element meets it. Imposing the gauge is therefore the deletion of
two coefficients: there is no projector, no pseudo-inverse and no boundary
stencil, and consequently no boundary-induced bias in the computed constants.
The synthesis operator \(\cS\) is then assembled by integrating Legendre
coefficients twice in exact recursion, so the construction involves no quadrature
at all.

Two independent discretisations are carried, and their agreement is reported: the
exact synthesis route above, and a Gauss--Legendre quadrature of the curvature
form \(K_{kl}=\inner{\widehat P_k''}{\widehat P_l''}\). They share no code path.
For the tangent channel, the Fresnel kernel is integrated on Gauss--Legendre
nodes and the node count is refined until the resolvable modal gains stop moving;
the worst relative change across node counts is \(\SimQuadratureResidual\).
Modal gains below \(10^{-12}\) of the leading gain are not resolved by a
double-precision singular-value decomposition and are excluded from every
reported quantity; \(\SimResolvableModes\) modes survive that criterion for the
reference geometry.

\begin{table}[t]
\centering
\caption{Deterministic diagnostic settings.}
\label{tab:diagnostic_params}
\renewcommand{\arraystretch}{1.12}
\begin{tabular}{ll}
\toprule
Quantity & Setting \\
\midrule
Random seed & \(\SimSeed\) \\
Legendre degree (gauge operator) & \(400\) \\
Spectrum modes reported & \(128\) \\
Aperture lengths \(L_{\mathrm t}=L_{\mathrm r}\) & \(0.2\)~m \\
Link distance \(d\) & \(1.25\)~m \\
Wavelength \(\lambda\) & \(1\)~mm \\
Fresnel number & \(8.0\) \\
Shannon number \(N_{\mathrm S}=L_{\mathrm t}L_{\mathrm r}/(\lambda d)\) & \(\SimShannonNumber\) \\
Gauss--Legendre nodes per aperture & \(1200\) \\
Curvature-power budget \(P_{\mathrm c}\) & \(\SimCapBudget\) \\
\bottomrule
\end{tabular}
\end{table}

\subsection{Modal Spectrum Against Its Closed Form}
\label{sec:diag_spectrum}

Proposition~\ref{prop:Qspec_closed_form} gives the modal Gram spectrum in closed
form, \(\rho_m=(L/\beta_m)^{4}\). The diagnostic therefore does not fit a power
law --- it compares the computed spectrum to that expression. Over \(128\) modes
the largest relative deviation is \(\SimSpectrumMaxRelErr\), and the computed
Poincar\'{e}--Wirtinger constant \(\SimWirtingerConstant\) agrees with
\(L^{2}/\beta_1^{2}\) to a relative error of \(\SimWirtingerRelErr\), with
\(\beta_1=\SimBetaOne\).

Fig.~\ref{fig:diag_spectrum} shows the spectrum together with two \emph{null
runs}, whose purpose is to establish that the agreement above is informative
rather than automatic.

\begin{itemize}\itemsep 2pt
\item \emph{Null run 1: the Dirichlet sine prediction.} The argument rejected in
Remark~\ref{rem:sine_trap} predicts \(\rho_m=(m\pi/L)^{-4}\). The median ratio of
the true spectrum to that prediction is \(\SimNullSineRatio\), and the two differ
by a full factor of \(\beta_1^{4}/\pi^{4}\) at the leading mode --- so a
diagnostic that agreed with one would visibly disagree with the other. The
corresponding ratio on the Wirtinger constant, \(\beta_1^{2}/\pi^{2}\), is
computed by the same run as \(\SimWirtingerSlackFactor\); this is the factor by
which the value rejected in Remark~\ref{rem:sine_trap} overstates \(C_L\).
\item \emph{Null run 2: a plausible surrogate.} Substituting
\(\rho_m=(1+0.015\sin 0.7m)(m-\tfrac12)^{-4}\) and fitting a power law returns a
slope of \(\SimNullSurrogateSlope\), which is closer to \(-4\) than the true
spectrum's own fitted slope of \(\SimSlopeRho\) over the same window. The
surrogate nonetheless overstates the true spectrum by a factor of
\(\SimNullSurrogateRatio\) at \(m=128\) and inverts the sign of the half-integer
offset. This is the concrete reason a fitted exponent cannot serve as
verification here.
\end{itemize}

Consistently with Remark~\ref{rem:Qspec_decay}, the fitted slope of the
\emph{computed} spectrum over \(m\in[16,128]\) is \(\SimSlopeRho\) and the fitted
slope of the closed form over the identical window is \(\SimSlopeRhoExact\). They
agree; neither is \(-4\), and neither should be.

\begin{figure}[t]
\centering
\includegraphics[width=\columnwidth]{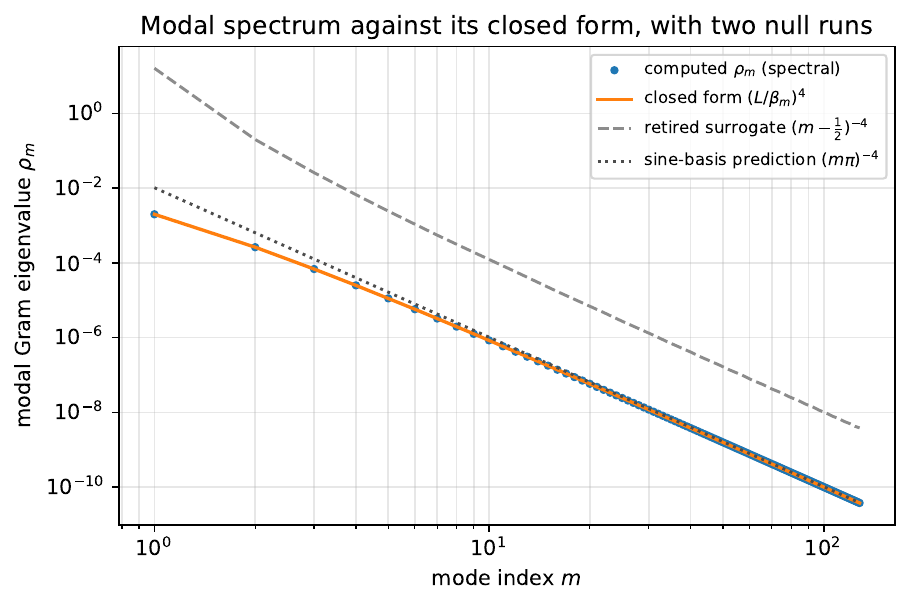}
\caption{Computed modal spectrum against the closed form
\(\rho_m=(L/\beta_m)^{4}\), with the two null runs of
Section~\ref{sec:diag_spectrum} overlaid. The rejected surrogate and the
Dirichlet sine prediction are visibly distinct from the operator's own
spectrum.}
\label{fig:diag_spectrum}
\end{figure}

\subsection{Tangent-Channel Remainder}

Theorem~\ref{thm:linearization} bounds the tangent remainder by
\(\norm{y_\epsilon(c)-\cT c}\le K_R\,\epsilon\,\norm{c}^{2}_{\cX}\), i.e.\ first
order in the modulation depth. The diagnostic measures exactly that quantity on
the assembled operator chain \(\cP M_{j\sqrt{P/L}e^{j\phi_{\mathrm b}}}\cS\), with
the curvature symbol normalised to unit peak phase excursion so that \(\epsilon\)
is in radians. The fitted slope is \(\SimSlopeLinearization\) and the empirical
constant is \(K_R=\SimRemainderConstant\).

Two points on method. The exponential is evaluated with \(\mathrm{expm1}\), and
the sweep starts at \(\epsilon=10^{-5}\): the remainder is \(O(\epsilon)\) while
the two terms differenced to obtain it are \(O(1)\), so roughly
\(\log_{10}(1/\epsilon)\) digits are lost and below that floor the measurement
would report cancellation rather than physics. Second, this is a different
quantity from the unnormalised Taylor residual of a scalar exponential, whose
\(O(\epsilon^{2})\) behaviour holds for any function and tests nothing about the
aperture. Fig.~\ref{fig:diag_linearization} plots the normalised remainder against
modulation depth on the Fresnel operator chain, together with the fitted slope.

\begin{figure}[t]
\centering
\includegraphics[width=\columnwidth]{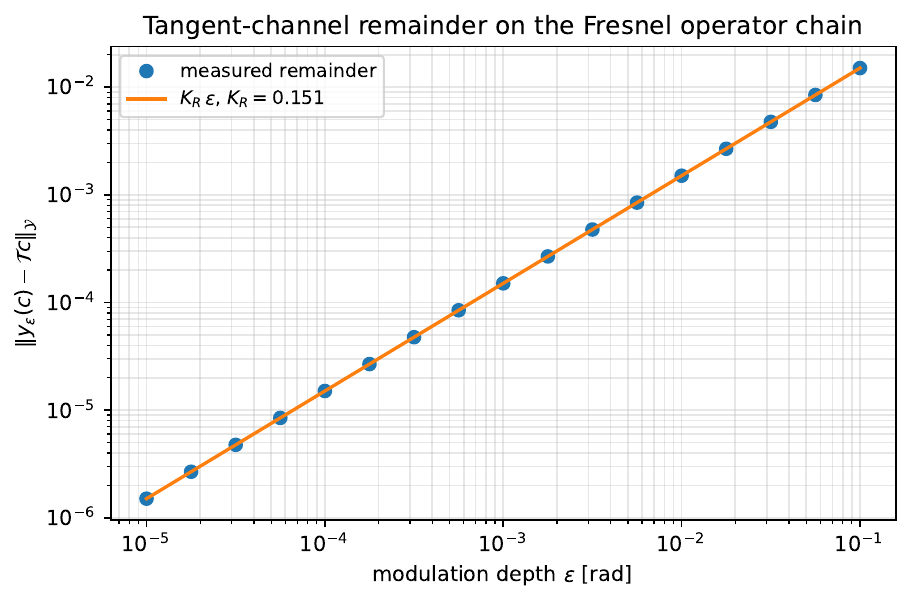}
\caption{Tangent-channel remainder on the Fresnel operator chain. The fitted
slope \(\SimSlopeLinearization\) confirms the first-order bound of
Theorem~\ref{thm:linearization}.}
\label{fig:diag_linearization}
\end{figure}

\subsection{Galerkin Truncation}

The curvature-power budget is fixed in advance at
\(P_{\mathrm c}=\SimCapBudget\) --- \(10^{4}\) times the inverse leading modal
gain --- and the capacity is whatever the truncation returns. At full truncation
it is \(\SimCapFinal\) bits per use, with a relative change of
\(\SimCapLastRelChange\) over the final truncation step. The reported number is a
measurement of the operator at a stated budget; it is not a target, and no budget
was solved backwards to reach it. The truncation sweep behind these numbers is
shown in Fig.~\ref{fig:diag_capacity}.

\begin{figure}[t]
\centering
\includegraphics[width=\columnwidth]{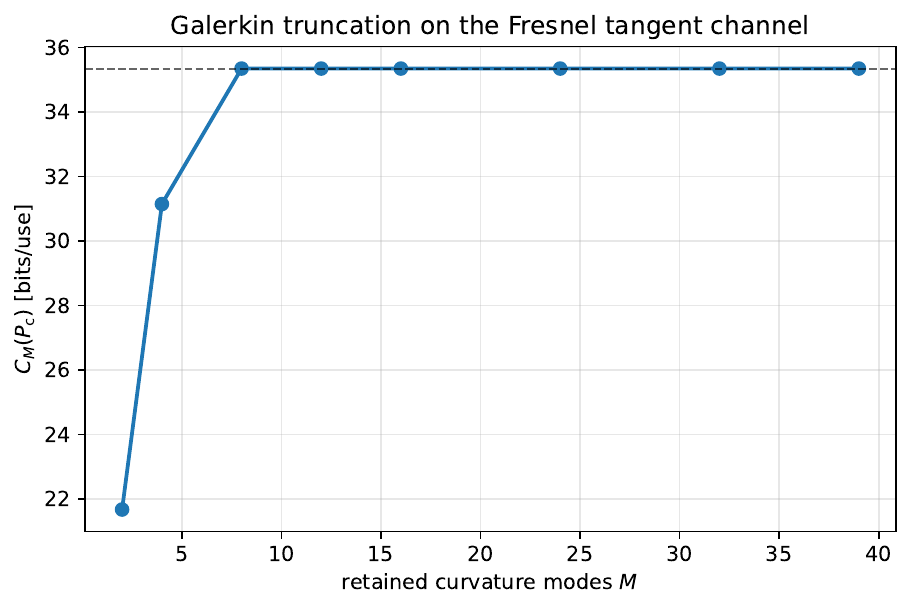}
\caption{Galerkin truncation of the finite-mode capacity on the Fresnel tangent
channel at the stated budget \(P_{\mathrm c}=\SimCapBudget\).}
\label{fig:diag_capacity}
\end{figure}

\subsection{Dual-Budget Generalized Water-Filling}
\label{sec:diag_dual}

Corollary~\ref{cor:gwf} and Algorithm~\ref{alg:gwf} are exercised here for the
first time in the regime that carries their content: both budgets simultaneously
active, with both multipliers strictly positive. The excursion budget is swept
from slack down to \(12\%\) of the excursion the unconstrained solution consumes.
At the tightest point the multipliers are \(\mu=\SimMuAtTightest\) and
\(\nu=\SimNuAtTightest\) --- comparable in magnitude, so the second constraint is
genuinely binding rather than numerically vestigial --- the active set has moved
by \(\SimDualBudgetActiveShift\) modes, from \(\SimActiveModesSlack\) to
\(\SimActiveModesTight\), and the rate given up is \(\SimDualBudgetRateLoss\).
The dual is solved by nested monotone bisection rather than by dual ascent, so
convergence does not depend on a step size. Across the whole sweep the maximum
measured KKT violation is \(\SimMaxKktResidual\) and the maximum measured duality
gap is \(\SimMaxDualGap\). Both are computed from the returned point; neither is
floored.

\emph{Null run.} Tightening any budget moves the active set, so a shift alone
would not establish that the second multiplier does anything new. The
discriminating comparison is against the classical single-budget solution at
whatever power level reproduces the \emph{same capacity}, which is
\(\SimNullEqualCapacityPower\). That allocation differs from the dual-budget one
by a relative \(L^{2}\) distance of \(\SimDualBudgetAllocationDivergence\), and
it would spend \(\SimNullExcursionRatio\) times as much phase excursion to
deliver the same rate. The second multiplier therefore reprices modes; it is not
a reparametrisation of the first. Fig.~\ref{fig:diag_dual_budget} shows the rate cost of
tightening the excursion budget alongside both multipliers across the same sweep.

\begin{figure}[t]
\centering
\includegraphics[width=\columnwidth]{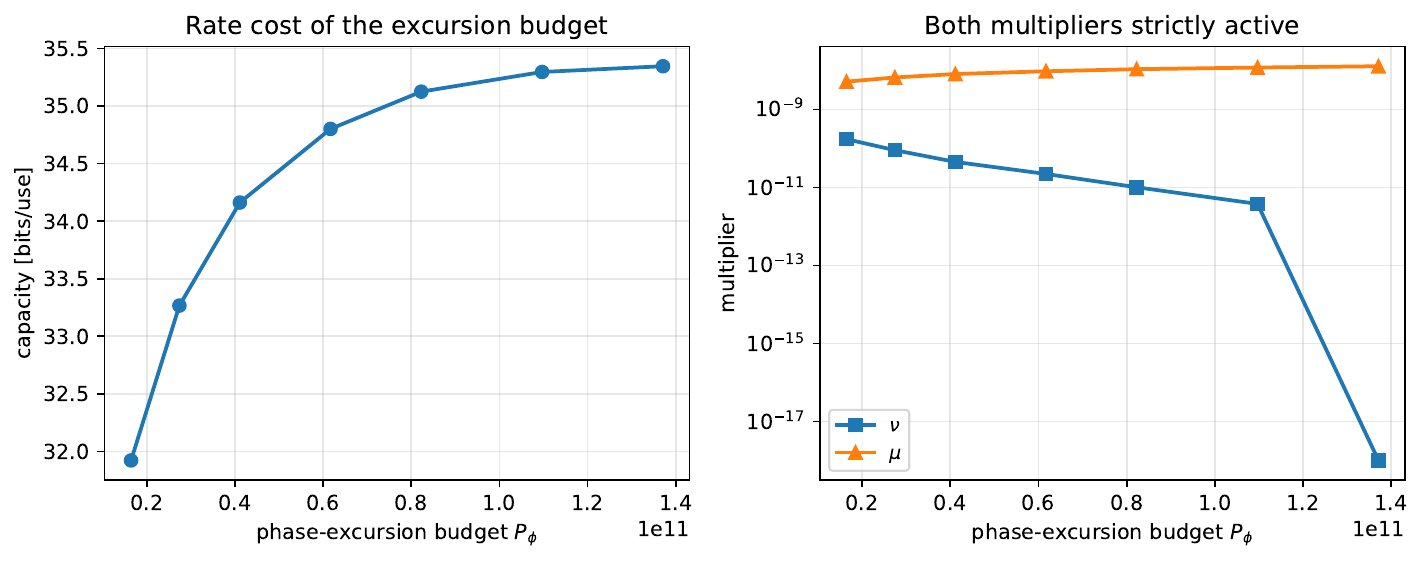}
\caption{Dual-budget generalized water-filling. Left: rate cost of tightening the
phase-excursion budget. Right: both multipliers across the same sweep, with
\(\nu>0\) throughout the constrained region.}
\label{fig:diag_dual_budget}
\end{figure}

\subsection{Derivative-Noise Amplification}

Theorem~\ref{thm:curvnoise} bounds the curvature-noise covariance by
\(\norm{\mathbf K_{\mathrm c}}_2\le 16\sigma_\phi^{2}\Delta x^{-4}\). Measuring
\(\norm{\mathbf D_2\mathbf D_2^{\TT}}_2\) across grids from \(n=16\) to
\(n=512\) gives a fitted exponent of \(\SimSlopeNoise\) in \(n\), and the ratio
of the measured norm to the analytic bound rises monotonically to
\(\SimNoiseBoundRatio\) --- approaching unity from below without ever exceeding
it. The bound is therefore asymptotically tight as well as valid, which is a
stronger statement than the exponent alone. Fig.~\ref{fig:diag_noise} plots the
measured norm against the analytic bound over the sampling sweep.

\begin{figure}[t]
\centering
\includegraphics[width=\columnwidth]{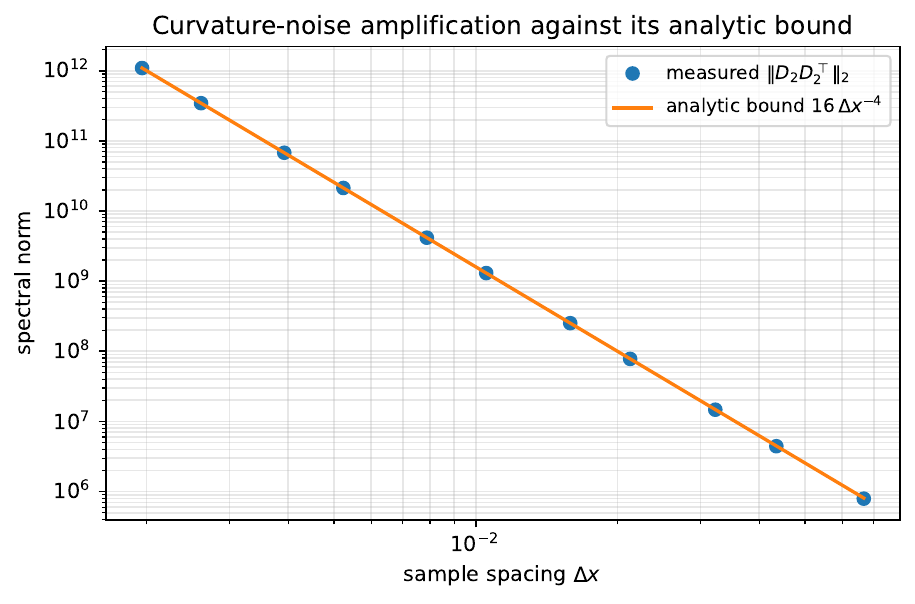}
\caption{Curvature-noise amplification against the analytic bound
\(16\,\Delta x^{-4}\). The measured norm approaches the bound from below.}
\label{fig:diag_noise}
\end{figure}

\subsection{The Two Scaling Exponents, and the Crossover}
\label{sec:diag_crossover}

Section~\ref{sec:crossover} distinguishes the per-mode and total-power forms of
the differentiation-limited law, and predicts where the polynomial branch ends.
Both are measured here on the same gain law, so the difference cannot be read as
a modelling artefact: the exponent in the per-mode SNR is
\(\SimSlopeRootPerMode\), and the exponent under a total-power budget with
water-filling is \(\SimSlopeRootTotalPower\).

On the Fresnel tangent channel at \(N_{\mathrm S}=\SimShannonNumber\), the
propagation operator has effective rank \(\SimPropagationRank\) at a \(10^{-6}\)
relative threshold. Sweeping the total power over eighteen decades, the local
exponent \(d\log C/d\log P\) averages \(\SimCrossoverPlateauExponent\) while the
active-mode count lies between \(8\) and \(N_{\mathrm S}\), and falls to at most
\(\SimCrossoverSaturatedExponent\) once that count saturates. This is the
crossover of Theorem~\ref{thm:crossover}, measured.

\emph{Null run.} Mode exhaustion would collapse the exponent for any truncated
gain sequence, so the collapse must be attributed to the propagation operator
rather than to the solver or to truncation. Repeating the identical sweep with
the propagation replaced by the identity and the mode supply made effectively
unlimited, the exponent stays at \(\SimNullFlatPropExponent\) across the same
power range in which the physical kernel has already fallen to
\(\SimCrossoverSaturatedExponent\). Both branches and the null run are shown in
Fig.~\ref{fig:diag_crossover}, and the two-sided envelope of Theorem~\ref{thm:composition}
that governs which of them binds is plotted in Fig.~\ref{fig:diag_composition}.

\begin{figure}[t]
\centering
\includegraphics[width=\columnwidth]{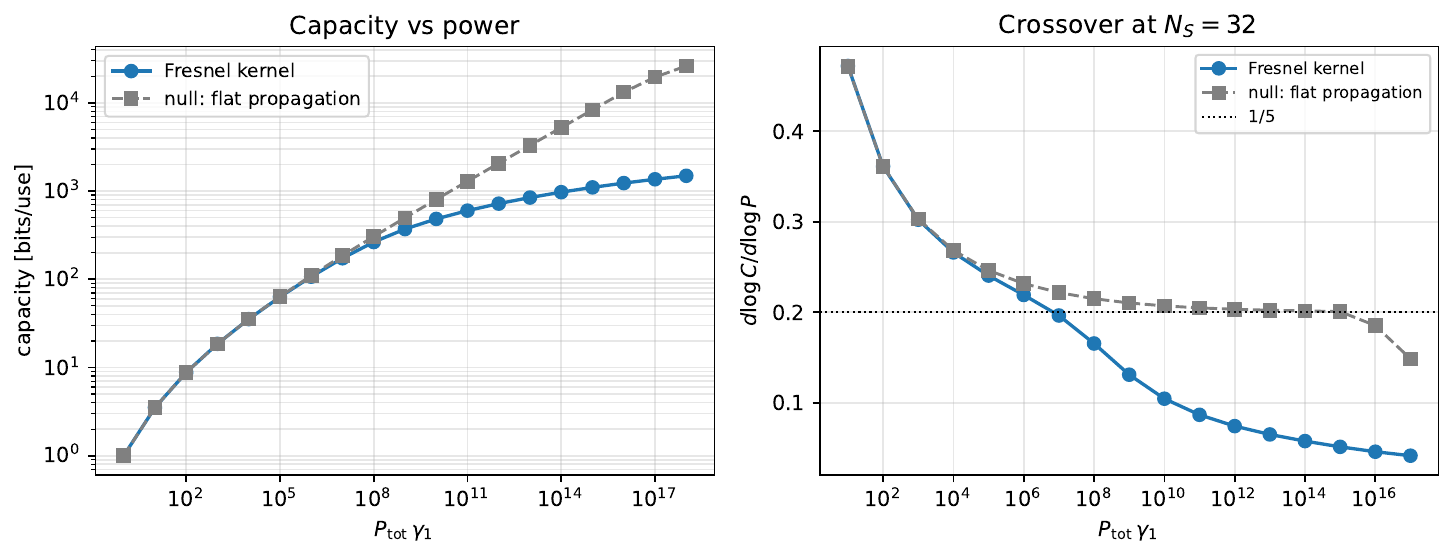}
\caption{Capacity and local exponent against total power on the Fresnel kernel,
with the flat-propagation null run overlaid. The polynomial branch ends where the
active-mode count reaches \(N_{\mathrm S}\).}
\label{fig:diag_crossover}
\end{figure}

\begin{figure}[t]
\centering
\includegraphics[width=\columnwidth]{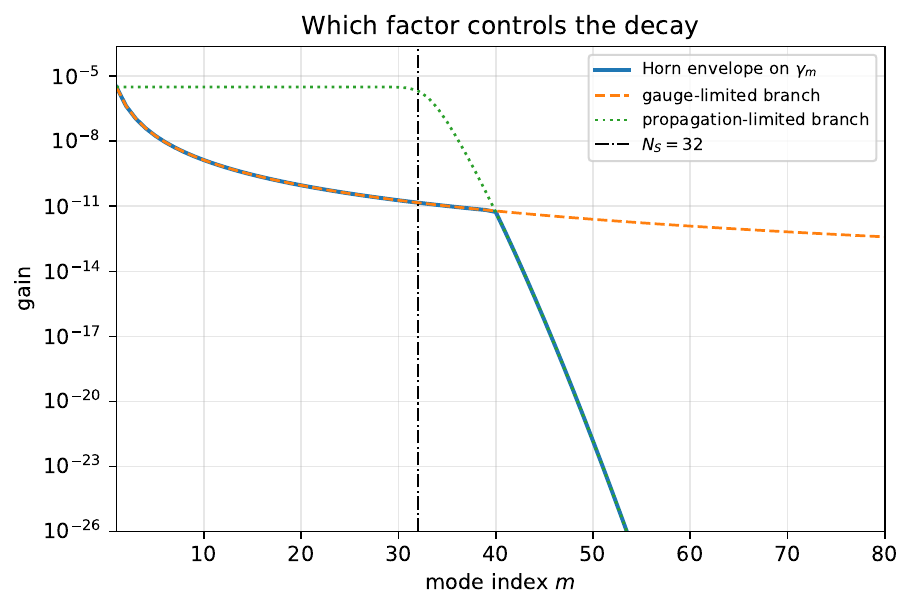}
\caption{The Horn--Weyl envelope of Theorem~\ref{thm:composition} against its two
branches. The gauge factor controls the decay below \(N_{\mathrm S}\); the
propagation factor controls it above.}
\label{fig:diag_composition}
\end{figure}

\subsection{What These Diagnostics Do and Do Not Establish}

They establish that the computed spectrum matches its closed form to
\(\SimSpectrumMaxRelErr\); that the tangent remainder obeys the first-order bound
of Theorem~\ref{thm:linearization} on a physical kernel; that the dual-budget
law of Corollary~\ref{cor:gwf} is solvable to a measured KKT residual of
\(\SimMaxKktResidual\) with both multipliers active, and that its allocation is
not reproducible by any single-budget solution at equal capacity; that the
derivative-noise bound is valid and asymptotically tight; and that the
differentiation-limited branch exists but terminates at the Shannon number.

They do not establish anything about a calibrated device. The propagation model
is scalar and paraxial; polarisation, mutual coupling, phase-shifter resolution
and calibration error are outside this layer. The two-dimensional Hessian-gauge
numerical support of Section~\ref{sec:2d} awaits the port of the same projector-
free spectral construction to the rectangle, where the Saint-Venant range
condition must be imposed on the curvature space rather than absorbed into a
coefficient deletion; no two-dimensional fitted exponents are claimed here in the
interim.

\section{Conclusion}

This paper introduced a curvature-domain signal model for continuous-aperture wireless communication. By quotienting phase profiles by affine piston-and-tilt gauge freedom, the second spatial derivative of phase becomes a complete coordinate for the gauge-fixed phase class. The inverse map from curvature to phase is compact, and this compact synthesis operator induces a modal representation in which phase-realizability weights decay with the fourth-order law associated with inverse second differentiation.

Starting from the exact nonlinear phase-only aperture law, we derived the coherent local tangent channel and showed that its compactness follows from gauge-fixed synthesis. This leads to finite-dimensional and infinite-dimensional capacity formulations under curvature and phase-excursion budgets. The finite-dimensional problem admits a KKT characterization and, in commuting cases, a generalized water-filling law. The infinite-dimensional formulation is expressed through a Fredholm determinant, admits an optimal covariance, and is recovered as the monotone limit of modal truncations.

The receiver-side analysis identifies the main practical bottleneck. Curvature estimation requires two spatial differentiations of phase, so unregularized curvature noise is amplified quartically with sampling resolution. This derivative-noise mechanism limits useful modal growth and leads to a fourth-root high-SNR capacity scaling in the differentiation-limited regime. Regularized curvature reconstruction therefore becomes an essential part of any practical architecture, not merely a numerical convenience.

The deterministic diagnostics of Section~\ref{sec:simulations} measure these mechanisms rather than illustrate them. The computed modal spectrum agrees with the closed form \(\rho_m=(L/\beta_m)^{4}\) to a maximum relative deviation of \(\SimSpectrumMaxRelErr\), and the fitted log--log slope over \(m\in[16,128]\) is \(\SimSlopeRho\) against the closed form's own \(\SimSlopeRhoExact\) on the identical window --- neither is \(-4\), and Remark~\ref{rem:Qspec_decay} explains why neither should be. The tangent remainder on a paraxial Fresnel chain has fitted slope \(\SimSlopeLinearization\), confirming the first-order bound of Theorem~\ref{thm:linearization}. At the stated budget \(P_{\mathrm c}=\SimCapBudget\) the Galerkin truncation converges to \(\SimCapFinal\) bits/use with a last-step change of \(\SimCapLastRelChange\). The dual-budget law is exercised with both multipliers strictly positive, to a measured KKT residual of \(\SimMaxKktResidual\) and duality gap of \(\SimMaxDualGap\). The derivative-noise diagnostic gives fitted exponent \(\SimSlopeNoise\) in the sample count and approaches the analytic bound to \(\SimNoiseBoundRatio\) from below. Finally, the differentiation-limited exponent is \(\SimSlopeRootPerMode\) in per-mode SNR and \(\SimSlopeRootTotalPower\) under a total-power budget, and on the physical kernel it holds at \(\SimCrossoverPlateauExponent\) only until the active-mode count reaches \(N_{\mathrm S}=\SimShannonNumber\), after which it collapses to \(\SimCrossoverSaturatedExponent\).

The framework is deliberately conservative. It does not introduce new electromagnetic physics and does not claim capacity beyond the full aperture-field channel. Its role is to expose a gauge-invariant coordinate system inside ordinary phase-coded aperture propagation and to quantify both its benefits and its derivative-noise costs. Future work should add propagation-specific numerical studies, including calibrated Fresnel and full-wave models, two-dimensional hardware-constrained surfaces, phase wrapping, finite phase resolution, mutual coupling, polarization, and adaptive regularization strategies that jointly optimize curvature excitation and noise-robust reconstruction.

\appendices

\section{Proof of Proposition \ref{prop:quotient} and Theorem \ref{thm:synthesis}}

\subsection*{Proof of Proposition \ref{prop:quotient}}
If \(\phi_1-\phi_2\in \cA\), then \((\phi_1-\phi_2)''=0\), so \(\phi_1''=\phi_2''\); hence the map \([\phi]\mapsto \phi''\) is well defined. Injectivity follows because \(\phi''=0\) implies \(\phi\in \cA\). Surjectivity follows from the construction in the proof below of Theorem \ref{thm:synthesis}, which gives a representative in \(\cV\) for any prescribed curvature \(c\in L^2(\Omega)\). \hfill\(\square\)

\subsection*{Proof of Theorem \ref{thm:synthesis}}
\emph{Injectivity:}
If \(\phi\in\cV\) and \(\phi''=0\), then \(\phi(x)=a+b(x-L/2)\). Since \(\phi\in\cV\),
\[
\inner{\phi}{\psi_0}_{\cX}=aL=0,
\qquad
\inner{\phi}{\psi_1}_{\cX}=b\norm{\psi_1}_{\cX}^2=0,
\]
hence \(a=b=0\), so \(\phi=0\).

\emph{Surjectivity:}
Given \(c\in \cX\), define
\[
\phi_0(x)=\int_0^x (x-t)c(t)\,dt.
\]
Then \(\phi_0\in H^2(\Omega)\) and \(\phi_0''=c\). Set
\[
a=-\frac{1}{L}\inner{\phi_0}{\psi_0}_{\cX},
\qquad
b=-\frac{1}{\norm{\psi_1}_{\cX}^2}\inner{\phi_0}{\psi_1}_{\cX},
\]
and define
\[
\phi=\phi_0+a\psi_0+b\psi_1.
\]
Then \(\phi''=c\) and \(\phi\in\cV\). Thus \(D^2:\cV\to\cX\) is bijective.

\emph{Boundedness and compactness:}
The map \(D^2:\cV\to\cX\) is bounded because it is the second derivative restricted to \(H^2(\Omega)\). Since it is bijective between Banach spaces, \(\cS=(D^2)^{-1}\) is bounded by the open mapping theorem. Further, \(\cS:\cX\to H^2(\Omega)\) is bounded, while the embedding \(H^2(\Omega)\hookrightarrow L^2(\Omega)\) is compact on bounded intervals by Rellich's theorem \cite{Evans}. Hence \(\cS:\cX\to\cX\) is compact. Equation \eqref{eq:synthesis_bound} follows from boundedness. \hfill\(\square\)

\section{Proof of Theorem \ref{thm:linearization} and Corollary \ref{cor:compact}}

Let \(u=\cS c\). By Theorem \ref{thm:synthesis}, \(\cS:\cX\to H^2(\Omega)\) is bounded. Since \(H^2(\Omega)\hookrightarrow L^\infty(\Omega)\) continuously in one dimension, there exists \(C_{\infty}>0\) such that
\[
\norm{u}_{L^\infty}\le C_{\infty}\norm{c}_{\cX}.
\]
Now
\[
s_\epsilon(c)-s_\epsilon(0)
=
\sqrt{\frac{P}{L}}\,e^{j\phi_{\mathrm b}}
\bracks{
e^{j\epsilon u}-1
}.
\]
Hence
\[
\frac{s_\epsilon(c)-s_\epsilon(0)}{\epsilon}
=
j\sqrt{\frac{P}{L}}\,e^{j\phi_{\mathrm b}}u
+
\sqrt{\frac{P}{L}}\,e^{j\phi_{\mathrm b}}
\frac{e^{j\epsilon u}-1-j\epsilon u}{\epsilon}.
\]
For real \(t\), \(\abs{e^{jt}-1-jt}\le t^2/2\). Therefore,
\[
\abs{
\frac{e^{j\epsilon u(x)}-1-j\epsilon u(x)}{\epsilon}
}
\le
\frac{\epsilon}{2}\abs{u(x)}^2.
\]
Taking \(L^2\)-norms and using \(\norm{u^2}_{L^2}\le \norm{u}_{L^\infty}\norm{u}_{L^2}\),
\begin{align*}
&\norm{
\frac{s_\epsilon(c)-s_\epsilon(0)}{\epsilon}
-
j\sqrt{\frac{P}{L}}\,e^{j\phi_{\mathrm b}}u
}_{\cU_t}\\
&\qquad\le
C\,\epsilon\,\norm{u}_{L^\infty}\norm{u}_{L^2}
\le
K_R\,\epsilon\,\norm{c}_{\cX}^2
\end{align*}
for \(\norm{c}_{\cX}\le R\). Applying bounded \(\cP\) and \(\cRop\) yields
\[
\norm{e_\epsilon(c)}_{\cY}\le K_R\,\epsilon\,\norm{c}_{\cX}^2,
\]
which proves \eqref{eq:first_order_chan}--\eqref{eq:remainder} with
\[
\cT=\cRop \cP M_{j\sqrt{P/L}\,e^{j\phi_{\mathrm b}}}\cS.
\]
Since \(\cS:\cX\to\cX\) is compact and multiplication by \(e^{j\phi_{\mathrm b}}\), \(\cP\), and \(\cRop\) are bounded, their composition is compact. If \(\cP\) has square-integrable kernel \(G\), it is Hilbert--Schmidt and hence compact. \hfill\(\square\)

\section{Proof of Proposition \ref{prop:Qspec}, Proposition \ref{prop:matrixchan}, and Theorem \ref{thm:galerkin}}

\subsection*{Proof of Proposition \ref{prop:Qspec}}
Since \(\cS\) is compact, \(\cQ=\cS^\ast\cS\) is compact, self-adjoint, and positive. The spectral theorem for compact self-adjoint operators gives an orthonormal eigenbasis \(\{u_m\}\) for the closure of the range together with eigenvalues \(\rho_m>0\) decreasing to zero. Since \(D^2:\cV\to\cX\) is onto, \(\cS\) is injective and \(\cQ\) has trivial kernel; hence the eigenvectors form an orthonormal basis of \(\cX\). Expanding \(c=\sum_m a_m u_m\) gives \(\norm{c}_{\cX}^2=\sum_m |a_m|^2\). Moreover,
\[
\norm{\cS c}_{\cX}^2
=
\inner{\cS c}{\cS c}_{\cX}
=
\inner{c}{\cQ c}_{\cX}
=
\sum_{m\ge 1}\rho_m |a_m|^2.
\]
This proves the claim. \hfill\(\square\)

\subsection*{Proof of Proposition \ref{prop:matrixchan}}
Substitute \(c_M=\sum_{j=1}^M a_j u_j\) into \(y=\cT c+z\), project against \(v_i\), and use linearity:
\[
[\mathbf y_M]_i
=
\sum_{j=1}^M
\inner{v_i}{\cT u_j}_{\cY}a_j
+
\inner{v_i}{R_M z}_{\cY}.
\]
This yields \eqref{eq:matrixchan}--\eqref{eq:ndef}. The covariance formula follows by definition. Finally, \(\norm{\cS c_M}_{\cX}^2=\sum_{m=1}^M\rho_m |a_m|^2=\mathbf a_M^\HH \mathbf G_M\mathbf a_M\), giving \eqref{eq:GM}. \hfill\(\square\)

\subsection*{Proof of Theorem \ref{thm:galerkin}}
For any \(c\in \cX\),
\[
\norm{\cT c - R_M \cT P_M c}_{\cY}
\le
\norm{\cT - R_M \cT P_M}_{\cL(\cX,\cY)} \norm{c}_{\cX}.
\]
Squaring and taking expectations yields \eqref{eq:truncbound}. It remains to prove \eqref{eq:normconv}. Decompose
\[
\cT-R_M\cT P_M
=
(I-R_M)\cT
+
R_M\cT(I-P_M).
\]
Because \(\cT\) is compact, \(\cT(B_{\cX})\) is relatively compact in \(\cY\), and strong convergence \(R_M\to I\) upgrades to uniform convergence on this compact set:
\[
\norm{(I-R_M)\cT}_{\cL(\cX,\cY)}\to 0.
\]
Similarly, compactness of \(\cT\) and strong convergence \(P_M\to I\) imply
\[
\norm{\cT(I-P_M)}_{\cL(\cX,\cY)}\to 0.
\]
Since \(\norm{R_M}\le 1\), \eqref{eq:normconv} follows. \hfill\(\square\)

\section{Proof of Theorem \ref{thm:capacity}, Proposition \ref{prop:KKT}, Corollary \ref{cor:gwf}, and Proposition \ref{prop:capub}}

Whiten the channel:
\[
\widetilde{\mathbf y}_M
=
\mathbf K_{n,M}^{-1/2}\mathbf y_M
=
\widetilde{\mathbf H}_M \mathbf a_M + \widetilde{\mathbf n}_M,
\qquad
\widetilde{\mathbf H}_M=\mathbf K_{n,M}^{-1/2}\mathbf H_M,
\]
with \(\widetilde{\mathbf n}_M\sim \CN(\mathbf 0,\I)\). For fixed covariance \(\mathbf K_{a,M}\), Gaussian input maximizes differential entropy \cite{CoverThomas,Telatar}, which yields
\[
I(\mathbf a_M;\widetilde{\mathbf y}_M)
=
\log_2\det\!\paren{
\I + \widetilde{\mathbf H}_M \mathbf K_{a,M}\widetilde{\mathbf H}_M^\HH
}.
\]
The feasible set is convex, closed, bounded, and finite-dimensional, hence compact; the objective is continuous and concave. This proves Theorem \ref{thm:capacity}.

For Proposition \ref{prop:KKT}, write the Lagrangian
\begin{align*}
&\mathcal L(\mathbf K,\mu,\nu,\mathbf Z)
=
\log_2\det\!\Bigl(
\I+\mathbf K_{n,M}^{-1/2}\mathbf H_M \mathbf K\\
&\qquad\times\mathbf H_M^\HH \mathbf K_{n,M}^{-1/2}
\Bigr)
-\mu(\Tr \mathbf K-P_{\mathrm c})\\
&\qquad-\nu(\Tr(\mathbf G_M \mathbf K)-P_{\phi})
+\Tr(\mathbf Z \mathbf K),
\end{align*}
with \(\mathbf Z\succeq 0\). Differentiating gives
\begin{align*}
\frac{\partial \mathcal L}{\partial \mathbf K}
&=
\frac{1}{\ln 2}
\mathbf H_M^\HH
\paren{
\mathbf K_{n,M}+\mathbf H_M\mathbf K\mathbf H_M^\HH
}^{-1}
\mathbf H_M\\
&\quad-\mu\I-\nu\mathbf G_M+\mathbf Z.
\end{align*}
Setting this to zero at the optimum yields \eqref{eq:KKTineq}--\eqref{eq:KKTcomp}.

For Corollary \ref{cor:gwf}, joint diagonalize \(\mathbf B_M\) and \(\mathbf G_M\). The problem becomes
\[
\max_{p_m\ge 0}\;
\sum_{m=1}^{M}\log_2(1+\gamma_m p_m)
\]
subject to
\[
\sum_{m=1}^{M}p_m\le P_{\mathrm c},
\qquad
\sum_{m=1}^{M}\rho_m p_m\le P_\phi.
\]
Scalar KKT conditions yield
\[
\frac{\gamma_m}{(1+\gamma_m p_m)\ln 2}\le \mu+\nu\rho_m,
\]
with equality on active modes, which gives \eqref{eq:gwf}.

For Proposition \ref{prop:capub}, feasibility implies
\[
\rho_M\Tr(\mathbf K_{a,M})
\le
\Tr(\mathbf G_M\mathbf K_{a,M})
\le
P_\phi,
\]
and therefore
\[
\Tr(\mathbf K_{a,M})\le P_{\mathrm{eff},M}.
\]
Whitening gives
\[
C_M^\star
=
\max_{\mathbf K\succeq 0}
\log_2\det\!\paren{\I+\widetilde{\mathbf H}_M \mathbf K \widetilde{\mathbf H}_M^\HH}
\]
with \(\Tr(\mathbf K)\le P_{\mathrm{eff},M}\). Since \(\rank(\widetilde{\mathbf H}_M)\le r_M\), Jensen's inequality gives
\begin{align*}
&\log_2\det(\I+\widetilde{\mathbf H}_M \mathbf K \widetilde{\mathbf H}_M^\HH)\\
&\qquad\le
r_M \log_2\!\paren{
1+\frac{1}{r_M}\Tr(\widetilde{\mathbf H}_M \mathbf K \widetilde{\mathbf H}_M^\HH)
}.
\end{align*}
Now
\begin{align*}
\Tr(\widetilde{\mathbf H}_M \mathbf K \widetilde{\mathbf H}_M^\HH)
&=
\Tr(\mathbf B_M \mathbf K)
\le
\lambda_{\max}(\mathbf B_M)\Tr(\mathbf K)\\
&\le
P_{\mathrm{eff},M}\Tr(\mathbf B_M).
\end{align*}
Combining the inequalities proves \eqref{eq:capub}. \hfill\(\square\)

\section{Proof of Theorems \ref{thm:fredholm}, \ref{thm:attainment_kkt}, \ref{thm:infwf}, and \ref{thm:CMtoCinfty}}

\subsection*{Proof of Theorem \ref{thm:fredholm}}
If \(\cK\in\mathfrak K\), then \(\cK\) is positive trace-class. Since \(\tilde{\cT}\) is bounded,
\[
\tilde{\cT}\cK\tilde{\cT}^\ast
\]
is positive trace-class, so \(\detF(I+\tilde{\cT}\cK\tilde{\cT}^\ast)\) is well defined \cite{Simon}. Also,
\[
\Tr(\tilde{\cT}\cK\tilde{\cT}^\ast)
=
\Tr(\tilde{\cT}^\ast\tilde{\cT}\cK)
\le
\norm{\tilde{\cT}}^2\Tr(\cK)
\le
\norm{\tilde{\cT}}^2 P_{\mathrm c}.
\]
Since \(\log_2\detF(I+A)\le \Tr(A)/\ln 2\) for \(A\succeq 0\), the capacity is finite.

Let
\[
\cK_N=\sum_{m=1}^{N}\lambda_m \psi_m\otimes \psi_m
\]
be the spectral truncation of \(\cK\). Then \(\cK_N\to \cK\) in trace norm, hence
\[
\tilde{\cT}\cK_N\tilde{\cT}^\ast \to \tilde{\cT}\cK\tilde{\cT}^\ast
\qquad
\text{in trace norm.}
\]
For finite \(N\), the input is \(N\)-dimensional Gaussian and its mutual information is
\[
\log_2\det\!\paren{I+\tilde{\cT}\cK_N\tilde{\cT}^\ast}.
\]
Passing to the trace-norm limit yields
\[
I(c;\tilde y)=\log_2\detF\!\paren{I+\tilde{\cT}\cK\tilde{\cT}^\ast}.
\]
Taking the supremum over \(\mathfrak K\) proves \eqref{eq:fredholm_capacity}. \hfill\(\square\)

\subsection*{Proof of Theorem \ref{thm:attainment_kkt}}
The positive trace-class ball \(\{\cK\succeq 0:\Tr(\cK)\le P_{\mathrm c}\}\) is compact in the weak-\(\ast\) topology induced by identifying trace-class operators with the dual of compact operators \cite{Conway,Simon}. Let \(\{\cK_n\}\subset\mathfrak K\) be a maximizing sequence. Passing to a subnet if necessary, \(\cK_n\) converges weak-\(\ast\) to some positive trace-class \(\cK^\star\).

For any finite-rank orthogonal projector \(E_J\),
\[
\Tr(E_J\cK^\star)=\lim_n \Tr(E_J\cK_n)\le P_{\mathrm c}.
\]
Taking the supremum over \(J\) gives \(\Tr(\cK^\star)\le P_{\mathrm c}\). Since \(\cQ\) is compact,
\[
\Tr(\cQ\cK^\star)=\lim_n \Tr(\cQ\cK_n)\le P_\phi.
\]
Thus \(\cK^\star\in\mathfrak K\).

It remains to show continuity of the objective along this weak-\(\ast\) convergence. Since \(\tilde{\cT}\) is compact, choose finite-rank \(\tilde{\cT}_J\) with
\[
\norm{\tilde{\cT}-\tilde{\cT}_J}\to 0.
\]
For fixed \(J\), the map
\[
\cK\mapsto
\log_2\det\!\paren{
I+\tilde{\cT}_J\cK\tilde{\cT}_J^\ast
}
\]
depends only on finitely many weak-\(\ast\)-continuous matrix entries and is therefore weak-\(\ast\) continuous. Moreover, uniformly over \(\Tr(\cK)\le P_{\mathrm c}\),
\begin{align*}
&
\norm{
\tilde{\cT}\cK\tilde{\cT}^{\ast}
-
\tilde{\cT}_J\cK\tilde{\cT}_J^{\ast}
}_1
\\
&\quad\le
\norm{\tilde{\cT}-\tilde{\cT}_J}\,P_{\mathrm c}\,\norm{\tilde{\cT}}
+
\norm{\tilde{\cT}_J}\,P_{\mathrm c}\,\norm{\tilde{\cT}-\tilde{\cT}_J}.
\end{align*}
The Fredholm log-determinant is Lipschitz on positive trace-class perturbations in the sense that
\[
\abs{\log_2\detF(I+A)-\log_2\detF(I+B)}
\le
\frac{\norm{A-B}_1}{\ln 2}
\]
for \(A,B\succeq 0\). Hence the finite-rank objectives converge uniformly to the full objective. The full objective is therefore weak-\(\ast\) continuous on the feasible set, and \(\cK^\star\) attains the supremum.

For the KKT system, the objective is concave and Fr\'echet differentiable on the positive trace-class cone, with derivative at \(\cK^\star\)
\[
D F(\cK^\star)
=
\frac{1}{\ln 2}\,
\tilde{\cT}^{\ast}
\paren{
I+\tilde{\cT}\cK^\star \tilde{\cT}^{\ast}
}^{-1}
\tilde{\cT}.
\]
The feasible set has affine trace constraints and the positive cone constraint. Standard convex KKT conditions yield multipliers \(\mu,\nu\ge 0\) and a positive normal-cone operator such that \eqref{eq:opKKTineq}--\eqref{eq:opKKTcomp} hold. Complementary slackness gives \eqref{eq:opslack1}--\eqref{eq:opslack2}. \hfill\(\square\)

\subsection*{Proof of Theorem \ref{thm:infwf}}
Because \(\cB\) and \(\cQ\) are commuting compact self-adjoint operators, they admit a common orthonormal eigenbasis \(\{u_m\}\). Let \(\widehat{\cK}\) be the matrix of \(\cK\) in this basis, with diagonal entries \(k_{mm}\). Then
\[
\Tr(\cK)=\sum_{m\ge 1}k_{mm},
\qquad
\Tr(\cQ\cK)=\sum_{m\ge 1}\rho_m k_{mm}.
\]
Moreover,
\[
I+\tilde{\cT}\cK\tilde{\cT}^\ast
\sim
I+\cB^{1/2}\cK\cB^{1/2},
\]
where the nonzero eigenvalues are the same. In the common eigenbasis, finite-dimensional Hadamard inequalities applied to principal truncations and then passed to the monotone trace-class limit give
\[
\detF(I+\cB^{1/2}\cK\cB^{1/2})
\le
\prod_{m\ge 1}(1+\beta_m k_{mm}),
\]
with equality when \(\widehat{\cK}\) is diagonal. Since the constraints depend only on the diagonal entries, an optimal covariance can be chosen diagonal:
\[
\cK=\sum_{m\ge 1} p_m\, u_m\otimes u_m.
\]
The optimization then reduces to \eqref{eq:Cinf_diag}--\eqref{eq:p_phase}. Scalar KKT conditions yield \eqref{eq:inf_gwf}. \hfill\(\square\)

\subsection*{Proof of Theorem \ref{thm:CMtoCinfty}}
Since \(\mathfrak K_M\subset \mathfrak K_{M+1}\subset \mathfrak K\), the sequence \(C_M\) is monotone increasing and bounded above by \(C_\infty\). Hence \(\lim_{M\to\infty} C_M\) exists and is at most \(C_\infty\).

Fix \(\varepsilon>0\). Choose \(\cK_\varepsilon\in\mathfrak K\) such that
\[
\log_2\detF\!\paren{I+\tilde{\cT}\cK_\varepsilon\tilde{\cT}^\ast}
\ge
C_\infty-\varepsilon.
\]
Define
\[
\cK_{\varepsilon,M}=P_M \cK_\varepsilon P_M.
\]
Because \(P_M\to I\) strongly and \(\cK_\varepsilon\) is trace-class,
\[
\cK_{\varepsilon,M}\to \cK_\varepsilon
\qquad
\text{in trace norm.}
\]
Also,
\[
\Tr(\cK_{\varepsilon,M})\le \Tr(\cK_\varepsilon)\le P_{\mathrm c}.
\]
Since \(P_M\) is the spectral projector of \(\cQ\), it commutes with \(\cQ\), and
\[
\Tr(\cQ \cK_{\varepsilon,M})
=
\Tr(P_M \cQ^{1/2}\cK_\varepsilon \cQ^{1/2} P_M)
\le
\Tr(\cQ\cK_\varepsilon)
\le P_\phi.
\]
Hence \(\cK_{\varepsilon,M}\in\mathfrak K_M\). By bounded sandwiching,
\[
\tilde{\cT}\cK_{\varepsilon,M}\tilde{\cT}^\ast
\to
\tilde{\cT}\cK_\varepsilon\tilde{\cT}^\ast
\qquad
\text{in trace norm,}
\]
and therefore
\[
\log_2\detF\!\paren{I+\tilde{\cT}\cK_{\varepsilon,M}\tilde{\cT}^\ast}
\to
\log_2\detF\!\paren{I+\tilde{\cT}\cK_\varepsilon\tilde{\cT}^\ast}.
\]
So for sufficiently large \(M\),
\[
C_M
\ge
\log_2\detF\!\paren{I+\tilde{\cT}\cK_{\varepsilon,M}\tilde{\cT}^\ast}
\ge
C_\infty-2\varepsilon.
\]
Since \(C_M\le C_\infty\), we conclude \(C_M\to C_\infty\). \hfill\(\square\)

\section{Proof of Theorem \ref{thm:curvnoise} and Proposition \ref{prop:bv}}

\subsection*{Proof of Theorem \ref{thm:curvnoise}}
From \eqref{eq:chat},
\[
\hat{\mathbf c}
=
\mathbf D_2 \bm{\phi} + \mathbf D_2 \mathbf w_\phi.
\]
Hence \(\E[\hat{\mathbf c}]=\mathbf D_2\bm{\phi}\), so the estimator is unbiased for the discrete second difference. Its covariance is
\[
\mathbf K_{\mathrm c}
=
\E[(\mathbf D_2\mathbf w_\phi)(\mathbf D_2\mathbf w_\phi)^\TT]
=
\sigma_\phi^2 \mathbf D_2\mathbf D_2^\TT.
\]
Each row of \(\mathbf D_2\) is \(\Delta x^{-2}[1,-2,1]\) shifted by one position. Therefore,
\[
[\mathbf K_{\mathrm c}]_{ii}
=
\frac{\sigma_\phi^2}{\Delta x^4}(1^2+(-2)^2+1^2)
=
\frac{6\sigma_\phi^2}{\Delta x^4},
\]
\[
[\mathbf K_{\mathrm c}]_{i,i\pm 1}
=
\frac{\sigma_\phi^2}{\Delta x^4}\big((-2)\cdot 1 + 1\cdot (-2)\big)
=
-\frac{4\sigma_\phi^2}{\Delta x^4},
\]
and
\[
[\mathbf K_{\mathrm c}]_{i,i\pm 2}
=
\frac{\sigma_\phi^2}{\Delta x^4}.
\]
All other overlaps vanish. Finally,
\[
\norm{\mathbf D_2}_2
\le
\frac{\norm{[1,-2,1]}_1}{\Delta x^2}
=
\frac{4}{\Delta x^2},
\]
so
\[
\norm{\mathbf K_{\mathrm c}}_2
=
\sigma_\phi^2 \norm{\mathbf D_2\mathbf D_2^\TT}_2
=
\sigma_\phi^2 \norm{\mathbf D_2}_2^2
\le
\frac{16\sigma_\phi^2}{\Delta x^4}.
\]
\hfill\(\square\)

\subsection*{Proof of Proposition \ref{prop:bv}}
The normal equations of \eqref{eq:regest} yield
\[
\paren{
\mathbf S_M^\TT \mathbf S_M
+
\lambda \sigma_\phi^2 \mathbf L^\TT \mathbf L
}
\hat{\mathbf a}_\lambda
=
\mathbf S_M^\TT \widetilde{\bm{\phi}},
\]
hence \eqref{eq:regsol}. If \(\widetilde{\bm{\phi}}=\mathbf S_M \mathbf a + \mathbf w\), then
\[
\hat{\mathbf a}_\lambda
=
\mathbf R_\lambda \mathbf S_M^\TT \mathbf S_M \mathbf a
+
\mathbf R_\lambda \mathbf S_M^\TT \mathbf w.
\]
Since
\[
\mathbf R_\lambda \mathbf S_M^\TT \mathbf S_M
=
\I-\lambda \sigma_\phi^2 \mathbf R_\lambda \mathbf L^\TT \mathbf L,
\]
the bias is \eqref{eq:bias}. The covariance is
\[
\Cov(\hat{\mathbf a}_\lambda)
=
\mathbf R_\lambda \mathbf S_M^\TT \Cov(\mathbf w)\mathbf S_M \mathbf R_\lambda
=
\sigma_\phi^2
\mathbf R_\lambda
\mathbf S_M^\TT \mathbf S_M
\mathbf R_\lambda,
\]
which proves \eqref{eq:covreg}. Summing squared bias and variance gives \eqref{eq:mse}. \hfill\(\square\)

\section{Proof of Proposition \ref{prop:pep}}

Whiten the channel:
\[
\widetilde{\mathbf y}
=
\mathbf K_{n,M}^{-1/2}\mathbf y_M
=
\widetilde{\mathbf H}_M \mathbf a + \widetilde{\mathbf n},
\qquad
\widetilde{\mathbf n}\sim \CN(\mathbf 0,\I).
\]
Let
\[
\bm{\delta}
=
\widetilde{\mathbf H}_M(\mathbf a-\mathbf b).
\]
An ML error \(\mathbf a\to \mathbf b\) occurs when
\[
\norm{\widetilde{\mathbf y}-\widetilde{\mathbf H}_M\mathbf b}_2^2
\le
\norm{\widetilde{\mathbf y}-\widetilde{\mathbf H}_M\mathbf a}_2^2,
\]
which is equivalent to
\[
\RePart\{\bm{\delta}^\HH \widetilde{\mathbf n}\}
\ge
\frac{\norm{\bm{\delta}}_2^2}{2}.
\]
Since \(\widetilde{\mathbf n}\sim \CN(\mathbf 0,\I)\), the scalar
\[
\RePart\{\bm{\delta}^\HH \widetilde{\mathbf n}\}
\sim \mathcal N\!\paren{0,\frac{\norm{\bm{\delta}}_2^2}{2}}.
\]
Hence
\[
P(\mathbf a\to \mathbf b\mid \mathbf H_M)
=
Q\!\paren{\frac{\norm{\bm{\delta}}_2}{\sqrt{2}}}.
\]
Finally,
\[
\norm{\bm{\delta}}_2^2
=
(\mathbf a-\mathbf b)^\HH
\mathbf H_M^\HH \mathbf K_{n,M}^{-1}\mathbf H_M
(\mathbf a-\mathbf b)
=
d_M^2(\mathbf a,\mathbf b),
\]
which proves \eqref{eq:pep} and the exponential bound \eqref{eq:pepbound}. \hfill\(\square\)

\section{Proof of Proposition \ref{prop:useful}, Theorem \ref{thm:rootlaw}, and Corollary \ref{cor:rootlaw_sharp}}

\subsection*{Proof of Proposition \ref{prop:useful}}
The condition \(\mathrm{snr}_m\ge \Gamma\) is equivalent to
\[
\frac{\alpha p}{\sigma_0^2+\beta \lambda_m^2}\ge \Gamma
\iff
\beta \lambda_m^2 \le \frac{\alpha p}{\Gamma}-\sigma_0^2.
\]
Since \(\lambda_m=(m\pi/L)^2\),
\[
\beta \paren{\frac{m\pi}{L}}^4
\le
\frac{\alpha p}{\Gamma}-\sigma_0^2.
\]
Solving for \(m\) gives \eqref{eq:muse}, with the positive-floor convention accounting for the case \(\alpha p/\Gamma-\sigma_0^2\le 0\). \hfill\(\square\)

\subsection*{Proof of Theorem \ref{thm:rootlaw}}
Define
\[
f_\Gamma(x)=\log_2\!\paren{1+\frac{\Gamma}{1+\beta x^4}}, \qquad x\ge 0.
\]
This function is positive and decreasing.

\emph{Lower bound:}
For sufficiently large \(\Gamma\), let
\[
m_0 = \left\lfloor \paren{\frac{\Gamma}{2\beta}}^{1/4}\right\rfloor.
\]
For \(1\le m\le m_0\),
\[
1+\beta m^4 \le 1+\frac{\Gamma}{2},
\]
so
\[
\frac{\Gamma}{1+\beta m^4}
\ge
\frac{\Gamma}{1+\Gamma/2}
\ge \frac{2}{3},
\qquad \Gamma\ge 1.
\]
Therefore,
\[
C(\Gamma)
\ge
m_0 \log_2\!\paren{1+\frac{2}{3}}
\ge
c_1'(\beta)\Gamma^{1/4}
\]
for all sufficiently large \(\Gamma\). On the compact interval of remaining \(\Gamma\)-values in \([1,\Gamma_0]\), the continuous positive function \(C(\Gamma)/\Gamma^{1/4}\) has a positive minimum. Combining the two constants gives a global \(c_1(\beta)>0\).

\emph{Upper bound:}
Since \(f_\Gamma\) is decreasing,
\[
C(\Gamma)
\le
f_\Gamma(0) + \int_0^\infty f_\Gamma(x)\,dx.
\]
Substitute \(x=\Gamma^{1/4}t\):
\begin{align*}
\int_0^\infty f_\Gamma(x)\,dx
&=
\Gamma^{1/4}
\int_0^\infty
\log_2\!\paren{
1+\frac{\Gamma}{1+\beta \Gamma t^4}
}\,dt \\
&\le
\Gamma^{1/4}
\int_0^\infty
\log_2\!\paren{
1+\frac{1}{\beta t^4}
}\,dt.
\end{align*}
The integral is finite because the integrand is logarithmic near zero and behaves as \(t^{-4}\) at infinity. Also \(f_\Gamma(0)=\log_2(1+\Gamma)\le C\Gamma^{1/4}\) for \(\Gamma\ge 1\). Hence
\[
C(\Gamma)\le c_2(\beta)\Gamma^{1/4}
\]
for some \(c_2(\beta)<\infty\). This proves \eqref{eq:rootlaw} and \eqref{eq:theta}. \hfill\(\square\)

\subsection*{Proof of Corollary \ref{cor:rootlaw_sharp}}
Retain \(f_\Gamma(x)=\log_2(1+\Gamma/(1+\beta x^{4}))\) from the previous
proof; it is positive and strictly decreasing on \([0,\infty)\). For such a
function the integral test gives
\[
\int_0^{\infty} f_\Gamma(x)\,dx-f_\Gamma(0)
\;\le\;
\sum_{m\ge 1} f_\Gamma(m)
\;\le\;
\int_0^{\infty} f_\Gamma(x)\,dx ,
\]
so that \(\abs{C(\Gamma)-\int_0^\infty f_\Gamma}\le f_\Gamma(0)=\log_2(1+\Gamma)\).
Substituting \(x=(\Gamma/\beta)^{1/4}u\),
\[
\int_0^{\infty} f_\Gamma(x)\,dx
=
\paren{\frac{\Gamma}{\beta}}^{1/4}
\int_0^{\infty}\log_2\!\paren{1+\frac{\Gamma}{1+\Gamma u^{4}}}\,du .
\]
For every \(u>0\) the quantity \(\Gamma/(1+\Gamma u^{4})\) increases in
\(\Gamma\) and is bounded above by \(u^{-4}\), with limit \(u^{-4}\). The
integrand therefore increases pointwise to \(\log_2(1+u^{-4})\) and is dominated
by it, so by monotone convergence
\[
\int_0^{\infty}\log_2\!\paren{1+\frac{\Gamma}{1+\Gamma u^{4}}}du
\;\longrightarrow\;
\frac{1}{\ln 2}\int_0^{\infty}\ln\!\paren{1+u^{-4}}du .
\]
The remaining integral is classical and elementary. Integrating by parts with
\(u=\ln(1+x^{-n})\) and \(dv=dx\), the boundary term \(x\ln(1+x^{-n})\)
vanishes at both ends for \(n>1\), leaving
\[
\int_0^{\infty}\ln\!\paren{1+x^{-n}}dx
= n\int_0^{\infty}\frac{dx}{1+x^{n}}
= \frac{\pi}{\sin(\pi/n)} ,
\]
using the standard value \(\int_0^\infty (1+x^{n})^{-1}dx=(\pi/n)/\sin(\pi/n)\).
At \(n=4\) this is \(\pi/\sin(\pi/4)=\pi\sqrt2\). Since
\(\log_2(1+\Gamma)=o(\Gamma^{1/4})\), dividing by \((\Gamma/\beta)^{1/4}\)
yields \eqref{eq:rootlaw_sharp}. \hfill\(\square\)

\section{Proof of Theorem \ref{thm:2d_synthesis}}

If \(\nabla^2\phi=0\) in distributions on the connected domain \(\Omega_2\), then \(\phi\) is affine:
\[
\phi(x)=a+b_1x_1+b_2x_2.
\]
If also \(\phi\in\cV_2\), orthogonality to \(\psi_0,\psi_1,\psi_2\) forces \(a=b_1=b_2=0\), after re-centering by \(\bar x_1,\bar x_2\). Thus \(\nabla^2:\cV_2\to\cC_2\) is injective. It is surjective by the definition \(\cC_2=\Range(\nabla^2|_{\cV_2})\).

Boundedness of \(\nabla^2:\cV_2\to L^2(\Omega_2;\mathrm{Sym}(2))\) follows directly from the definition of the \(H^2\)-norm. Conversely, the second-order Poincar\'e--Bramble--Hilbert inequality on bounded Lipschitz domains gives a constant \(C_{\Omega_2}\) such that
\[
\norm{\phi}_{H^2(\Omega_2)}
\le
C_{\Omega_2}\norm{\nabla^2\phi}_{L^2(\Omega_2)}
\qquad
\forall \phi\in\cV_2.
\]
Therefore the inverse \(\cS_2=(\nabla^2)^{-1}:\cC_2\to\cV_2\) is bounded. Finally, \(\cS_2:\cC_2\to H^2(\Omega_2)\) is bounded and the embedding \(H^2(\Omega_2)\hookrightarrow L^2(\Omega_2)\) is compact by Rellich's theorem. Hence \(\cS_2:\cC_2\to L^2(\Omega_2)\) is compact. The affine-invariance statement follows because affine functions have zero Hessian, and equality of Hessians implies the difference has zero Hessian and is therefore affine. \hfill\(\square\)

\balance

\end{document}